\newcommand{\wh}{\widehat}
\begin{document}

\sloppy

\title{Strategy Improvement for \\
Concurrent Reachability and Safety 
Games
\thanks{Corresponding Author: Krishnendu Chatterjee, 
email: krish.chat@ist.ac.at, 
Address: Am Campus 1, IST Austria, Klosterneuburg, A3400, Austria.
Telephone number: +43-2243-9000-3201, Fax Number: +43-2243-9000-2000.}
$\ ^,$\thanks{This paper is 
an improved version of the combined results that appeared 
in~\cite{CdAH06,CdAH09}: this paper is a joint paper that 
combines the results of~\cite{CdAH06,CdAH09}, and presents detailed 
proofs of all the results.}
$\ ^,$\thanks{{\bf There is a serious and irreparable error in Theorem~4.3 
of~\cite{CdAH09} regarding the convergence property of the improvement algorithm for safety games.} 
This is illustrated in Example~\ref{ex:counter-soda}.
In the present version we prove all the required properties for a modified 
algorithm (Theorem~8).
We thank anonymous reviewers for many insightful comments that helped us immensely, 
and warmly acknowledge their help.} 
}

\author{Krishnendu Chatterjee$^\dag$ \qquad
  Luca de Alfaro$^{\S}$ \qquad
  Thomas A. Henzinger$^{\dag\ddag}$\\[5pt]
\normalsize
  $\strut^\dag$ IST Austria (Institute of Science and Technology Austria) \\
\normalsize  $\strut^\S$ CE, University of California, Santa Cruz,USA\\
\normalsize
  $\strut^\ddag$ Computer and Communication Sciences, EPFL, Switzerland\\
\normalsize
  \texttt{$\{$krish.chat,tah$\}$@ist.ac.at, luca@soe.ucsc.edu}
}

\date{ %\today
%%\\ {\sc draft: not for distribution}
}

\maketitle

\begin{abstract}
We consider concurrent games played on graphs.  
At every round of a game, each player simultaneously and independently 
selects a move;
the moves jointly determine the transition to a successor state. 
Two basic objectives are the safety objective to stay forever in a given 
set of states, and its dual, the reachability objective to reach a given 
set of states.
First, we present a simple proof of the fact that in concurrent reachability 
games, for all $\vare>0$, memoryless $\vare$-optimal strategies exist.  
A memoryless strategy is independent of the history of plays, and an 
$\vare$-optimal strategy achieves the objective with probability within 
$\vare$ of the value of the game.  
In contrast to previous proofs of this fact, 
our proof is more elementary and more combinatorial.  
Second, we present a strategy-improvement (a.k.a.\ policy-iteration) algorithm 
for concurrent games with reachability objectives. 
We then present a strategy-improvement algorithm for concurrent games with 
safety objectives. 
Our algorithms yield sequences of player-1 strategies which ensure
probabilities of winning that converge monotonically to the value of the game. 
Our result is significant because the strategy-improvement algorithm 
for safety games provides, for the first time, a way to approximate the 
value of a concurrent safety game {\em from below}.
Previous methods could approximate the values of these games only from one 
direction, and as no rates of convergence are known, they did not provide a
practical way to solve these games.
\end{abstract}

\noindent{\em Keywords.} Concurrent games; Reachability and safety objectives; 
Strategy improvement algorithms.

\section{Introduction}

We consider games played between two players on graphs. 
At every round of the game, each of the two players selects a move; 
the moves of the players then determine the transition to the successor
state. 
A play of the game gives rise to a path in the graph. 
We consider the two basic objectives for the players: 
{\em reachability} and {\em safety}. 
The reachability goal asks player~1 to reach a given set of target states 
or, if randomization is needed to play the game, to maximize the probability 
of reaching the target set. 
The safety goal asks player~2 to ensure that a given set of safe states
is never left or, if randomization is required, to minimize the probability 
of leaving the target set.
The two objectives are dual, and the games are determined: 
the supremum probability with which player~1 can reach the target set is equal 
to one minus the supremum probability with which player~2 can confine the game 
to the complement of the target set~\cite{Eve57}.

These games on graphs can be divided into two classes: 
{\em turn-based\/} and {\em concurrent}.
In turn-based games, only one player has a choice of moves at each
state; 
in concurrent games, at each state both players choose a move,
simultaneously and independently, from a set of available moves. 
For turn-based games, the solution of games with reachability and safety 
objectives has long been known. 
If each move determines a unique successor state, then the games are 
P-complete and can be solved in linear time in the size of the game graph.
If, more generally, each move determines a probability distribution on 
possible successor states, then the problem of deciding whether a 
turn-based game can 
be won with probability greater than a given threshold $p\in [0,1]$ is in 
NP $\cap$ co-NP \cite{Con92}, and the exact value of the game can be 
computed by a strategy-improvement algorithm~\cite{Con93}, which works well 
in practice. 
These results all depend on the fact that in turn-based reachability and
safety games, both players have optimal deterministic 
(i.e., no randomization is required), memoryless strategies.  
These strategies are functions from states to moves, so they are finite in 
number, and this guarantees the termination of the strategy-improvement 
algorithm. 

The situation is very different for concurrent games.
%%%%each move determines a unique successor state.
The player-1 {\em value\/} of the game is defined, as usual, as the sup-inf 
value: 
the supremum, over all strategies of player~1, of the infimum, over all 
strategies of player~2, of the probability of achieving the reachability 
or safety goal. 
In concurrent reachability games, player~1 is guaranteed only the existence 
of $\varepsilon$-optimal strategies, which ensure that the value of the game 
is achieved within a specified tolerance $\varepsilon>0$ \cite{Eve57}.
Moreover, while these strategies (which depend on $\varepsilon$) are 
memoryless, in general they require randomization~\cite{Eve57}  
(even in the special case in which the transition function is 
deterministic).
For player~2 (the safety player), {\em optimal\/} memoryless strategies 
exist~\cite{TParth71}, which again require randomization (even when the 
transition function is deterministic).
All of these strategies are functions from states to probability 
distributions on moves. 
The question of deciding whether a concurrent game can be won with 
probability greater than $p$ is in PSPACE;
this is shown by reduction to the theory of the real-closed fields 
\cite{EY06}. 

To summarize:
while strategy-improvement algorithms are available for turn-based 
reachability and safety games~\cite{Con93}, 
%%%and the complexity is known for the concurrent case, 
so far no strategy-improvement algorithms or even approximation schemes 
were known for concurrent games.
If one wanted to compute the value of a concurrent game within a specified 
tolerance $\varepsilon>0$, one was reduced to using a binary search 
algorithm that approximates the value by iterating queries in the theory of 
the real-closed fields.
Value-iteration schemes were known for such games, but they can be used to 
approximate the value from one direction only, for reachability goals from 
below, and for safety goals from above~\cite{dAM04}.
The value-iteration schemes are not guaranteed to terminate.
Worse, since no convergence rates are known for these schemes, they 
provide no termination criteria for approximating a value 
within~$\varepsilon$.

\medskip\noindent{\bf Our results for concurrent reachability games.}
Concurrent reachability games belong to the family of stochastic games
\cite{Sha53,Eve57}, and they have been studied more
specifically in \cite{crg-tcs07,dAH00,dAM04}. 
Our contributions for concurrent reachability games are two-fold.
First, we present a simple and combinatorial proof of the
existence of memoryless $\vare$-optimal strategies for concurrent 
games with reachability objectives, for all $\vare>0$.
Second, using the proof techniques we developed for proving existence 
of memoryless $\vare$-optimal strategies, for $\vare>0$, we obtain a 
strategy-improvement (a.k.a.\ policy-iteration) algorithm for concurrent 
reachability games.
Unlike in the special case of turn-based games %%, where at every 
%%state at most one player can choose between multiple moves, 
the algorithm need not terminate in finitely many iterations.
%%%Strategy improvement algorithms were previously known for 
%%%turn-based games with reachability objectives~\cite{Con93},
%%%and turn-based games with more complex objectives~\cite{VJ00,CH06b}. 

It has long been known that optimal strategies need not exist for
concurrent reachability games, 
and for all $\vare > 0$, there exist $\vare$-optimal
strategies that are memoryless~\cite{Eve57}. 
%%i.e., strategies that always choose a
%%probability distribution over moves that depends only on the current
%%state, and not on the past history of the play \cite{FV97}.
%%Unfortunately, the only previous proof of the latter fact 
%%is rather complex. 
A proof of this fact can be obtained by considering limit of discounted games.
The proof considers \emph{discounted} versions of reachability games,
where a play that reaches the target in $k$ steps is assigned a value of
$\alpha^k$, for some discount factor $0 < \alpha \leq 1$. %%rather than value~1. 
It is possible to show that, for $0 < \alpha < 1$, 
memoryless optimal strategies  exist. 
The result for the undiscounted ($\alpha = 1$) case followed from an 
analysis of the limit behavior of such optimal strategies for
$\alpha \rightarrow 1$.
The limit behavior is studied with the help of
results from the field of real Puisieux series~\cite{MN81}. 
This proof idea works not only for reachability games, but also for
total-reward games with nonnegative rewards (see~\cite[Chapter~5]{FV97} 
for details).
A more recent result \cite{EY06} establishes the existence of 
memoryless $\vare$-optimal strategies for certain infinite-state 
(recursive) concurrent games, but again the proof relies on results 
from analysis and  properties of solutions of certain polynomial 
functions. %%%linear algebra (matrix theory).  
Another proof of existence of memoryless $\vare$-optimal strategies for 
reachability objectives follows from the result of~\cite{Eve57} and 
the proof uses induction on the number of states of the game.
We show  the existence of memoryless $\vare$-optimal strategies
for concurrent reachability games by more combinatorial and elementary means.
Our proof relies only on combinatorial techniques and on simple 
properties of Markov decision processes \cite{Bertsekas95,luca-thesis}. 
As our proof is more combinatorial, we believe that the proof
techniques  will find future applications in game theory. 

Our proof of the existence of memoryless $\vare$-optimal strategies, for
all $\vare > 0$, is built upon a value-iteration scheme that
converges to the value of the game \cite{dAM04}. 
The value-iteration scheme computes a sequence $u_0, u_1, u_2, \ldots$
of valuations, where for $i = 0, 1, 2, \ldots$ each valuation $u_i$
associates with each state $s$ of the game a lower bound $u_i(s)$ on
the value of the game, such that $\lim_{i \go \infty}
u_i(s)$ converges to the value of the game at $s$. 
The convergence is monotonic from below, but no rate of convergence was known.
From each valuation $u_i$, we can extract a memoryless,
randomized player-1 strategy, by considering the (randomized) choice
of moves for player~1 that achieves the maximal one-step expectation
of $u_i$. 
In general, a strategy $\stra_i$ obtained in this fashion is not
guaranteed to achieve the value $u_i$.
We show that $\stra_i$ is guaranteed to achieve the value $u_i$ if it
is \emph{proper}, that is, if regardless of the strategy adopted by
player~2, the play reaches with probability~1 states that are either
in the target, or that have no path leading to the target. 
Next, we show how to extract from the sequence of valuations $u_0,
u_1, u_2, \ldots$ a sequence of memoryless randomized player-1
strategies $\stra_0, \stra_1, \stra_2, \ldots$ that are guaranteed to
be proper, and thus achieve the values $u_0, u_1, u_2, \ldots$. 
This proves the existence of memoryless $\vare$-optimal strategies for
all $\vare > 0$. 
Our proof is completely different as compared to the proof of~\cite{Eve57}: 
the proof of~\cite{Eve57} uses induction on the number of states, whereas our 
proof is based on the notion of ranking function obtained from the 
value-iteration algorithm.

We then apply the techniques developed for the above proof to design 
a \emph{strategy-improvement} algorithm for concurrent reachability 
games. 
Strategy-improvement algorithms, also known as \emph{policy-iteration} 
algorithms in the context of Markov decision processes~\cite{Howard}, 
compute a sequence of memoryless strategies 
$\stra'_0, \stra'_1, \stra'_2, \ldots$ such that, for all $k \geq 0$,
(i)~the strategy $\stra'_{k+1}$ is at all states no worse than 
$\stra'_{k}$; 
(ii)~if $\stra'_{k+1} = \stra'_k$, then $\stra_k$ is optimal; and 
(iii)~for every $\vare > 0$, we can find a $k$ sufficiently large so that
$\stra'_k$ is $\vare$-optimal. 
Computing a sequence of strategies $\stra_0, \stra_1, \stra_2, \ldots$
on the basis the value-iteration scheme from above does not yield a
strategy-improvement algorithm, as condition (ii) may be violated:
there is no guarantee that a step in the value iteration leads to an
improvement in the strategy. 
We will show that the key to obtain a strategy-improvement algorithm
consists in recomputing, at each iteration, the values of the player-1
strategy to be improved, and in adopting a particular strategy-update
rule, which ensures that all generated strategies are proper. 
Unlike previous proofs of strategy-improvement algorithms for concurrent
games \cite{Con93,FV97}, which rely on the analysis of
discounted versions of the games, our analysis is again more combinatorial. 
Hoffman-Karp~\cite{HofKar66} presented a strategy improvement algorithm 
for the special case of concurrent games with ergodic property (i.e., 
from every state $s$ any other state $t$ can be guaranteed to reach 
with probability~1) (also see algorithm for discounted games in~\cite{RCN73}). 
Observe that for concurrent reachability games, with the ergodic assumption
the value at all states is trivially~1, and thus the ergodic assumption 
gives us the trivial case. 
Our results give a combinatorial strategy improvement algorithm for the 
whole class of concurrent reachability games.
The results of~\cite{EY06} presents a strategy improvement algorithm for 
recursive concurrent games with termination criteria: the algorithm 
of~\cite{EY06} is more involved (depends on properties of certain polynomial 
functions) and works for the more general class of recursive concurrent games.
Differently from turn-based games \cite{Con93}, for concurrent
games we cannot guarantee the termination of the strategy-improvement
algorithm. 
However, for turn-based stochastic games we present a detailed analysis of 
termination criteria. 
Our analysis is based on bounds on the precision of values for turn-based 
stochastic games.
As a consequence of our analysis, we obtain an improved upper bound for 
termination for turn-based stochastic games.

\medskip\noindent{\bf Our results for concurrent safety games.}
We present for the first time a strategy-improvement scheme 
that approximates the value of a concurrent safety game {\em from below}.
Together with the strategy improvement algorithm for reachability games, 
or the value-iteration scheme, to approximate the value of such a game 
from above, we obtain a termination criterion for computing the value 
of concurrent reachability and safety games within any given tolerance 
$\varepsilon>0$.
This is the first termination criterion for an algorithm that approximates 
the value of a concurrent game.
Several difficulties had to be overcome in developing our scheme.
First,
while the strategy-improvement algorithm that approximates reachability 
values from below is based on locally improving a strategy 
on the basis of the valuation it yields, this approach does not suffice 
for approximating safety values from below:
we would obtain an increasing sequence of values, but they would not 
necessarily converge to the value of the game 
(see Example~\ref{examp:conc-safety}). 
Rather, we introduce a novel, non-local improvement step, which augments 
the standard valuation-based improvement step.
Each non-local step involves the solution of an appropriately constructed 
turn-based game.  
The turn-based game constructed is polynomial in the state space of the original 
game, but \emph{exponential} in the number of actions. 
It is an interesting open question whether the turn-based game can be 
also made polynomial in the number of the actions.
Second,
as value-iteration for safety objectives converges from above, while our
sequences of strategies yield values that converge from below, the
proof of convergence for our algorithm cannot be derived from a
connection with value-iteration, as was the case for reachability
objectives. 
We had to develop new proof techniques both to show the monotonicity
of the strategy values produced by our algorithm, and to show their
convergence to the value of the game.

\medskip\noindent{\bf Added value of our algorithms.} The new strategy 
improvement algorithms we present in this paper has two important contributions
as compared to the classical value-iteration algorithms.

\begin{enumerate}

\item \emph{Termination for approximation.} 
The value-iteration algorithm for reachability games converges from below, 
and the value-iteration for safety games converges for above. 
Hence given desired precision $\vare>0$ for approximation, there is no 
termination criteria to stop the value-iteration algorithm and guarantee 
$\vare$-approximation.
The sequence of valuation of our strategy improvement algorithm for concurrent 
safety games converges from below, and along with the value-iteration or 
strategy improvement algorithm for concurrent reachability games we obtain the 
\emph{first} termination criteria for $\vare$-approximation of values in 
concurrent reachability and safety games.
Using a result of~\cite{HKM09} on the bound on $k$-uniform memoryless 
$\vare$-optimal strategies, for $\vare>0$, we also obtain a bound on the number 
of iterations of the strategy improvement algorithms that guarantee 
$\vare$-approximation of the values. 
Moreover a recent result of~\cite{CSR11} provide a nearly tight 
double exponential upper and lower bound on the number of iterations required
for $\vare$-approximation of the values.

\item \emph{Approximation of strategies.}
Our strategy improvement algorithms are also the first approach 
to approximate memoryless $\vare$-optimal strategies in concurrent reachability and 
safety games.
The witness strategy produced by the value-iteration algorithm for concurrent 
reachability games is not memoryless; and for concurrent safety games since
the value-iteration algorithm converges from above it does not provide any 
witness strategies. 
Our strategy improvement algorithms for concurrent reachability and safety 
games yield sequence of memoryless strategies that ensure for convergence 
to the value of the game from below, and yield witness memoryless strategies 
to approximate the value of concurrent reachability and safety games.

\end{enumerate}

\section{Definitions} 

\noindent{\bf Notation.} 
For a countable set~$A$, a {\em probability distribution\/} on $A$ is a 
function $\trans\!:A\to[0,1]$ such that $\sum_{a \in A} \trans(a) = 1$. 
We denote the set of probability distributions on $A$ by $\distr(A)$. 
Given a distribution $\trans \in \distr(A)$, we denote by 
$\supp(\trans) = \{x \in A \mid \trans(x) > 0\}$ the support set
of $\trans$.

\begin{definition}{(Concurrent games).}
A (two-player) {\em concurrent game structure\/} $\gamegraph = \langle S,
\moves, \mov_1, \mov_2, \trans \rangle$ consists of the following
components: 
\begin{itemize}

\item A finite state space $S$ and a finite set $\moves$ of moves or actions.

\item Two move assignments $\mov_1, \mov_2 \!: S\to 2^\moves
	\setminus \emptyset$.  For $i \in \{1,2\}$, assignment
	$\mov_i$ associates with each state $s \in S$ a nonempty
	set $\mov_i(s) \subseteq \moves$ of moves available to player $i$
	at state $s$.  

\item 
A probabilistic transition function 
$\trans: S \times \moves \times \moves \to \Distr(S)$ that gives the
probability $\trans(s, a_1, a_2)(t)$ of a transition from $s$ to
$t$ when player~1 chooses at state $s$ move $a_1$ and player~2 chooses 
move $a_2$, for all $s,t\in S$ and $a_1 \in \mov_1(s)$, $a_2 \in \mov_2(s)$.  
\end{itemize}
\end{definition}

\noindent
We denote by $|\trans|$ the size of transition 
function, i.e., $|\trans|=\sum_{s\in S,a \in \mov_1(s),b\in \mov_2(s),t\in S} 
|\trans(s,a,b)(t)|$, where $|\trans(s,a,b)(t)|$ is the number of bits 
required to specify the transition probability $\trans(s,a,b)(t)$.
We denote by $|\gamegraph|$ the size of the game graph, and $|G|=|\trans|+|S|$.
At every state $s\in S$, player~1 chooses a move $a_1\in\mov_1(s)$,
and simultaneously and independently player~2 chooses a move $a_2\in\mov_2(s)$.
The game then proceeds to the successor state $t$ with probability
$\trans(s,a_1,a_2)(t)$, for all $t \in S$.
A state $s$ is an \emph{absorbing state} if for all 
$a_1 \in \mov_1(s)$ and $a_2 \in \mov_2(s)$, we have 
$\trans(s, a_1,a_2)(s)=1$.
In other words, at an absorbing state $s$ for all choices of moves of the 
two players, the successor state is always $s$. 

\begin{definition}{(Turn-based stochastic games).}
A \emph{turn-based stochastic game graph} 
(\emph{$2\half$-player game graph})
$\gamegraph =\langle (S, E), (S_1,S_2,S_R),\trans\rangle$ 
consists of a finite directed graph $(S,E)$, a partition $(S_1$, $S_2$,
$S_R)$ of the finite set $S$ of states, and a probabilistic transition 
function $\trans$: $S_R \rightarrow \distr(S)$, where $\distr(S)$ denotes the 
set of probability distributions over the state space~$S$. 
The states in $S_1$ are the {\em player-$1$\/} states, where player~$1$
decides the successor state; the states in $S_2$ are the {\em 
player-$2$\/} states, where player~$2$ decides the successor state; 
and the states in $S_R$ are the {\em random or probabilistic\/} states, where
the successor state is chosen according to the probabilistic transition
function~$\trans$. 
We assume that for $s \in S_R$ and $t \in S$, we have $(s,t) \in E$ 
iff $\trans(s)(t) > 0$, and we often write $\trans(s,t)$ for $\trans(s)(t)$. 
For technical convenience we assume that every state in the graph 
$(S,E)$ has at least one outgoing edge.
For a state $s\in S$, we write $E(s)$ to denote the set 
$\set{t \in S \mid (s,t) \in E}$ of possible successors.
We denote by $|\trans|$ the size of the transition 
function, i.e., $|\trans|=\sum_{s\in S_R,t\in S} 
|\trans(s)(t)|$, where $|\trans(s)(t)|$ is the number of bits 
required to specify the transition probability $\trans(s)(t)$.
We denote by $|\gamegraph|$ the size of the game graph, and 
$|G|=|\trans|+|S|+|E|$.
\end{definition}

\smallskip\noindent{\bf Plays.}
A \emph{play} $\pat$ of $\gamegraph$ is an infinite sequence
$\pat = \langle s_0,s_1,s_2,\ldots \rangle $ of states in $S$ such that 
for all $k\ge 0$, there are moves $a^k_1 \in \mov_1(s_k)$ and 
$a^k_2 \in \mov_2(s_k)$ with $\trans(s_k,a^k_1,a^k_2)(s_{k+1}) >0$.
We denote by $\Paths$ the set of all plays, and by $\Paths_s$ the set of all 
plays $\pat=\seqs$ such that $s_0=s$, that is, the set of plays starting 
from state~$s$. 

\medskip\noindent{\bf Selectors and strategies.}
A \emph{selector} $\xi$ for player $i \in \set{1,2}$ is a function
$\xi : S \to \Distr(\moves)$ such that for all states $s \in S$ and
moves $a \in \moves$, if $\xi(s)(a) > 0$, then $a \in \mov_i(s)$.
A selector $\xi$ for player $i$ at a state $s$ is a distribution 
over moves such that if $\xi(s)(a)>0$, then $a \in \mov_i(s)$.
We denote by $\Sel_i$ the set of all selectors for player~$i \in \set{1,2}$,
and similarly, we denote by $\Sel_i(s)$ the set of all selectors for
player~$i$ at a state $s$.
The selector $\xi$ is {\em pure\/} if for every state $s \in S$, there is 
a move $a \in \moves$ such that $\xi(s)(a) = 1$.
%We denote by $\Psel_i \subs \Sel_i$ the set of pure selectors for 
%player $i$. 
A \emph{strategy} for player $i\in\set{1,2}$ is a function 
$\stra: S^+ \to \Distr(\moves)$ that
associates with every finite, nonempty sequence of states,
representing the history of the play so far, a selector for player~$i$; 
that is, for all $w \in S^*$ and $s \in S$, we have 
$\supp(\stra(w \cdot s)) \subs \mov_i(s)$.
The strategy $\stra$ is {\em pure\/} 
if it always chooses a pure selector;
that is, for all $w \in S^+$, there is a move $a \in \moves$ such that 
$\stra(w)(a)=1$.
A \emph{memoryless} strategy is independent of the history of the play and
depends only on the current state. 
Memoryless strategies correspond to selectors; we write
$\overline{\xi}$ for the memoryless strategy consisting in playing
forever the selector $\xi$. 
A strategy is \emph{pure memoryless} if it is both pure and
memoryless.
In a turn-based stochastic game, a strategy for player~1 is a function
$\stra_1:S^* \cdot S_1 \to \Distr(S)$, such that for all $w \in S^*$
and for all $s \in S_1$ we have $\supp(\stra_1(w\cdot s)) \subs E(s)$.
Memoryless strategies and pure memoryless strategies are obtained 
as the restriction of strategies as in the case of concurrent game 
graphs.
The family of strategies for player~2 are defined analogously.
We denote by $\Stra_1$ and $\Stra_2$ the sets of all strategies for 
player $1$ and player $2$, respectively.
We denote by $\Stra_i^M$ and $\Stra_i^\PM$ the sets of memoryless strategies 
and pure memoryless strategies for player~$i$, respectively. 

\medskip\noindent{\bf Destinations of moves and selectors.}
For all states $s \in S$ and moves $a_1 \in \mov_1(s)$ and $a_2 \in \mov_2(s)$,
we indicate by $\dest(s,a_1,a_2) = \supp(\trans(s,a_1,a_2))$
the set of possible successors of $s$ when the moves $a_1$ and $a_2$ are 
chosen.
Given a state $s$, and selectors $\xi_1$ and $\xi_2$ for the two players, 
we denote by 
\[
  \dest(s,\xi_1,\xi_2) = \bigcup_{\begin{array}{c}
      \scriptstyle a_1 \in \supp(\xi_1(s)),\\ 
      \scriptstyle a_2 \in \supp(\xi_2(s)) \end{array}} \dest(s,a_1,a_2)
\] 
the set of possible successors of $s$ with respect to the 
selectors $\xi_1$ and $\xi_2$. 

Once a starting state $s$ and strategies $\stra_1$ and $\stra_2$
for the two players are fixed, the game is reduced to an
ordinary stochastic  process.
Hence, the probabilities of events are uniquely defined, where an {\em
event\/} $\cala\subseteq\Paths_s$ is a measurable set of
plays.
For an event $\cala\subseteq\Paths_s$, we denote by
$\Prb_s^{\stra_1,\stra_2}(\cala)$ the probability that a play belongs to
$\cala$ when the game starts from $s$ and the players follows the
strategies $\stra_1$ and~$\stra_2$.
Similarly, for a measurable function $f: \Paths_s \to \reals$, 
we denote by $\E_s^{\stra_1,\stra_2}(f)$ the expected value of $f$ when
the game starts from $s$ and the players follow the strategies
$\stra_1$ and~$\stra_2$.
%We also denote by $\randpath: \Paths \to \Paths$ the variable denoting
%a random path, and 
For $i \geq 0$, we denote by $\randpath_i: \Paths
\to S$ the random  variable denoting the $i$-th state along a
play. 

\medskip\noindent{\bf Valuations.}
A {\em valuation\/} is a mapping $v: S \to [0,1]$ associating a real
number $v(s) \in [0,1]$ with each state $s$. 
Given two valuations $v, w: S \to \reals$, we write $v \leq w$ when
$v(s) \leq w(s)$ for all states $s \in S$. 
For an event $\cala$, we denote by 
$\Prb^{\stra_1,\stra_2}(\cala)$ the valuation $S \to [0,1]$
defined for all states $s \in S$ 
by $\bigl(\Prb^{\stra_1,\stra_2}(\cala)\bigr)(s) =
\Prb_s^{\stra_1,\stra_2}(\cala)$. 
Similarly, for a measurable function $f: \Omega_s \to [0,1]$,
we denote by $\E^{\stra_1,\stra_2}(f)$ the valuation $S \to [0,1]$
defined for all $s \in S$ by $\bigl(\E^{\stra_1,\stra_2}(f)\bigr)(s) =
\E_s^{\stra_1,\stra_2}(f)$.

\medskip\noindent{\bf The $\Pre$ operator.}
Given a valuation $v$, and two selectors 
$\xi_1 \in \Sel_1$ and $\xi_2 \in \Sel_2$, 
we define the valuations 
$\Pre_{\xi_1,\xi_2}(v)$, $\Pre_{1:\xi_1}(v)$, and $\Pre_1(v)$ 
as follows, for all states $s \in S$: 
\begin{multline*} 
  \Pre_{\xi_1,\xi_2}(v)(s) 
%%\hfill \\  \qquad
  = \sum_{a,b \in \moves} \, \sum_{t \in S} v(t) \cdot \trans(s,a,b)(t) \cdot
    \xi_1(s)(a) \cdot \xi_2(s)(b)
  \\ \quad
  \Pre_{1:\xi_1}(v)(s) 
   =  \inf_{\xi_2 \in \Sel_2} \, \Pre_{\xi_1 ,\xi_2}(v)(s) \hfill
  \\ \quad
  \Pre_1(v)(s) 
   = \sup_{\xi_1 \in \Sel_1} \, \inf_{\xi_2 \in \Sel_2} \, 
  \Pre_{\xi_1 ,\xi_2}(v)(s) \hfill
\end{multline*}
Intuitively, $\Pre_1(v)(s)$ is the greatest expectation of $v$ that
player~1 can guarantee at a successor state of $s$. 
Also note that given a valuation $v$, the computation of $\Pre_1(v)$ 
reduces to the solution of a zero-sum one-shot matrix game, and can be 
solved by linear programming.
Similarly, 
$\Pre_{1:\xi_1}(v)(s)$ is the greatest expectation of $v$ that
player~1 can guarantee at a successor state of $s$ by playing the 
selector $\xi_1$. 
Note that all of these operators on valuations are monotonic: 
for two valuations $v, w$, if $v \leq w$,
then for all selectors $\xi_1 \in \Sel_1$ and $\xi_2 \in \Sel_2$, we have 
$\Pre_{\xi_1,\xi_2}(v) \leq \Pre_{\xi_1,\xi_2}(w)$, 
$\Pre_{1:\xi_1}(v) \leq \Pre_{1:\xi_1}(w)$, 
and $\Pre_1(v) \leq \Pre_1(w)$. 

\medskip\noindent{\bf Reachability and safety objectives.}
Given a set $F \subs S$ of \emph{safe} states, the objective of a safety 
game consists in never leaving $F$.
Therefore, we define the set of winning plays as the set 
$\Safe(F)=\set{\seq{s_0,s_1, s_2,\ldots} \in \Paths \mid s_k \in F 
\mbox{ for all } k \geq 0}$.
Given a subset $T \subs S$ of \emph{target} states, the objective of a
reachability game consists in reaching $T$. 
Correspondingly, the set winning plays is 
$\Reach(T) = \set{\seq{s_0, s_1, s_2,\ldots} \in \Paths \mid s_k
\in T \mbox{ for some }k \ge 0}$ of plays that visit $T$.
For all $F \subs S$ and $T \subs S$, the sets $\Safe(F)$ and $\Reach(T)$ 
is measurable.
An objective in general is a measurable set, and in this paper we 
consider only reachability and safety objectives.
For an objective $\Phi$, the probability of satisfying 
$\Phi$ from a state $s \in S$ under strategies $\stra_1$ and $\stra_2$ 
for players~1 and~2, respectively,
is $\Prb_s^{\stra_1,\stra_2}(\Phi)$.
We define the \emph{value} for player~1 of game with objective $\Phi$ 
from the state $s \in S$ as 
\[
  \va(\Phi)(s) =
  \sup_{\stra_1\in\Stra_1}\inf_{\stra_2\in\Stra_2}
  \Prb_s^{\stra_1,\stra_2}(\Phi); 
\]
i.e., the value is the maximal probability with which player~1 can 
guarantee the satisfaction of $\Phi$ against all player~2 strategies.
Given a player-1 strategy $\stra_1$, we use the notation
\[
\winval{1}^{\stra_1}(\Phi)(s)
= \inf_{\stra_2 \in \Stra_2} \Prb_s^{\stra_1,\stra_2}(\Phi).
\]
A strategy $\stra_1$ for player~1 is {\em optimal\/} for an
objective $\Phi$ if for all 
states $s \in S$, we have 
\[
\winval{1}^{\stra_1}(\Phi)(s)= \va(\Phi)(s). 
\]
For $\vare > 0$, a strategy $\stra_1$ for player~1 is 
{\em $\vare$-optimal\/} if for all states $s \in S$, we have 
\[
\winval{1}^{\stra_1}(\Phi)(s)
  \geq \va(\Phi)(s) - \vare. 
\]
The notion of values and optimal strategies for player~2 are
defined analogously.
Reachability and safety objectives are dual, i.e., we have
$\Reach(T)=\Paths \setminus \Safe(S\setminus T)$.  
The quantitative determinacy result of~\cite{Eve57} ensures that for all
states $s \in S$, we have
\[  
\va(\Safe(F))(s) + \vb(\Reach(S\setminus F))(s)  = 1. 
\]

\begin{comment}
\begin{theorem}[Memoryless determinacy]\label{thrm:memory-determinacy}
For all concurrent game graphs $G$, for all $F,T \subs S$, such
that $F=S \setminus T$, the following assertions hold.
\begin{enumerate}
\item \cite{FV97} Memoryless optimal strategies exist
for safety objectives $\Safe(F)$.
\item \cite{CdAH06,EY06} For all $\vare>0$, memoryless
$\vare$-optimal strategies exist for reachability
objectives $\Reach(T)$.

\item \cite{Con92} If $G$ is a turn-based stochastic game
graph, then pure memoryless optimal strategies exist
for reachability objectives $\Reach(T)$ and safety 
objectives $\Safe(F)$. 
\end{enumerate}
\end{theorem}
\end{comment}

\section{Markov Decision Processes}\label{sec:mdp}

\noindent
To develop our arguments, we need some facts about one-player versions of
concurrent stochastic games, known as \emph{Markov decision processes}
(MDPs) \cite{Derman,Bertsekas95}. 
For $i \in \set{1,2}$, a \emph{player-$i$ MDP} 
(for short, $i$-MDP) is a concurrent game where, for all
states $s \in S$, we have $|\mov_{3-i}(s)|=1$.
Given a concurrent game $G$, if we fix a memoryless strategy
corresponding to selector $\xi_1$ for player~1, the game
is equivalent to a 2-MDP $G_{\xi_1}$ with the transition function 
\[
\trans_{\xi_1}(s,a_2)(t) = \sum_{a_1 \in \mov_1(s)} 
\trans(s,a_1,a_2)(t) \cdot \xi_1(s)(a_1),
\]
for all $s \in S$ and $a_2 \in \mov_2(s)$. 
Similarly, if we fix selectors $\xi_1$ and $\xi_2$ for both players in a
concurrent game $G$, we obtain a Markov chain, which we denote by
$G_{\xi_1,\xi_2}$. 

\medskip\noindent{\bf End components.}
In an MDP, the sets of states that play an equivalent role to the 
closed recurrent classes of Markov chains \cite[Chapter~4]{Kemeny}  are called
``end components'' \cite{CY95,luca-thesis}. 

\begin{definition}{(End components).}
An {\em end component\/} of an $i$-MDP $G$, for $i\in\set{1,2}$, is a 
subset $C \subs S$ of the states such that there is a selector $\xi$ 
for player~$i$ so that 
$C$ is a closed recurrent class of the Markov chain $G_\xi$. 
\end{definition}

\noindent
It is not difficult to see that an equivalent characterization of an
end component $C$ is the following. 
For each state $s \in C$, there is a subset $M_i(s) \subs \mov_i(s)$
of moves such that:
\begin{enumerate}
\item {\em (closed)} if a move in $M_i(s)$ is chosen 
by player $i$ at state $s$, then all successor states that are obtained 
with nonzero probability lie in $C$; and

\item {\em (recurrent)} the graph $(C,E)$, where $E$
consists of the transitions that occur with nonzero probability when
moves in $M_i(\cdot)$ are chosen by player $i$, is strongly connected.

\end{enumerate}
Given a play $\pat\in\Paths$, 
we denote by $\infi(\pat)$ the set of states that
occurs infinitely often along $\pat$.  
Given a set $\calf \subs 2^S$ of subsets of states, we denote by 
$\infi(\calf)$ the event $\set{\pat \mid \infi(\pat) \in \calf}$.
The following theorem states that in a 2-MDP, for every strategy of
player~2, the set of states that are visited infinitely often is,
with probability~1, an end component. 
Corollary~\ref{coro:prob1} follows easily from Theorem~\ref{theo-ec}.

\begin{theorem}{(\cite{luca-thesis}).} \label{theo-ec}
For a player-1 selector $\xi_1$, 
let $\calc$ be the set of end components of a 2-MDP $G_{\xi_1}$.
For all player-2 strategies $\stra_2$ and all states $s \in S$, we have
$\Prb_s^{\ov{\xi}_1,\stra_2}(\infi(\calc)) = 1$. 
\end{theorem}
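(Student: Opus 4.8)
The statement asserts that in a 2-MDP $G_{\xi_1}$ (player~2 is the only genuine decision maker), for every player-2 strategy $\stra_2$ and every start state $s$, the set $\infi(\pat)$ of states visited infinitely often is almost surely an end component. The plan is to reduce the general (history-dependent) strategy $\stra_2$ to a Markov chain argument by passing to the ``unfolding'' of the MDP under $\stra_2$. First I would fix $\stra_2$ and $s$, and consider the stochastic process $(\randpath_0,\randpath_1,\dots)$ induced by $\ov{\xi}_1$ and $\stra_2$ from $s$; this is a (generally infinite-state) Markov chain on histories $S^+$, but it projects onto the finite state space $S$. The key observation is that $\infi(\pat)$ is a tail event and its value depends only on which states recur; so it suffices to show that, conditioned on $\infi(\pat)=B$ for some fixed $B\subs S$, the set $B$ is an end component whenever this event has positive probability.

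The core of the argument is the following. Suppose $\Prb_s^{\ov\xi_1,\stra_2}(\infi(\pat)=B) > 0$. I would show two things about $B$. (i) \emph{Closedness witness:} for each $t\in B$, define $M_2(t)$ to be the set of moves $b\in\mov_2(t)$ such that, along plays with $\infi(\pat)=B$, player~2 chooses $b$ at $t$ infinitely often with positive conditional probability; one then argues, using a Borel--Cantelli / L\'evy zero-one style argument, that from state $t$ under move $b\in M_2(t)$ every successor reached with positive probability (under $\trans_{\xi_1}$) must itself lie in $B$ — otherwise, since that successor is reached infinitely often with positive probability, it would belong to $\infi(\pat)$ infinitely often on a positive-measure subset, contradicting $\infi(\pat)=B$; more carefully, one shows a state outside $B$ would be \emph{entered} infinitely often on a positive-probability set, contradiction. (ii) \emph{Recurrence/strong connectivity:} any two states $t,t'\in B$ both occur infinitely often along almost every such play, so with conditional probability~1 the play goes from $t$ to $t'$ using only moves in $M_2(\cdot)$ and only states in $B$; hence there is a finite path in the graph $(B,E)$ induced by the $M_2(\cdot)$ moves, giving strong connectivity. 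Together, (i) and (ii) match exactly the ``closed'' and ``recurrent'' characterization of end components stated just before the theorem, so $B\in\calc$. Finally, summing over the countably many possible $B\subs S$ with positive probability gives $\Prb_s^{\ov\xi_1,\stra_2}(\infi(\calc))=1$.

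The main obstacle is making the ``chosen infinitely often'' set $M_2(t)$ well defined and the closedness argument rigorous when $\stra_2$ has infinite memory: one cannot simply read off a stationary selector, so the witness moves $M_2(t)$ must be extracted from the asymptotic frequency behaviour of $\stra_2$ along the random play. The clean way to handle this is a zero-one law argument: for a fixed target state $u\notin B$ and a fixed state-move pair $(t,b)$ with $\trans_{\xi_1}(t,b)(u)>0$, the events ``$(t,b)$ played at step $k$'' are (conditionally) independent enough across a subsequence that, were $(t,b)$ played infinitely often, $u$ would be visited infinitely often almost surely on that event — a conditional Borel--Cantelli estimate using that each such play yields an independent chance $\ge \trans_{\xi_1}(t,b)(u)>0$ of moving to $u$. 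This forces $b\notin M_2(t)$, i.e., any move actually used infinitely often keeps the play inside $B$, which is the closedness condition. I would cite \cite{luca-thesis} (and the analogous treatment in \cite{CY95}) for the detailed measure-theoretic bookkeeping, since the proof is standard for MDPs and the theorem is invoked here only as a black box.
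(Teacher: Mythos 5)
Your plan is essentially the standard proof of the end-component theorem, and it is sound: the paper itself gives no proof here (Theorem~\ref{theo-ec} is invoked as a black box from \cite{luca-thesis}, with the same argument appearing in \cite{CY95}), so there is nothing in the text to diverge from. The two pillars you identify are exactly the right ones: (i) the conditional (L\'evy) extension of Borel--Cantelli, which shows that any state--move pair realized infinitely often must have all its successors inside $\infi(\pat)$, giving closedness of the witness sets $M_2(\cdot)$; and (ii) the fact that every state of $B=\infi(\pat)$ recurs, so after a finite prefix the play connects any two states of $B$ using only infinitely-often-played moves, giving strong connectivity (and adding the remaining closed moves of your $M_2(t)$ only adds edges, which cannot break it). The only place that needs care when you write it out is the measure-theoretic bookkeeping around ``played infinitely often with positive conditional probability'': you should partition the event $\set{\infi(\pat)=B}$ by the (finitely many) possible sets of infinitely-often-realized state--move pairs and run the Borel--Cantelli estimate on each positive-measure cell, rather than on the union; with that fix the argument is complete and matches the cited proof.
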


\begin{corollary}{}\label{coro:prob1}
For a player-1 selector $\xi_1$, 
let $\calc$ be the set of end components of a 2-MDP $G_{\xi_1}$, and
let $Z= \bigcup_{C \in \calc} C$ be the set of states of all
end components.
For all player-2 strategies $\stra_2$ and all states $s \in S$, we have
$\Prb_s^{\ov{\xi}_1,\stra_2}(\Reach(Z)) = 1$. 
\end{corollary}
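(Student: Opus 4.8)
The plan is to derive this directly from Theorem~\ref{theo-ec} by a simple measure-theoretic observation: the event $\infi(\calc)$ is (up to a null set) contained in the event $\Reach(Z)$. First I would fix a player-1 selector $\xi_1$, the associated set $\calc$ of end components of the 2-MDP $G_{\xi_1}$, and $Z = \bigcup_{C \in \calc} C$. For an arbitrary player-2 strategy $\stra_2$ and state $s \in S$, Theorem~\ref{theo-ec} gives $\Prb_s^{\ov{\xi}_1,\stra_2}(\infi(\calc)) = 1$.

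The key step is the inclusion $\infi(\calc) \subseteq \Reach(Z)$ of events in $\Paths_s$. Indeed, take any play $\pat = \seq{s_0,s_1,s_2,\ldots}$ with $\infi(\pat) \in \calc$. Then $\infi(\pat)$ is some end component $C \in \calc$, and by definition of $\infi(\pat)$ the set of states occurring infinitely often along $\pat$ is nonempty (the state space $S$ is finite, so some state must recur); pick any $t \in \infi(\pat) = C$. Since $t$ occurs infinitely often along $\pat$, in particular it occurs at least once, so there is some $k \ge 0$ with $s_k = t \in C \subseteq Z$. Hence $\pat \in \Reach(Z)$. This establishes $\infi(\calc) \subseteq \Reach(Z)$.

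Finally, by monotonicity of probability measures,
\[
1 = \Prb_s^{\ov{\xi}_1,\stra_2}(\infi(\calc)) \le \Prb_s^{\ov{\xi}_1,\stra_2}(\Reach(Z)) \le 1,
\]
so $\Prb_s^{\ov{\xi}_1,\stra_2}(\Reach(Z)) = 1$. Since $\stra_2$ and $s$ were arbitrary, the claim follows. (One should note that $\Reach(Z)$ is measurable, as guaranteed by the remarks on reachability objectives in the Definitions section, so the probability is well-defined.)

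There is essentially no obstacle here: the only thing to be slightly careful about is that $\infi(\pat)$ is always nonempty because $S$ is finite, which is what lets us extract a concrete state in $Z$ that is actually visited; everything else is the trivial inclusion of events plus monotonicity of the measure.
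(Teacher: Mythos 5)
Your proof is correct and is exactly the argument the paper intends: the paper gives no explicit proof, merely remarking that the corollary ``follows easily'' from Theorem~\ref{theo-ec}, and your event inclusion $\infi(\calc) \subseteq \Reach(Z)$ (using finiteness of $S$ to ensure $\infi(\pat)$ is nonempty) together with monotonicity of the probability measure is the standard way to make that precise.
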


\paragraph{MDPs with reachability objectives.}\label{subsec:mdpreach}

Given a 2-MDP with a reachability objective $\Reach(T)$ for player~2,
where $T\subseteq S$, the values can be obtained as the solution of a 
linear program~\cite{FV97} (see Section 2.9 of~\cite{FV97} where linear
program solution is given for MDPs with limit-average objectives and
reachability objective is a special case of limit-average objectives).
The linear program has a variable $x(s)$ for all states $s \in S$, and the
objective function and the constraints are as follows:
\[
\min \ \displaystyle \sum_{s \in S} x(s) \quad \text{subject to } \\
\]
\vspace*{-2ex}
%% %
%% \begin{align*}
%%  x(s) & \geq \displaystyle \sum_{t \in S} x(t) \cdot \trans(s,b) (t) 
%% 	& & \quad \forall b \in \mov_2(s), \forall s \in S; \\[1ex]
%%  x(s) & = 1 & & \quad \forall s \in T; \\
%%  x(s) & \geq 0 & & \quad \forall s \in S; \\
%%  x(s) & \leq 1 & & \quad \forall s \in S. \\
%% \end{align*}
%% %
\begin{align*}
 x(s) \geq \displaystyle \sum_{t \in S} x(t) \cdot \trans(s,a_2) (t) 
	& \text{\ \ for all \ } s\in S \text{\ and\ } 
        a_2 \in \mov_2(s) \\
 x(s) = 1 & \text{\ \ for all \ } s \in T \\
 0 \leq x(s) \leq 1 & \text{\ \ for all \ } s \in S
\end{align*}
The correctness of the above linear program to compute the values follows 
from~\cite{FV97} (see section 2.9 of~\cite{FV97}, and also see~\cite{CY95}
for the correctness of the linear program). 

\newcommand{\vars}[1]{\winval{1}^{#1}(\Reach(T))}
\newcommand{\vbrs}[1]{\winval{2}^{#1}(\Safe(S \setminus T))}
\newcommand{\var}{\va(\Reach(T))}

\section{Existence of Memoryless $\vare$-Optimal Strategies for
  Concurrent Reachability Games}

\noindent
In this section we present an elementary and combinatorial proof of the 
existence of memoryless $\vare$-optimal strategies
for concurrent reachability games, for all $\vare > 0$
(optimal strategies need not exist for concurrent
games with reachability objectives~\cite{Eve57}). 
%Unlike previous approaches, which relied on the analysis of discounted
%versions of the game, and on advanced Puisieux series analysis, our
%Our proof argues directly about the undiscounted version of the game, and
%it is purely combinatorial in nature. 
%%A proof of the existence of memoryless optimal strategies for
%%safety games can be found in \cite{dAM04}. 

\subsection{From value iteration to selectors} 
\label{sec-valitersel}

\noindent
Consider a reachability game with target $T \subs S$, i.e., objective 
for player~1 is $\Reach(T)$. 
Let $W_2=\set{s \in S \mid \var(s)=0}$ be the set of states from which
player~1 cannot reach the target with positive probability.
From \cite{dAH00}, we know that this set can be computed as $W_2
= \lim_{k \rightarrow \infty} W_2^k$, where $W_2^0 = S \setm T$, and
for all $k \geq 0$,
\[
W_2^{k+1}  = \set{ s \in S \setm T \mid 
    \exists a_2 \in \mov_2(s) \qdot \forall a_1 \in \mov_1(s) \qdot  
    \dest(s,a_1,a_2) \subs W_2^k} \eqpun . 
\]
%
%%\begin{align*}
%%  W_2^{k+1} & = \set{ s \in S \setm T \mid 
%%    \exists a_2 \in \mov_2(s) \qdot \forall a_1 \in \mov_1(s) \qdot \\
%%    & \qquad \qquad \qquad \qquad \dest(s,a_1,a_2) \subs W_2^k} \eqpun . 
%%\end{align*}
%
The limit is reached in at most $|S|$ iterations. 
Note that player~2 has a strategy that confines the game to $W_2$, 
and that consequently all strategies are optimal for player~1, as they
realize the value~0 of the game in $W_2$. 
Therefore, without loss of generality, in the remainder we assume that
all states in $W_2$ and $T$ are absorbing. 

Our first step towards proving the existence of memoryless $\vare$-optimal
strategies for reachability games consists in considering a
value-iteration scheme for the computation of $\var$. 
Let $[T]: S \to [0,1]$ be the indicator function of $T$, defined
by $[T](s) = 1$ for $s \in T$, and $[T](s) = 0$ for $s \not\in T$. 
Let $u_0=[T]$, and for all $k\ge 0$, let 
\begin{align} \label{eq-valiter} 
  u_{k+1} & = \Pre_1(u_k). 
\end{align}
Note that the classical equation assigns $u_{k+1} = [T] \lor Pre_1(u_k)$,
where $\lor$ is interpreted as the maximum in pointwise fashion.
Since we assume that all states in $T$ are absorbing, the classical
equation reduces to the simpler equation given by (\ref{eq-valiter}).
From the monotonicity of $\Pre_1$ it follows that $u_k \leq u_{k+1}$, that is,
$\Pre_1(u_k) \geq u_k$, for all $k \geq 0$. 
The result of \cite{dAM04} establishes by a combinatorial argument
that $\winval{1}(\Reach(T)) =  \lim_{k \to \infty} u_k$, 
where the limit is interpreted in pointwise fashion. 
For all $k \geq 0$, let the player-1 selector $\zeta_k$ be a 
\emph{value-optimal} selector for $u_k$, that is, a selector such that 
$\Pre_1(u_k) = \Pre_{1:\zeta_k}(u_k)$.
An $\vare$-optimal strategy $\stra_1^k$ for
player~1 can be constructed by applying the sequence $\zeta_k,
\zeta_{k-1}, \ldots, \zeta_1, \zeta_0, \zeta_0, \zeta_0, \ldots$
of selectors,
where the last selector, $\zeta_0$, is repeated forever.  
It is possible to prove by induction on $k$ that 
\[
  \inf_{\stra_2 \in \Stra_2} \Prb^{\stra_1^k,\stra_2} 
  (\exists j \in [0..k] .\, \randpath_j \in T) \geq u_k.
\]
As the strategies $\stra_1^k$, for $k \geq 0$, are not necessarily
memoryless, this proof does not suffice for showing the existence of 
memoryless $\vare$-optimal strategies. 
On the other hand, the following example shows that the memoryless
strategy $\overline{\zeta}_k$ does not necessarily guarantee the value
$u_k$. 

\begin{example}{} \label{examp-nonopt} 
Consider the $1$-MDP shown in Fig~\ref{fig1}.
At all states except $s_3$, the set of available
moves for player~1 is a singleton set. 
At $s_3$, the available moves for player~1 are $a$ and $b$.
The transitions at the various states are shown in
the figure.
The objective of player~1 is to reach the state $s_0$.

We consider the value-iteration procedure
and denote by $u_k$ the valuation after $k$ iterations.
Writing a valuation $u$ as the list of values
$\bigl(u(s_0), u(s_1), \ldots, u(s_4)\bigr)$, we have:
\begin{align*}
u_0 & = (1,0,0,0,0) \\
u_1 = \Pre_1(u_0) & = (1,0,\half,0,0) \\
u_2 = \Pre_1(u_1) & = (1,0,\half,\half,0) \\
u_3 = \Pre_1(u_2) & = (1,0,\half,\half,\half) \\
u_4 = \Pre_1(u_3) = u_3 & = (1,0,\half,\half,\half) 
\end{align*}
The valuation $u_3$ is thus a fixpoint. 

Now consider the selector $\xi_1$ for player~1 that chooses
at state $s_3$ the move $a$ with probability~1.
The selector $\xi_1$ is optimal with respect to the valuation $u_3$.
However, if player~1 follows the memoryless strategy $\ov{\xi}_1$,
then the play visits $s_3$ and $s_4$ alternately and reaches
$s_0$ with probability~0.
Thus, $\xi_1$ is an example of a selector that is value-optimal, but
not optimal. 

On the other hand, consider any selector $\xi'_1$ for
player~1 that chooses move $b$ at state $s_3$ with positive probability.
Under the memoryless strategy $\ov{\xi}'_1$, the set $\set{s_0,s_1}$ of 
states is
reached with probability~1, and $s_0$ is reached with probability
$\half$. 
Such a $\xi'_1$ is thus an example of a selector that is both
value-optimal and optimal. \qed
\begin{figure}[t]
\begin{center}
\begin{picture}(48,28)(15,-5)
\node[Nmarks=n](n0)(40,12){$s_2$}
\node[Nmarks=n](n1)(70,0){$s_0$}
\node[Nmarks=n](n2)(70,24){$s_1$}

\node[Nmarks=n](n3)(20,12){$s_3$}
\node[Nmarks=n](n4)(0,12){$s_4$}

\drawloop[loopangle=0, loopdiam=6](n1){}
\drawloop[loopangle=0, loopdiam=6](n2){}
\drawedge[ELpos=50, ELside=r, ELdist=0.5, curvedepth=0](n0,n1){$1/2$}
\drawedge[ELpos=50, ELside=l, curvedepth=0](n0,n2){$1/2$}
\drawedge[ELpos=50, ELside=l, curvedepth=0](n3,n0){$b$}
\drawedge[ELpos=50, ELside=l, curvedepth=6](n3,n4){$a$}
\drawedge[ELpos=50, ELside=l, curvedepth=6](n4,n3){}
\end{picture}
\end{center}
\caption{An MDP with reachability objective.}\label{fig1}
\end{figure}
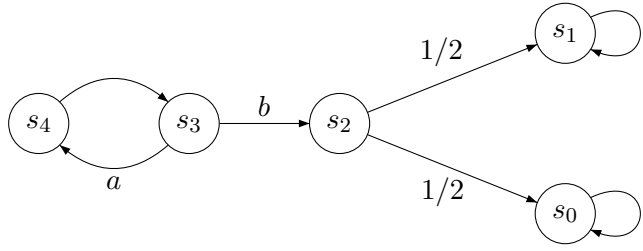
\end{example}

\noindent
In the example, the problem is that the strategy $\overline{\xi}_1$
may cause player~1 to stay forever in $S \setm (T \union W_2)$ with
positive probability. 
We call ``proper'' the strategies of player~1 that guarantee
reaching $T \union W_2$ with probability~1. 

\begin{definition}{(Proper strategies and selectors).}
A player-1 strategy $\stra_1$ is \emph{proper}
if for all player-2 strategies $\stra_2$,
and for all states $s \in S \setminus \wab$, we have 
$\Prb_s^{\stra_1,\stra_2}(\Reach{\wab})=1$.
A player-1 selector $\xi_1$ is \emph{proper} if the memoryless player-1
strategy $\overline{\xi}_1$ is proper. 
\end{definition} 

\noindent
We note that proper strategies are closely related to Condon's
notion of a {\em halting game\/} \cite{Con92}: precisely, a game is
halting iff all player-1 strategies are proper. 
We can check whether a selector for player~1 is proper
by considering only the pure selectors for player~2. 

\begin{lemma}{}
Given a selector $\xi_1$ for player~1, the memoryless player-1 strategy
$\overline{\xi}_1$ is proper iff for every pure selector $\xi_2$ for
player~2, and for all states $s \in S$, we have 
$\Prb_s^{\overline{\xi}_1, \overline{\xi}_2} (\Reach{\wab})=1$.
\end{lemma}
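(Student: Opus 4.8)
The plan is to prove the two implications of the ``iff'' separately. The forward direction is essentially immediate, and all the content lies in the converse, which I would handle by contraposition, using the end‑component theory of the $2$-MDP $G_{\xi_1}$ developed just above.

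\textbf{Forward direction.} Assume $\overline{\xi}_1$ is proper. Any pure selector $\xi_2$ for player~2 yields a legitimate (pure memoryless) player-2 strategy $\overline{\xi}_2$. For $s \in \wab$ every play from $s$ already lies in $\wab$, so $\Prb_s^{\overline{\xi}_1,\overline{\xi}_2}(\Reach(\wab)) = 1$ trivially; for $s \in S \setminus \wab$ this equality is exactly an instance of the definition of properness applied to the player-2 strategy $\overline{\xi}_2$. Hence the right-hand condition holds at every state.

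\textbf{Converse direction.} I would prove the contrapositive. Suppose $\overline{\xi}_1$ is not proper: there exist a player-2 strategy $\stra_2$ and a state $s \in S \setminus \wab$ with $\Prb_s^{\overline{\xi}_1,\stra_2}(\Reach(\wab)) < 1$, and since $\Reach(\wab)$ and $\Safe(S \setminus \wab)$ are complementary this gives $\Prb_s^{\overline{\xi}_1,\stra_2}(\Safe(S \setminus \wab)) > 0$. Let $\calc$ be the set of end components of the $2$-MDP $G_{\xi_1}$. By Theorem~\ref{theo-ec}, $\Prb_s^{\overline{\xi}_1,\stra_2}(\infi(\calc)) = 1$, so the event $\Safe(S \setminus \wab) \cap \infi(\calc)$ has positive probability; choosing any play $\pat$ in it and setting $C = \infi(\pat)$, we get an end component $C \in \calc$ with $C \subseteq S \setminus \wab$ (in particular $C \cap \wab = \emptyset$). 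Now fix a state $s' \in C$ and invoke the closed-set characterization of end components: for each $t \in C$ there is a move $b_t \in M_2(t) \subseteq \mov_2(t)$ all of whose successors in $G_{\xi_1}$ lie in $C$, i.e.\ $\supp(\trans_{\xi_1}(t,b_t)) \subseteq C$. Let $\xi'_2$ be any pure selector for player~2 with $\xi'_2(t)(b_t) = 1$ for all $t \in C$ (chosen arbitrarily elsewhere). Then $C$ is closed in the Markov chain $G_{\xi_1,\xi'_2}$, so every play from $s'$ stays in $C$ forever with probability~$1$ and hence never reaches $\wab$; thus $\Prb_{s'}^{\overline{\xi}_1,\overline{\xi}'_2}(\Reach(\wab)) = 0 < 1$, which witnesses the failure of the right-hand condition for the pure selector $\xi'_2$ at the state $s'$.

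\textbf{Main obstacle.} The one point that needs care — and the only one — is to notice that this argument uses only the \emph{closed} half of the equivalent characterization of an end component, not the \emph{recurrent} half: trapping a play inside $C$ requires only closure, which a pure selector can enforce, whereas realizing $C$ as a genuine closed recurrent class of a Markov chain may require a properly randomized selector. Minor things to verify along the way are the duality $\Paths \setminus \Reach(\wab) = \Safe(S \setminus \wab)$, the applicability of Theorem~\ref{theo-ec} to arbitrary player-2 strategies, and the nonemptiness of end components (so that $s'$ exists).
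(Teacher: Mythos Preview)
Your proof is correct and follows essentially the same route as the paper: prove the nontrivial converse by contraposition, use Theorem~\ref{theo-ec} in the $2$-MDP $G_{\xi_1}$ to obtain an end component $C \subseteq S \setminus \wab$, and then build a pure player-2 selector from the \emph{closed} part of the end-component characterization that traps the play inside $C$. Your version is more explicit than the paper's (which jumps directly to ``there must exist an end component $C \subseteq S \setminus \wab$'' without spelling out the positive-probability intersection argument), but the underlying idea is identical.
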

\begin{proof}
We prove the contrapositive. 
Given a player-1 selector $\xi_1$, consider the 2-MDP $G_{\xi_1}$. 
If $\overline{\xi}_1$ is not proper, then by Theorem~\ref{theo-ec},
there must exist an end component $C \subs S \setm \wab$ in $G_{\xi_1}$.
Then, from $C$, player~2 can avoid reaching $T\union W_2$ by 
repeatedly applying 
a pure selector $\xi_2$ that at every state $s \in C$ deterministically 
chooses a move $a_2 \in \mov_2(s)$ such that
$\dest(s,\xi_1,a_2)\subs C$.
The existence of a suitable $\xi_2(s)$ for all states 
$s \in C$ follows from the definition of end component. 
\qed
\end{proof}

\medskip
The following lemma shows that the selector that chooses all 
available moves uniformly at random is proper. 
This fact will be used later to initialize our strategy-improvement
algorithm. 

\begin{lemma}{}\label{lemm:proper2}
Let $\xi_1^\unif$ be the player-1 selector that at all states 
$s \in S \setminus \wab$ chooses all moves in $\mov_1(s)$ uniformly 
at random.
Then $\xi_1^\unif$ is proper.
\end{lemma}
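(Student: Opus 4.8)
The plan is to prove the contrapositive, reusing verbatim the setup of the two preceding results. Suppose $\ov{\xi}_1^{\unif}$ is not proper. Then, applying Theorem~\ref{theo-ec} to the $2$-MDP $G_{\xi_1^{\unif}}$ exactly as in the proof of the preceding lemma, there is an end component $C$ with $C \subs S \setminus \wab$; fix a witnessing player-2 selector $\xi_2 \in \Sel_2$, so that the nonempty set $C$ is a closed recurrent class of the Markov chain $G_{\xi_1^{\unif},\xi_2}$.

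The only step that actually uses the definition of $\xi_1^{\unif}$ --- and the one place where a little care is needed --- is the following. Fix $s \in C$. Since $C$ is closed in $G_{\xi_1^{\unif},\xi_2}$, for every $t \notin C$ we have $\sum_{a_1 \in \mov_1(s)} \sum_{a_2 \in \mov_2(s)} \xi_1^{\unif}(s)(a_1)\cdot\xi_2(s)(a_2)\cdot\trans(s,a_1,a_2)(t) = 0$. Every summand is nonnegative, and $\xi_1^{\unif}(s)(a_1) > 0$ for \emph{every} $a_1 \in \mov_1(s)$ by definition of $\xi_1^{\unif}$; hence each summand with $a_1 \in \mov_1(s)$ and $a_2 \in \supp(\xi_2(s))$ vanishes, that is, $\dest(s,a_1,a_2) \subs C$ for all $a_1 \in \mov_1(s)$ and all $a_2 \in \supp(\xi_2(s))$. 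Choosing an arbitrary $a_2^{\ast}(s) \in \supp(\xi_2(s)) \subs \mov_2(s)$ for each $s \in C$, we obtain $\dest(s,a_1,a_2^{\ast}(s)) \subs C$ for all $a_1 \in \mov_1(s)$.

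I would then finish by the standard confinement argument for $W_2$. Against the pure memoryless player-2 strategy that plays $a_2^{\ast}(s)$ at every $s \in C$ (and arbitrarily off $C$), any play starting in $C$ stays in $C$ forever, whatever player~1 does, by the previous paragraph; since $C \subs S \setminus \wab \subs S \setminus T$, no such play ever visits $T$, so $\var(s) = 0$, i.e.\ $s \in W_2 \subs \wab$, for every $s \in C$ --- contradicting $\emptyset \neq C \subs S \setminus \wab$. Hence $\ov{\xi}_1^{\unif}$ is proper. Equivalently, one can phrase the last step combinatorially: the displayed property makes $C$ satisfy the defining condition of the iterates $W_2^k$ (for each $s \in C$ there is $a_2 \in \mov_2(s)$ with $\dest(s,a_1,a_2) \subs C$ for all $a_1$, and $C \subs S \setminus T = W_2^0$), so $C \subs W_2^k$ for all $k$ by induction and thus $C \subs W_2 = \lim_{k\to\infty} W_2^k$ --- the same contradiction. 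The crux, such as it is, lies in the middle paragraph: turning ``$C$ is closed under the pair of selectors $(\xi_1^{\unif},\xi_2)$'' into ``$C$ is closed under $\xi_2$ against every player-1 move'', which is immediate precisely because $\xi_1^{\unif}$ has full support on $\mov_1(s)$; everything around it is routine.
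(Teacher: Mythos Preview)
Your proof is correct and follows essentially the same approach as the paper: assume non-properness, extract an end component $C \subs S \setminus (T \cup W_2)$ in $G_{\xi_1^{\unif}}$, use the full support of $\xi_1^{\unif}$ to conclude that player~2 can confine the game to $C$ against \emph{all} player-1 strategies, and derive the contradiction $C \subs W_2$. You have simply made explicit the step the paper compresses into the sentence ``By the definition of $\xi_1^{\unif}$, player~2 can ensure that the game does not leave $C$ regardless of the moves chosen by player~1''; your alternative finish via the iterates $W_2^k$ is a pleasant addition but not needed.
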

\begin{proof}
Assume towards contradiction that $\xi_1^\unif$ is not proper.
From Theorem~\ref{theo-ec}, in the 2-MDP $G_{\xi_1^\unif}$ there must
be an end component $C \subs S \setm \wab$. 
Then, when player~1 follows the strategy $\overline{\xi}_1^\unif$,
player~2 can confine the game to~$C$.
By the definition of $\xi_1^\unif$, player~2 can ensure that the game does
not leave $C$ regardless of the moves chosen by player~1, and thus, for
{\em all\/} strategies of player~1. 
This contradicts the fact that $W_2$ contains all states from which
player~2 can ensure that $T$ is not reached.
\qed
\end{proof}

\medskip
The following lemma shows that if the player-1 selector $\zeta_k$ computed 
by the value-iteration scheme (\ref{eq-valiter}) is proper, then the 
player-1 strategy $\overline{\zeta}_k$ guarantees the value $u_k$, for 
all $k\ge 0$.

\begin{lemma}{} \label{lem-selector} 
Let $v$ be a valuation such that $\Pre_1(v) \geq v$ and 
$v(s) = 0$ for all states $s \in W_2$.
Let $\xi_1$ be a selector for player~1 such that 
$\Pre_{1:\xi_1}(v) = \Pre_1(v)$.
If $\xi_1$ is proper, then 
for all player-2 strategies $\stra_2$, we have 
$\Prb^{\overline{\xi}_1,\stra_2}(\Reach(T)) \geq v$.
\end{lemma}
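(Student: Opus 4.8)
The plan is to fix the memoryless player-1 strategy $\overline{\xi}_1$, an arbitrary player-2 strategy $\stra_2$, and an arbitrary state $s$, and to reason directly about the bounded process $v(\randpath_0), v(\randpath_1), v(\randpath_2), \ldots$ under $\Prb_s^{\overline{\xi}_1,\stra_2}$. The first step is a one-step inequality. Since $\Pre_{1:\xi_1}(v) = \Pre_1(v) \geq v$ and $\Pre_{1:\xi_1}(v)(t) = \inf_{\xi_2 \in \Sel_2} \Pre_{\xi_1,\xi_2}(v)(t)$, we have $\Pre_{\xi_1,\xi_2}(v)(t) \geq v(t)$ for every player-2 selector $\xi_2$ and every state $t$. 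Hence, conditioning on the history $w \cdot t$ of the play through time $n$ (so that $\randpath_n = t$) and letting $\xi_2$ be any selector with $\xi_2(t) = \stra_2(w\cdot t)$, the fact that $\Pre_{\xi_1,\xi_2}(v)(t)$ is exactly the one-step expectation of $v$ gives
\[
\E_s^{\overline{\xi}_1,\stra_2}\bigl[ v(\randpath_{n+1}) \,\big|\, \randpath_0,\ldots,\randpath_n \bigr] \;=\; \Pre_{\xi_1,\xi_2}(v)(\randpath_n) \;\geq\; v(\randpath_n) ;
\]
in other words $v(\randpath_0), v(\randpath_1), \ldots$ is a bounded submartingale under $\Prb_s^{\overline{\xi}_1,\stra_2}$. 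Taking expectations and iterating, together with $\E_s^{\overline{\xi}_1,\stra_2}[v(\randpath_0)] = v(s)$, yields $\E_s^{\overline{\xi}_1,\stra_2}[v(\randpath_n)] \geq v(s)$ for all $n \geq 0$.

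The second step uses properness to pass to the limit. Writing $\wab = T \cup W_2$ and recalling that every state of $\wab$ is absorbing, properness of $\xi_1$ gives $\Prb_s^{\overline{\xi}_1,\stra_2}(\Reach(\wab)) = 1$, so $\Prb_s^{\overline{\xi}_1,\stra_2}$-almost every play is eventually constant, equal to some random state $z \in \wab$. Thus $v(\randpath_n) \to v(z)$ almost surely, and since $0 \leq v(\randpath_n) \leq 1$, the bounded convergence theorem together with the first step gives $\E_s^{\overline{\xi}_1,\stra_2}[v(z)] = \lim_n \E_s^{\overline{\xi}_1,\stra_2}[v(\randpath_n)] \geq v(s)$. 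Finally $v(z) \leq [T](z)$ almost surely: if $z \in W_2$ then $v(z) = 0$ by hypothesis, while if $z \in T$ then $v(z) \leq 1 = [T](z)$. Since $T$ and $W_2$ are disjoint and absorbing, the event $\{z \in T\}$ coincides with $\Reach(T)$ up to a null set, so
\[
\Prb_s^{\overline{\xi}_1,\stra_2}(\Reach(T)) \;=\; \E_s^{\overline{\xi}_1,\stra_2}\bigl[ [T](z) \bigr] \;\geq\; \E_s^{\overline{\xi}_1,\stra_2}[v(z)] \;\geq\; v(s) .
\]
As $s$ and $\stra_2$ were arbitrary, this establishes $\Prb^{\overline{\xi}_1,\stra_2}(\Reach(T)) \geq v$.

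The argument uses only the monotonicity of $\Pre$, the reading of $\Pre_{\xi_1,\xi_2}(v)$ as a one-step expectation, and elementary facts about bounded sequences of random variables, so no analytic machinery is needed. The point that genuinely needs care is the role of the absorbing-states reduction: it is exactly what lets $\lim_n v(\randpath_n)$ be replaced by the value of $v$ at a genuine limit state $z \in \wab$ and the event $\{z \in T\}$ be identified with $\Reach(T)$. Without that reduction one would instead introduce the hitting time $\tau$ of $\wab$, which is almost surely finite by properness, apply optional stopping to the bounded submartingale $v(\randpath_n)$ to obtain $\E_s^{\overline{\xi}_1,\stra_2}[v(\randpath_\tau)] \geq v(s)$, and then bound $v(\randpath_\tau) \leq [T](\randpath_\tau)$ as above --- an equivalent but slightly longer route. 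I do not anticipate any further obstacle.
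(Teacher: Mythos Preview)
Your proof is correct and follows essentially the same route as the paper: establish the one-step inequality $\Pre_{\xi_1,\xi_2}(v)\geq v$, iterate to get $\E_s^{\overline{\xi}_1,\stra_2}[v(\randpath_n)]\geq v(s)$, then use properness together with the absorbing-states reduction and the bounds $v=0$ on $W_2$, $v\leq 1$ on $T$ to pass to the limit. The only cosmetic difference is packaging: you phrase the iterated inequality as a bounded submartingale and invoke bounded convergence, whereas the paper writes out the induction on $\E[v(\randpath_k)]$ directly and decomposes the expectation over the events $\{\randpath_k\in T\}$, $\{\randpath_k\in S\setminus(T\cup W_2)\}$, $\{\randpath_k\in W_2\}$ before taking $k\to\infty$.
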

\begin{proof}
Consider an arbitrary player-2 strategy $\stra_2$, and for $k \geq 0$, let 
\[
  v_k = \E^{\overline{\xi}_1,\stra_2}\bigl(v(\randpath_k)\bigr)
\]
be the expected value of $v$ after $k$ steps under $\overline{\xi}_1$
and $\stra_2$. 
By induction on $k$, we can prove $v_k \geq v$ for all $k \geq 0$.
In fact, $v_0 = v$, and for $k \geq 0$, we have 
\[
  v_{k+1} \geq \Pre_{1:\xi_1}(v_k) \geq \Pre_{1:\xi_1}(v) =\Pre_1(v)
  \geq v.
\] 
For all $k \geq 0$ and $s \in S$, we can write $v_k$ as
\begin{align*}
  v_k(s) & = \E_s^{\overline{\xi}_1,\stra_2}\bigl(v(\randpath_k) \mid
  \randpath_k \in T \bigr) 
  \cdot \Prb_s^{\overline{\xi}_1,\stra_2}\bigl(\randpath_k \in T\bigr) \\
  & +   %%\bigg( 
  \E_s^{\overline{\xi}_1,\stra_2}\bigl(v(\randpath_k) \mid
  \randpath_k \in S \setm (T \union W_2) \bigr) 
  \cdot 
 %%\\ & \qquad  
 \Prb_s^{\overline{\xi}_1,\stra_2}\bigl(\randpath_k \in S \setm (T
  \union W_2) \bigr) %%%\bigg) 
 \\
  & + \E_s^{\overline{\xi}_1,\stra_2}\bigl(v(\randpath_k) \mid
  \randpath_k \in W_2 \bigr) 
  \cdot \Prb_s^{\overline{\xi}_1,\stra_2}\bigl(\randpath_k \in W_2\bigr).
\end{align*}
Since $v(s) \leq 1$ when $s \in T$, the first term on the right-hand
side is at most 
$\Prb_s^{\overline{\xi}_1,\stra_2}\bigl(\randpath_k \in T\bigr)$.
For the second term, we have 
$\lim_{k \to \infty} \Prb^{\overline{\xi}_1,\stra_2}\bigl(\randpath_k
\in S \setm (T \union W_2) \bigr) = 0$ by hypothesis, because
$\Prb^{\overline{\xi}_1,\stra_2}(\Reach(T \union W_2)) = 1$ and 
every state $s \in (T \union W_2)$ is absorbing.
Finally, the third term on the right hand side is~0, as $v(s) = 0$ for all 
states $s \in W_2$. 
Hence, taking the limit with $k \to \infty$, we obtain 
\begin{align*}
  \Prb^{\overline{\xi}_1,\stra_2}\bigl(\Reach(T)\bigr)
  & =
  \lim_{k \to \infty} 
  \Prb^{\overline{\xi}_1,\stra_2}\bigl(\randpath_k \in T\bigr) 
%%\\& 
  \geq 
  \lim_{k \to \infty} v_k 
  \geq v,
\end{align*}
where the last inequality follows from $v_k \geq v$ for all $k \geq 0$. 
Note that $v_k= \Prb^{\overline{\xi}_1,\stra_2}\bigl(\randpath_k \in T\bigr)$,
and since $T$ is absorbing it follows that $v_k$ is non-deccreasing (monotonic)
and is bounded by~1 (since it is a probability measure).
Hence the limit of $v_k$ is defined.  
The desired result follows.
\qed
\end{proof}

\subsection{From value iteration to optimal selectors} 
\label{sec-optisel} 

\noindent
In this section we show how to obtain memoryless $\vare$-optimal strategies 
from the value-iteration scheme, for $\vare>0$.
In the following section the existence such strategies would be established
using a strategy-iteration scheme. 
The strategy-iteration scheme has been used previously to establish existence
of memoryless $\vare$-optimal strategies, for $\vare>0$ 
(for example see~\cite{EY06} and also results of Condon~\cite{Con92} for turn-based games).
However our proof which constructs the memoryless strategies based on 
value-iteration scheme is new.
Considering again the value-iteration scheme (\ref{eq-valiter}), 
since $\var = \lim_{k \to \infty} u_k$, for every $\vare>0$ there 
is a $k$ such that $u_k(s) \geq u_{k-1}(s) \geq \var(s) -\vare$ at all 
states $s \in S$.  
Lemma~\ref{lem-selector} indicates that, in order to construct a
memoryless $\vare$-optimal strategy, we need to construct from $u_{k-1}$ a
player-1 selector $\xi_1$ such that: 
\begin{enumerate}

\item $\xi_1$ is value-optimal for $u_{k-1}$, that is,
  $\Pre_{1:\xi_1}(u_{k-1}) = \Pre_1(u_{k-1})=u_k$; and
\item $\xi_1$ is proper. 

\end{enumerate}
To ensure the construction of a value-optimal, proper selector, we
need some definitions. 
For $r>0$, the \emph{value class} 
\[
  U^k_r =\set{s \in S \mid u_{k}(s) =r }
\]
consists of the states with value $r$ under the valuation $u_{k}$. 
Similarly we define $U^k_{\bowtie r} =\set{s \in S \mid u_{k}(s) \bowtie r}$, 
for $\bowtie \, \in \set{<,\leq,\geq,>}$. 
For a state $s \in S$, let 
$\ell_k(s) = \min\set{j \leq k \mid u_j(s) = u_k(s)}$ be the 
\emph{entry time} of $s$ in $U^k_{u_k(s)}$, 
that is, the least iteration $j$ in which the state $s$ has the same
value as in iteration $k$. 
For $k \geq 0$, we define the player-1 selector $\eta_k$ as follows:
if $\ell_k(s) > 0$, then 
\[
  \eta_k(s) = \eta_{\ell_k(s)}(s) = \arg \max_{\xi_1 \in \Sel_1} 
  \inf_{\xi_2 \in \Sel_2} \Pre_{\xi_1 ,\xi_2}(u_{\ell_k(s)-1});
\]
otherwise, if $\ell_k(s) = 0$, then $\eta_k(s) = \eta_{\ell_k(s)}(s) =
  \xi_1^\unif(s)$ 
(this definition is arbitrary, and it does not affect
the remainder of the proof). 
In words, the selector $\eta_k(s)$ is an optimal selector for $s$ at
the iteration $\ell_k(s)$.
It follows easily that $u_k=\Pre_{1:\eta_k}(u_{k-1}) $, that is, 
$\eta_k$ is also value-optimal for $u_{k-1}$, satisfying
the first of the above conditions. 

To conclude the construction, we need to prove that for $k$
sufficiently large (namely, for $k$ such that $u_k(s) > 0$ at all 
states $s\in S \setm \wab$), the selector $\eta_k$ is proper. 
To this end we use Theorem~\ref{theo-ec}, and show that for
sufficiently large $k$ no end component of $G_{\eta_k}$ is entirely 
contained in $S \setm \wab$.\footnote{%
  In fact, the result holds for all $k$,
  even though our proof, for the sake of a simpler argument, does not
  show it.}
To reason about the end components of $G_{\eta_k}$, 
for a state $s \in S$ and a player-2 move $a_2 \in \mov_2(s)$, 
we write
\[
\dest_k(s,a_2) =\bigcup_{a_1 \in \supp(\eta_k(s))} \dest(s,a_1,a_2)
\]
for the set of possible successors of state $s$ when player~1 follows the 
strategy $\ov{\eta}_k$, and player~2 chooses the move $a_2$. 

\begin{lemma}{}\label{lem-1}
%%%tah CHECK
Let $0 < r \leq 1$ and $k \geq 0$,
%%% 
and consider a state $s \in S \setminus (T \union W_2)$
such that $s \in U^k_r$.
For all moves $a_2 \in \mov_2(s)$, we have: 
\begin{enumerate} 
\item either $\dest_k(s,a_2) \inters U^k_{> r} \neq \emptyset$,
\item or $\dest_k(s,a_2) \subs U^k_r$, and there is a state
  $t \in \dest_k(s,a_2)$ with $\ell_k(t) < \ell_k(s)$. 
\end{enumerate}
\end{lemma}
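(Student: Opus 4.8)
The plan is to unfold the definition of $\eta_k$ at the state $s$ and carefully track entry times. Fix $0<r\le 1$, $k\ge 0$ and $s\in U^k_r\setminus(T\cup W_2)$. Since $s\notin T$ and $r>0$, we have $u_k(s)=r>0$, so $\ell_k(s)\ge 1$ (because $u_0=[T]$ is $0$ outside $T$). By definition, $\eta_k(s)=\eta_{\ell_k(s)}(s)$ is an optimal selector at iteration $\ell_k(s)$, i.e.\ it realizes $\Pre_1(u_{\ell_k(s)-1})(s)=u_{\ell_k(s)}(s)=r$ (the last equality by definition of the entry time $\ell_k(s)$). Write $j=\ell_k(s)$ for brevity. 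The first key step is the inequality
\[
  r = u_j(s) = \Pre_{1:\eta_j(s)}(u_{j-1})(s)
      = \inf_{a_2\in\mov_2(s)}\ \sum_{t\in S} u_{j-1}(t)\cdot
        \Bigl(\textstyle\sum_{a_1\in\supp(\eta_k(s))}\trans(s,a_1,a_2)(t)\,\eta_k(s)(a_1)\Bigr),
\]
so for \emph{every} $a_2\in\mov_2(s)$ the corresponding weighted average of $u_{j-1}$ over $\dest_k(s,a_2)$ is at least $r$. Here I use that the summand is supported on $\dest_k(s,a_2)$ by definition of that set.

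The second step is to transfer this from $u_{j-1}$ to $u_k$. Fix $a_2$ and let $D=\dest_k(s,a_2)$. Since $u_{j-1}\le u_k$ pointwise (monotonicity of value iteration), if some $t\in D$ had $u_k(t)>r$ we are in case~1 and done; so assume $u_k(t)\le r$ for all $t\in D$, i.e.\ $D\subseteq U^k_{\le r}$. Now I must show $D\subseteq U^k_r$ and that some $t\in D$ has $\ell_k(t)<j=\ell_k(s)$. Suppose instead that every $t\in D$ has $\ell_k(t)\ge j$. Then for each $t\in D$, by definition of the entry time, $u_{j-1}(t)<u_k(t)$ if $\ell_k(t)>j-1$... more precisely $u_{j-1}(t) < u_j(t)$ would need care; the cleaner route is: if $\ell_k(t)\ge j$ then $u_{j-1}(t) \le u_j(t)$ and in fact the value of $t$ is still strictly below $u_k(t)$ at iteration $j-1$ unless $\ell_k(t)=j-1<j$, contradiction — so $u_{j-1}(t)\le u_k(t)\le r$, with the bound $u_{j-1}(t)\le r$ in all cases and $u_{j-1}(t)<u_k(t)$ whenever $u_k(t)$ hasn't yet been reached. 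Combining: the weighted average of the $u_{j-1}(t)$ over $t\in D$ is at most $r$, and it equals $r$ only if $u_{j-1}(t)=r$ for all $t$ in the support; but $u_{j-1}(t)=r=u_k(t)$ forces $\ell_k(t)\le j-1<j$, again contradicting $\ell_k(t)\ge j$. Hence some $t\in D$ has $\ell_k(t)<j$. Finally, to get $D\subseteq U^k_r$ (not merely $\subseteq U^k_{\le r}$): the average over $D$ of $u_{j-1}(t)$ is $\ge r$ while each $u_{j-1}(t)\le u_k(t)\le r$, so $u_k(t)=r$ for every $t\in D$ with positive weight, and since $\eta_k(s)$ has full support on the moves it mixes, every $t\in D$ receives positive weight under some $a_1\in\supp(\eta_k(s))$; thus $D\subseteq U^k_r$.

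The main obstacle I anticipate is pinning down exactly the elementary inequality relating $u_{j-1}(t)$, $u_k(t)$ and $\ell_k(t)$ — specifically being precise that "$\ell_k(t)\ge j$" forces $u_{j-1}(t)$ to be \emph{strictly} smaller than its eventual value $u_k(t)$ unless $\ell_k(t)=j-1$, which is the contradiction that drives case~2. Everything else is bookkeeping: using $u_{j-1}\le u_k$, the $\inf$-form of $\Pre_1$, and the fact that $\eta_k(s)$ is value-optimal at iteration $j$. I would present the argument by fixing $a_2$, writing the one-step expectation explicitly as a convex combination of $\{u_{j-1}(t): t\in \dest_k(s,a_2)\}$, bounding each term, and reading off the two alternatives from whether strict inequality is forced anywhere.
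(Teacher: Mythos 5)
Your proof is correct and follows essentially the same route as the paper's: both rest on the identity $r=u_{\ell_k(s)}(s)=\Pre_{1:\eta_k}(u_{\ell_k(s)-1})(s)$, the monotonicity $u_{\ell_k(s)-1}\le u_k$, and the convex-combination argument forcing either a successor in $U^k_{>r}$ or $u_{\ell_k(s)-1}(t)=r=u_k(t)$ (hence $\ell_k(t)<\ell_k(s)$) for every successor. The only difference is cosmetic (you split on whether $\dest_k(s,a_2)$ meets $U^k_{>r}$, the paper on whether it is contained in $U^k_r$), and your detour via the supposition ``every $t$ has $\ell_k(t)\ge j$'' is unnecessary given that your final paragraph already yields the stronger conclusion directly.
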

\begin{proof}
For convenience, let $m = \ell_k(s)$, and 
consider any move $a_2 \in \mov_2(s)$. 
\begin{itemize} 
\item Consider first the case that $\dest_k(s,a_2) \not\subs U^k_r$.
Then, it cannot be that $\dest_k(s,a_2) \subs U^k_{\leq r}$;
otherwise, for all states $t \in \dest_k(s,a_2)$, we would have 
$u_k(t) \leq r$, and there would be at least one state 
$t \in \dest_k(s,a_2)$
such that $u_k(t) < r$, contradicting $u_k(s) = r$ and
$\Pre_{1:\eta_k}(u_{k-1}) = u_k$. 
So, it must be that $\dest_k(s,a_2) \inters U^k_{>r} \neq \emptyset$. 

\item Consider now the case that $\dest_k(s,a_2) \subs U^k_r$.
Since $u_m \leq u_k$, due to the monotonicity of the
$\Pre_1$ operator and (\ref{eq-valiter}), we have that 
$u_{m-1}(t) \leq r$ for all states $t \in \dest_k(s,a_2)$.
From $r = u_k(s) = u_m(s) = \Pre_{1:\eta_k}(u_{m-1})$, 
it follows that $u_{m-1}(t) = r$ for all states $t \in \dest_k(s,a_2)$,
implying that $\ell_k(t) < m$ for all states $t \in \dest_k(s,a_2)$. 
\qed
\end{itemize}
\end{proof}

\medskip
The above lemma states that under $\eta_k$, from
each state $i \in U^k_r$ with $r > 0$ we are guaranteed a probability
bounded away from~0 of either moving to a higher-value class
$U^k_{>r}$, or of moving to states within the value class that have a
strictly lower entry time. 
Note that the states in the target set $T$ are all in $U^0_1$: they 
have entry-time~0 in the value class for value~1. 
This implies that every state in $S \setm W_2$ has a probability
bounded above zero of reaching $T$ in at most $n = |S|$ steps, so that
the probability of staying forever in $S \setm (T \union W_2)$ is~0.
To prove this fact formally, we analyze the end components of
$G_{\eta_k}$ in light of Lemma~\ref{lem-1}. 

\begin{lemma}{} \label{lem-2}
For all $k \geq 0$, if for all states 
$s \in S \setm W_2$ we have $u_{k-1}(s)>0$,  
then for all player-2 strategies $\stra_2$, we have 
$\Prb^{\overline{\eta}_k,\stra_2}\bigl(\Reach (T \union W_2)) = 1$. 
\end{lemma}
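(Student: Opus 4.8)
The plan is to use Theorem~\ref{theo-ec} (applied to the 2-MDP $G_{\eta_k}$): since $\Prb^{\ov{\eta}_k,\stra_2}(\infi(\calc)) = 1$ for the set $\calc$ of end components of $G_{\eta_k}$, it suffices to show that every end component $C$ of $G_{\eta_k}$ is contained in $T \union W_2$. So assume toward a contradiction that there is an end component $C$ of $G_{\eta_k}$ with $C \cap (S \setm (T \union W_2)) \neq \emptyset$. Since the states of $T$ are absorbing and $W_2$ is closed under player-2's confining strategy, one should first argue that in fact $C \subs S \setm (T \union W_2)$ (an absorbing target state forms its own trivial end component and cannot be part of a larger recurrent class; states in $W_2$ are likewise absorbing by the standing assumption of Section~4.1, so no end component mixes $W_2$ with other states). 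Thus $C \subs S \setm (T \union W_2)$.

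Next I would extract the crucial numerical parameters from $C$. Let $r^* = \max_{s \in C} u_k(s)$; since $C \subs S \setm W_2$ and $u_{k-1}(s) > 0$ for all such $s$ by hypothesis, and $u_k \ge u_{k-1}$, we have $r^* > 0$. Let $C^* = C \cap U^k_{r^*}$ be the nonempty set of states of $C$ achieving this maximal value, and among those pick $s^* \in C^*$ minimizing the entry time $\ell_k(s^*)$. Now apply Lemma~\ref{lem-1} to $s^* \in U^k_{r^*}$, $r^* > 0$: for every move $a_2 \in \mov_2(s^*)$, either (i) $\dest_k(s^*,a_2) \cap U^k_{>r^*} \neq \emptyset$, or (ii) $\dest_k(s^*,a_2) \subs U^k_{r^*}$ and some successor $t \in \dest_k(s^*,a_2)$ has $\ell_k(t) < \ell_k(s^*)$. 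Because $C$ is an end component, there is a player-2 move $a_2 \in M_2(s^*)$ (in the notation of the end-component characterization) such that all successors under $\eta_k$ and $a_2$ lie in $C$ — i.e. $\dest_k(s^*,a_2) \subs C$. Case (i) is then impossible, since all states of $C$ have $u_k$-value at most $r^*$, so no successor can lie in $U^k_{>r^*}$. Hence case (ii) holds: there is $t \in \dest_k(s^*,a_2) \subs C$ with $u_k(t) = r^*$ (so $t \in C^*$) and $\ell_k(t) < \ell_k(s^*)$, contradicting the minimality of $\ell_k(s^*)$ over $C^*$. This contradiction shows no such $C$ exists, so every end component is contained in $T \union W_2$, and by Theorem~\ref{theo-ec} the play reaches $\infi(\calc) \subs 2^{T \union W_2}$ almost surely; in particular $\Prb^{\ov{\eta}_k,\stra_2}(\Reach(T \union W_2)) = 1$ for every $\stra_2$.

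The main obstacle I anticipate is the bookkeeping around the end-component characterization: Lemma~\ref{lem-1} is phrased in terms of $\dest_k(s,a_2)$ for a single move $a_2$, whereas an end component allows player~2 a whole set $M_2(s^*)$ of moves at $s^*$. The key observation that makes this go through is that the "closed" condition for an end component of $G_{\eta_k}$ says precisely that for each $a_2 \in M_2(s^*)$ all $\eta_k$-successors lie in $C$ — so I only need one such move, not all of them, and Lemma~\ref{lem-1} applied to that one move already yields the contradiction. A secondary point to handle carefully is the reduction $C \subs S \setm (T \union W_2)$: this relies on the standing assumption (made right after the definition of $W_2$ in Section~4.1) that all states in $T$ and $W_2$ are absorbing, which guarantees that a nontrivial end component cannot contain any such state, and also ensures the "max" defining $r^*$ is taken over a set where $u_{k-1} > 0$, so that $r^* > 0$ and Lemma~\ref{lem-1} is applicable.
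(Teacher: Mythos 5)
Your proposal is correct and follows essentially the same route as the paper: reduce via Theorem~\ref{theo-ec}/Corollary~\ref{coro:prob1} to showing no end component of $G_{\eta_k}$ lies in $S \setm (T \union W_2)$, then derive a contradiction by taking the maximal value class met by such a component and a state of minimal entry time $\ell_k$ there, and applying Lemma~\ref{lem-1}. If anything, your version is slightly more carefully stated than the paper's (you minimize $\ell_k$ over $C \inters U^k_{r^*}$ and explicitly use the closedness of the end component to rule out both cases), but the argument is the same.
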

\begin{proof}
Since every state $s \in (T \union W_2)$ is absorbing,
to prove this result, in view of Corollary~\ref{coro:prob1}, it suffices to
show that no end component of $G_{\eta_k}$ is entirely contained
in $S \setm (T \union W_2)$. 
Towards the contradiction, assume there is such an end component 
$C \subs S \setm (T \union W_2)$. 
Then, we have 
$C \subs U^k_{[r_1,r_2]}$ with $C \inters U_{r_2} \neq \emptyset$, 
for some $0 < r_1 \leq r_2 \leq 1$,
where $U^k_{[r_1,r_2]} = U^k_{\geq r_1} \inters U^k_{\leq r_2}$ is the
union of the value classes for all values in the interval $[r_1,r_2]$. 
Consider a state $s \in U^k_{r_2}$ with minimal $\ell_k$, that is, such
that $\ell_k(s) \leq \ell_k(t)$ for all other states $t \in U^k_{r_2}$.
From Lemma~\ref{lem-1}, it follows that for every move $a_2 \in \mov_2(s)$,
there is a state $t \in \dest_k(s,a_2)$ such that 
(i)~either $t \in U^k_{r_2}$ and $\ell_k(t) < \ell_k(s)$, 
(ii)~or $t \in U^k_{>r_2}$. 
In both cases, we obtain a contradiction. 
\qed
\end{proof}

\medskip
The above lemma shows that $\eta_k$ satisfies both requirements
for optimal selectors spelt out at the beginning of
Section~\ref{sec-optisel}.
Hence, $\eta_k$ guarantees the value $u_k$.
This proves the existence of memoryless $\vare$-optimal strategies for
concurrent reachability games. 

\begin{theorem}{(Memoryless $\vare$-optimal strategies).}
For every $\vare>0$, memoryless $\vare$-optimal strategies exist for all
%reachability objectives on all 
concurrent games with reachability objectives.
\end{theorem}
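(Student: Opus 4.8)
The plan is to assemble the ingredients already developed in Sections~\ref{sec-valitersel} and~\ref{sec-optisel}; essentially no new idea is needed, only a careful choice of the iteration index. Fix $\vare > 0$. Recall from the value-iteration scheme~(\ref{eq-valiter}) that $u_0 = [T]$, that $u_{k+1} = \Pre_1(u_k)$, that the sequence $(u_k)_k$ is pointwise nondecreasing by monotonicity of $\Pre_1$, and that $\lim_{k\to\infty} u_k = \var$ by the result of~\cite{dAM04}. Since $\var(s) > 0$ for every $s \in S \setm W_2$ by the very definition of $W_2$, and since $S$ is finite, I would pick a single index $k$ large enough that simultaneously (a)~$u_{k-1}(s) \geq \var(s) - \vare$ for all $s \in S$, and (b)~$u_{k-1}(s) > 0$ for all $s \in S \setm W_2$.

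For this $k$, I would take the selector $\eta_k$ constructed in Section~\ref{sec-optisel}. By construction it is value-optimal for $u_{k-1}$, i.e.\ $\Pre_{1:\eta_k}(u_{k-1}) = \Pre_1(u_{k-1}) = u_k$. By condition~(b) and Lemma~\ref{lem-2}, for every player-2 strategy $\stra_2$ we have $\Prb^{\overline{\eta}_k,\stra_2}\bigl(\Reach(T \union W_2)\bigr) = 1$; since the states of $W_2$ and $T$ are absorbing, this is exactly the statement that $\overline{\eta}_k$ is a proper strategy, hence $\eta_k$ is a proper selector. Thus $\eta_k$ meets both requirements spelled out at the start of Section~\ref{sec-optisel}.

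Finally I would invoke Lemma~\ref{lem-selector} with $v = u_{k-1}$. Its hypotheses hold: $\Pre_1(u_{k-1}) = u_k \geq u_{k-1}$ by monotonicity; $u_{k-1}(s) = 0$ for all $s \in W_2$ because $u_0 = [T]$ vanishes on $W_2$ (as $W_2 \inters T = \emptyset$) and every $\Pre_1$-iteration preserves value $0$ on the absorbing set $W_2$; and $\eta_k$ is a proper, value-optimal selector for $u_{k-1}$ as argued above. Lemma~\ref{lem-selector} then gives, for all $\stra_2 \in \Stra_2$, $\Prb^{\overline{\eta}_k,\stra_2}(\Reach(T)) \geq u_{k-1} \geq \var - \vare$, and taking the infimum over $\stra_2$ yields $\winval{1}^{\overline{\eta}_k}(\Reach(T))(s) \geq \var(s) - \vare$ for every state $s$. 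Hence the memoryless strategy $\overline{\eta}_k$ is $\vare$-optimal, which proves the theorem. The only genuinely delicate point in this wrap-up is that a single index $k$ can be chosen to meet requirements (a) and (b) at once, and that the boundary condition $v \equiv 0$ on $W_2$ required by Lemma~\ref{lem-selector} is actually satisfied by $u_{k-1}$; all the real combinatorial work (the end-component/entry-time analysis) has already been carried out in Lemmas~\ref{lem-1} and~\ref{lem-2}.
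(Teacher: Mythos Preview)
Your proposal is correct and follows essentially the same approach as the paper: choose $k$ large enough that $u_{k-1}$ is both $\vare$-close to $\var$ and strictly positive on $S \setm W_2$, then combine Lemma~\ref{lem-2} (properness of $\eta_k$) with Lemma~\ref{lem-selector} (a proper value-optimal selector achieves the valuation). You are in fact more explicit than the paper in verifying the hypotheses of Lemma~\ref{lem-selector}, notably that $u_{k-1}$ vanishes on $W_2$ and that a single $k$ can be chosen by finiteness of $S$.
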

\begin{proof} 
Consider a concurrent reachability game with target $T \subs S$. 
Since $\lim_{k \to \infty} u_k = \var$, 
for every $\vare > 0$ we can find $k \in \nats$ such that the 
following two assertions hold:
\begin{gather*} 
  \max_{s \in S} \bigl( \var(s) - u_{k-1}(s) \bigr) < \vare \\
  \min_{s \in S\setminus W_2} u_{k-1}(s) > 0
\end{gather*}
By construction, $\Pre_{1:\eta_k}(u_{k-1}) = \Pre_1(u_{k-1})=u_k$.
Hence, from Lemma~\ref{lem-selector} and Lemma~\ref{lem-2}, 
for all player-2 strategies $\stra_2$, we have 
$\Prb^{\overline{\eta}_k,\stra_2}(\Reach(T)) \geq u_{k-1}$, 
leading to the result. 
\qed
\end{proof}

\section{Strategy Improvement Algorithm for Concurrent Reachability Games}\label{sec-stra-improve-reach}

\noindent
In the previous section, we provided a proof of the existence of
memoryless $\vare$-optimal strategies for all $\vare > 0$, on the
basis of a value-iteration scheme. 
In this section we present a strategy-improvement 
algorithm for concurrent games with reachability objectives. 
The algorithm will produce a sequence of selectors 
$\gamma_0, \gamma_1, \gamma_2, \ldots$ for player 1, such that: 
\begin{enumerate}
  \item \label{l-improve-1} for all $i \geq 0$, we have 
  $\vars{\overline{\gamma}_i} \leq \vars{\overline{\gamma}_{i+1}}$;

  \item \label{l-improve-3} 
  if there is $i \geq 0$ such that $\gamma_i = \gamma_{i+1}$, 
  then $\vars{\overline{\gamma}_i} = \var$; and 

  \item \label{l-improve-2} 
  $\lim_{i \rightarrow \infty} \vars{\overline{\gamma}_i} = \var$. 
\end{enumerate}
Condition~\ref{l-improve-1} guarantees that the algorithm computes a
sequence of monotonically improving selectors. 
Condition~\ref{l-improve-3} guarantees that if a selector cannot be
improved, then it is optimal. 
Condition~\ref{l-improve-2} guarantees that the value guaranteed by
the selectors converges to the value of the game, or equivalently,
that for all $\vare > 0$, there is a number $i$ of iterations
such that the memoryless player-1 strategy $\ov{\gamma}_i$ is 
$\vare$-optimal. 
Note that for concurrent reachability games, there may be no $i \geq
0$ such that $\gamma_i = \gamma_{i+1}$, that is, the algorithm may fail to
generate an optimal selector. 
This is because there are concurrent reachability games that do not
admit optimal strategies, but only $\vare$-optimal strategies for all
$\vare > 0$ \cite{Eve57,crg-tcs07}.
For {\em turn-based\/} reachability games, our algorithm terminates with 
an optimal selector and we will present bounds for termination.

We note that the value-iteration scheme of the previous section does
not directly yield a strategy-improvement algorithm. 
In fact, the sequence of player-1 selectors $\eta_0, \eta_1, \eta_2, \ldots$
computed in Section~\ref{sec-valitersel} may violate
Condition~\ref{l-improve-3}: it is possible that for some $i \geq 0$
we have $\eta_i = \eta_{i+1}$, but $\eta_{i} \neq \eta_{j}$ for some
$j > i$. 
This is because the scheme of Section~\ref{sec-valitersel} is
fundamentally a value-iteration scheme, even though a selector is
extracted from each valuation. 
The scheme guarantees that the valuations $u_0, u_1, u_2, \ldots$
defined as in (\ref{eq-valiter}) converge, but it does not guarantee
that the selectors $\eta_0, \eta_1, \eta_2, \ldots$ improve at each
iteration. 

The strategy-improvement algorithm presented here shares an important
connection with the proof of the existence of memoryless $\vare$-optimal
strategies presented in the previous section. 
Here, also, the key is to ensure that all generated selectors are proper.
Again, this is ensured by modifying the selectors, at each
iteration, only where they can be improved.

\subsection{The strategy-improvement algorithm}\label{subsec:reach-stra}

\smallskip\noindent{\bf Ordering of strategies.}
We let $W_2$ be as in Section~\ref{sec-valitersel}, and again we
assume without loss of generality that all states in $W_2 \union T$
are absorbing. 
We define a preorder $\prec$ on the strategies for player 1 as follows:
given two player 1 strategies $\stra_1$ and $\stra_1'$, let
$\stra_1 \prec \stra_1'$ if the following two conditions hold:
(i)~$\vars{\stra_1} \leq \vars{\stra_1'}$; and 
(ii)~$\vars{\stra_1}(s) < \vars{\stra_1'}(s)$ for some state $s\in S$.
Furthermore, we write 
$\stra_1 \preceq \stra_1'$ if either $\stra_1 \prec \stra_1'$ or 
$\stra_1 = \stra_1'$.

\medskip\noindent{\bf Informal description of Algorithm~\ref{algorithm:strategy-improve}.}
We now present the strategy-improvement algorithm 
(Algorithm~\ref{algorithm:strategy-improve}) for computing the values for 
all states in $S \setminus \wab$.
The algorithm iteratively improves player-1 strategies according to the 
preorder $\prec$. 
The algorithm starts with the random selector 
$\gamma_0=\overline{\xi}_1^\unif$.
At iteration $i+1$, the algorithm considers the memoryless player-1 strategy
$\overline{\gamma}_i$ and computes the value $\vars{\overline{\gamma}_i}$.
Observe that since $\overline{\gamma}_i$ is a memoryless strategy, the
computation of $\vars{\overline{\gamma}_i}$ involves solving the 2-MDP 
$G_{{\gamma}_i}$.
The valuation $\vars{\overline{\gamma}_i}$ is named $v_i$.
For all states $s$ such that $\Pre_1(v_i)(s) > v_i(s)$, 
the memoryless strategy at $s$ is modified to a selector 
that is value-optimal for $v_i$. 
The algorithm then proceeds to the next iteration.
If $\Pre_1(v_i) = v_i$, the algorithm stops and returns the optimal 
memoryless strategy $\overline{\gamma}_i$ for player~1.
Unlike strategy-improvement algorithms for turn-based games (see
\cite{Con93} for a survey),
Algorithm~\ref{algorithm:strategy-improve} is not guaranteed to
terminate, because the value of a reachability game may not be rational.

%%We will later present techniques for obtaining convergence for any
%%desired approximation. 

\begin{algorithm*}[t]
\caption{Reachability Strategy-Improvement Algorithm}
\label{algorithm:strategy-improve}
{
\begin{tabbing}
aaa \= aaa \= aaa \= aaa \= aaa \= aaa \= aaa \= aaa \kill
\\
\> {\bf Input:} a concurrent game structure $G$ with target set $T$. \\
\>   {\bf Output:} a strategy $\overline{\gamma}$ for player~1. \\ \\

\> 0. Compute $W_2=\set{s \in S \mid \var(s)=0}$. \\
\> 1. Let $\gamma_0=\xi_1^\unif$ and $i=0$. \\
\> 2. Compute $v_0 = \vars{\overline{\gamma}_0}$. \\

\> 3. {\bf do \{ } \\ 
\>\> 3.1. Let $I= \set{s \in S \setminus \wab \mid \Pre_1(v_i)(s) > v_i(s)}$. \\
\>\> 3.2. Let $\xi_1$ be a player-1 
        selector such that for all states $s \in I$, \\
\>\>\>	we have $\Pre_{1:\xi_1}(v_i)(s) =\Pre_1(v_i)(s) > v_i(s)$.\\
\>\> 3.3. The player-1 selector $\gamma_{i+1}$ is defined
          as follows: for each state $s\in S$, let\\
\>\>\>\> $	\displaystyle 
	\gamma_{i+1}(s)=
	\begin{cases}
	\gamma_i(s) & \text{\ if \ }s\not\in I;\\
	\xi_1(s) & \text{\ if\ }s\in I.
	\end{cases}$
	\\ \\ 
\>\> 3.4. Compute $v_{i+1} =\vars{\overline{\gamma}_{i+1}}$. \\
\>\> 3.5. Let $i=i+1$. \\
\> {\bf \} until } $I=\emptyset$. \\
\> 4. {\bf return} $\overline{\gamma}_{i}$.  
\end{tabbing}
}
\end{algorithm*}

\subsection{Convergence}

\begin{lemma}{}\label{lemm:proper3}
Let $\gamma_i$ and $\gamma_{i+1}$ be the player-1 
selectors obtained at iterations 
$i$ and $i+1$ of Algorithm~\ref{algorithm:strategy-improve}.
If $\gamma_i$ is proper, then $\gamma_{i+1}$ is also proper.
\end{lemma}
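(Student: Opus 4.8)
The plan is to show that $\gamma_{i+1}$ is proper by arguing that no end component of the $2$-MDP $G_{\gamma_{i+1}}$ lies entirely inside $S \setminus \wab$, and then invoking Corollary~\ref{coro:prob1} (together with the fact that all states in $T \union W_2$ are absorbing). Since $\gamma_i$ is proper by hypothesis, Theorem~\ref{theo-ec} tells us that $G_{\gamma_i}$ has no end component contained in $S \setminus \wab$; the task is to propagate this property across one improvement step. The point is that $\gamma_{i+1}$ differs from $\gamma_i$ only on the set $I = \set{s \in S \setminus \wab \mid \Pre_1(v_i)(s) > v_i(s)}$, and on $I$ the new selector $\xi_1$ is value-optimal with respect to $v_i = \vars{\overline{\gamma}_i}$ and strictly improves $v_i$.

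Assume, towards a contradiction, that $C \subs S \setminus \wab$ is an end component of $G_{\gamma_{i+1}}$. First I would observe that $C$ cannot be disjoint from $I$: if $C \inters I = \emptyset$, then on all of $C$ the selector $\gamma_{i+1}$ agrees with $\gamma_i$, so the moves witnessing the end-component structure of $C$ are also available under $\gamma_i$, making $C$ an end component of $G_{\gamma_i}$ inside $S \setminus \wab$ — contradicting that $\gamma_i$ is proper. So $C \inters I \neq \emptyset$. Now consider the valuation $v_i$ restricted to $C$, and let $r = \max_{s \in C} v_i(s)$, attained at some state $s^* \in C$. The idea is to derive a contradiction from the fact that $v_i$ satisfies $v_i = \Pre_{1:\gamma_i}(v_i)$ on $S \setminus \wab$ (it is the value of the memoryless strategy $\overline{\gamma}_i$ in the $2$-MDP, hence a fixpoint of the corresponding one-step operator), combined with the strict improvement on $I$. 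Concretely: for a state $s \in C \setminus I$, the successors reachable within $C$ under $\gamma_{i+1} = \gamma_i$ satisfy the averaging identity $v_i(s) = \sum_t v_i(t)\,\trans_{\gamma_i}(s,a_2)(t)$ for the relevant player-2 move, forcing all those successors also to have $v_i$-value $r$ when $v_i(s) = r$; for a state $s \in C \inters I$, value-optimality of $\xi_1$ gives $\Pre_{1:\gamma_{i+1}}(v_i)(s) = \Pre_1(v_i)(s) > v_i(s)$, so for every player-2 move $a_2$ the expectation $\sum_t v_i(t)\,\trans_{\gamma_{i+1}}(s,a_2)(t)$ exceeds $v_i(s)$, which means some successor $t \in \dest_{\gamma_{i+1}}(s,a_2)$ has $v_i(t) > v_i(s)$.

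From here I would run the standard end-component recurrence argument. Inside $C$, pick a state of maximal $v_i$-value, say with value $r$. If that maximum is attained at a state in $I$, then along the edges used inside $C$ there is a successor (still in $C$, by closedness) with strictly larger $v_i$-value, contradicting maximality of $r$. If the maximal-value states in $C$ all lie outside $I$, then by the averaging identity above, from any such state all $C$-successors under $\gamma_{i+1}$ also have value $r$; by recurrence (strong connectivity of the end component on the chosen moves) $C$ is then reachable, within the subgraph of value-$r$ states, from every state of $C$, so in fact $C \subs U_r$ where $U_r = \set{s \mid v_i(s) = r}$ — but then $C \inters I \neq \emptyset$ supplies a state in $C$ from which some intra-$C$ edge increases $v_i$, again contradicting $C \subs U_r$. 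Either way we reach a contradiction, so no such $C$ exists and $\gamma_{i+1}$ is proper.

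The main obstacle is the second case — ruling out an end component that sits entirely within a single value class $U_r$ yet still meets $I$. The subtlety is that an end component is defined by the existence of moves making it closed and strongly connected, and one must be careful that the moves witnessing closedness/recurrence for $C$ under $\gamma_{i+1}$ genuinely force the averaging identity with $v_i$ as a fixpoint; this requires using that $v_i$ is precisely the value of $\overline{\gamma}_i$ (so it satisfies $v_i(s) = \sum_{t} v_i(t)\trans_{\gamma_i}(s)(t)$ for the player-2-worst move inside the MDP $G_{\gamma_i}$, and in particular $v_i(t) \ge$ a convex combination argument applies) and that on $C \inters I$ strict improvement is incompatible with the value class being invariant. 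I expect the cleanest writeup routes everything through Corollary~\ref{coro:prob1} applied to $G_{\gamma_{i+1}}$ after establishing the no-end-component claim, exactly as in the proofs of Lemma on proper selectors and Lemma~\ref{lem-2}.
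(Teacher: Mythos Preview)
Your proposal is correct and follows essentially the same approach as the paper: assume an end component $C \subs S \setminus \wab$ under $\gamma_{i+1}$, argue $C \inters I \neq \emptyset$, look at the maximal $v_i$-value class meeting $C$, use that states in $I$ must have a successor in a strictly higher value class while states outside $I$ (where $\gamma_{i+1}=\gamma_i$) keep the top class closed, and derive a contradiction. The only cosmetic difference is that the paper first fixes a pure player-2 selector $\xi_2$ and works with a closed recurrent class of the Markov chain $G_{\gamma_{i+1},\xi_2}$, whereas you reason directly with end components of the $2$-MDP $G_{\gamma_{i+1}}$; also, what you call the ``averaging identity'' for $s \in C \setminus I$ is in general only the inequality $v_i(s) \leq \sum_t v_i(t)\,\trans_{\gamma_i}(s,a_2)(t)$ (equality holds only for the minimizing $a_2$), but as you note in your final paragraph the inequality is exactly what the convex-combination step needs, so the argument goes through unchanged.
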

\begin{proof}
Assume towards a contradiction that $\gamma_i$ is proper and 
$\gamma_{i+1}$ is not.
Let $\xi_2$ be a pure selector for player~2 
to witness that $\gamma_{i+1}$ is not proper. 
Then there exist a subset $C \subseteq S \setminus \wab$ 
such that $C$ is a closed 
recurrent set of states in the Markov chain $G_{\gamma_{i+1},\xi_2}$.
Let $I$ be the nonempty set of states where the selector is modified to 
obtain $\gamma_{i+1}$ from $\gamma_i$; at all other states
$\gamma_i$ and $\gamma_{i+1}$ agree.

Since $\gamma_i$ and $\gamma_{i+1}$ agree at all states other than the
states in $I$, and $\gamma_i$ is a proper strategy, it follows that 
$C \inters I \neq \emptyset$.
Let 
$U_r^i =\set{s \in S \setminus \wab \mid \vars{\overline{\gamma}_i}(s)=v_i(s)=r}$
be the value class with value $r$ at iteration $i$.
For a state $s \in U_r^i$ the following assertion holds: 
if $\dest(s,\gamma_i,\xi_2) \subsetneq U_r^i$, then 
$\dest(s,\gamma_i,\xi_2) \inters U_{>r}^i \neq \emptyset$.
Let $z =\max\set{r \mid U_r^i \inters C \neq \emptyset}$, that is,
$U_z^i$ is the greatest value class at iteration $i$ with a nonempty 
intersection with the closed recurrent set $C$.
It easily follows that $0 < z < 1$. 
Consider any state $s \in I$, and let $s \in U_q^i$. 
Since $\Pre_1(v_i)(s) > v_i(s)$, it follows that 
$\dest(s,\gamma_{i+1},\xi_2) \inters U_{>q}^i \neq \emptyset$.
Hence we must have $z > q$, and therefore 
$I \inters C \inters U_z^i =\emptyset$.
Thus, for all states 
$s \in U_z^i \inters C$, we have $\gamma_i(s)=\gamma_{i+1}(s)$.
Recall that $z$ is the greatest value class at iteration $i$ with 
a nonempty 
intersection with $C$; hence $U_{>z}^i \inters C=\emptyset$.
Thus for all states $s \in C \inters U_z^i$, we have 
%%%tah CHECK 
$\dest(s,\gamma_{i+1},\xi_2) \subseteq U_z^i \inters C$.
%%%
It follows that $C \subseteq U_z^i$.
However, this gives us three statements that together form a
contradiction: $C \inters I \neq \emptyset$ (or else $\gamma_i$ would not
have been proper), $I \inters C \inters U_z^i =\emptyset$, and $C
\subseteq U_z^i$.
\qed
\end{proof}

\begin{lemma}{}\label{lemm:proper4}
For all $i\geq 0$, the player-1 
selector $\gamma_i$ obtained at iteration $i$ of 
Algorithm~\ref{algorithm:strategy-improve} is proper. 
\end{lemma}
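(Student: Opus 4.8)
The plan is to prove Lemma~\ref{lemm:proper4} by induction on the iteration index $i$, using the two facts already established just before it: Lemma~\ref{lemm:proper2}, which tells us the initial selector $\xi_1^\unif$ is proper, and Lemma~\ref{lemm:proper3}, which tells us properness is preserved by a single improvement step of Algorithm~\ref{algorithm:strategy-improve}.

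For the base case $i=0$, the algorithm sets $\gamma_0 = \xi_1^\unif$ in step~1, and Lemma~\ref{lemm:proper2} states precisely that $\xi_1^\unif$ is proper; hence $\gamma_0$ is proper. For the inductive step, assume $\gamma_i$ is proper for some $i\geq 0$. If the algorithm has already terminated at iteration $i$ (i.e.\ $I=\emptyset$), there is nothing to prove; otherwise $\gamma_{i+1}$ is obtained from $\gamma_i$ by the update rule in step~3.3, and Lemma~\ref{lemm:proper3} applies directly to conclude that $\gamma_{i+1}$ is also proper. This completes the induction, so $\gamma_i$ is proper for all $i\geq 0$.

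This is a genuinely short argument — essentially just chaining the previous two lemmas through induction — so I do not anticipate any real obstacle; all the substantive work was already done in the proof of Lemma~\ref{lemm:proper3}, whose contradiction argument about closed recurrent sets and value classes is the technical heart. The only point requiring a word of care is the boundary situation where $I=\emptyset$ at some iteration: there the algorithm halts and $\gamma_{i+1}$ is simply not defined (or can be taken equal to $\gamma_i$), so the inductive claim is vacuous or immediate for that index. I would state the induction cleanly with that caveat noted, and leave the rest to the cited lemmas.

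\begin{proof}
The proof is by induction on $i$. For the base case, $\gamma_0 = \xi_1^\unif$ by step~1 of Algorithm~\ref{algorithm:strategy-improve}, and $\xi_1^\unif$ is proper by Lemma~\ref{lemm:proper2}. For the inductive step, assume $\gamma_i$ is proper. If $I = \emptyset$ at iteration $i$, the algorithm terminates and there is no $\gamma_{i+1}$ to consider; otherwise $\gamma_{i+1}$ is obtained from $\gamma_i$ as in step~3.3, and by Lemma~\ref{lemm:proper3} the selector $\gamma_{i+1}$ is proper. Hence $\gamma_i$ is proper for all $i \geq 0$.
\qed
\end{proof}
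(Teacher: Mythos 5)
Your proof is correct and is exactly the paper's argument: the base case follows from Lemma~\ref{lemm:proper2} and the inductive step from Lemma~\ref{lemm:proper3}. The extra remark about the termination case ($I=\emptyset$) is a harmless refinement the paper leaves implicit.
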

\begin{proof}
By Lemma~\ref{lemm:proper2} we have that $\gamma_0$ is proper.
The result then follows from Lemma~\ref{lemm:proper3} and induction.
\qed
\end{proof}

\begin{lemma}{}\label{lemm:stra-improve}
Let $\gamma_i$ and $\gamma_{i+1}$ be the player-1 selectors obtained at 
iterations $i$ and $i+1$ of Algorithm~\ref{algorithm:strategy-improve}.
Let $I=\set{s \in S \mid \Pre_1(v_i)(s) > v_i(s)}$.
Let $v_i=\vars{\overline{\gamma}_i}$ and 
$v_{i+1}=\vars{\overline{\gamma}_{i+1}}$.
Then $v_{i+1}(s)  \geq  \Pre_1(v_i)(s)$ for all states $s\in S$;
and therefore
$v_{i+1}(s) \geq v_i(s)$ for all states $s\in S$, 
and $v_{i+1}(s) > v_i(s)$ for all states $s\in I$.
\end{lemma}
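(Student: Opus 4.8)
The plan is to show $v_{i+1} \geq \Pre_1(v_i)$ by a fixed-point/monotonicity argument on the $2$-MDP $G_{\gamma_{i+1}}$, using that $\gamma_{i+1}$ is proper (Lemma~\ref{lemm:proper4}). First I would observe that $v_{i+1} = \vars{\overline{\gamma}_{i+1}}$ is the value, for player~2, of the $2$-MDP $G_{\gamma_{i+1}}$ with player-2 objective $\Safe(S \setminus T)$ — equivalently, $1 - v_{i+1}$ is the optimal value for the reachability objective $\Reach(T)$ for player~2 in that MDP. Since all states in $T \union W_2$ are absorbing and $\gamma_{i+1}$ is proper, $G_{\gamma_{i+1}}$ with player-1 playing $\overline{\gamma}_{i+1}$ reaches $T \union W_2$ with probability~1 against every player-2 strategy, so $v_{i+1}$ is the unique solution of the corresponding Bellman equation that vanishes on $W_2$: concretely, $v_{i+1}(s) = \Pre_{1:\gamma_{i+1}}(v_{i+1})(s)$ for $s \in S \setminus \wab$, $v_{i+1}(s) = 1$ for $s \in T$, and $v_{i+1}(s) = 0$ for $s \in W_2$. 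This is exactly the kind of characterization provided by the MDP linear program of Section~\ref{sec:mdp} together with the properness hypothesis (analogous to Lemma~\ref{lem-selector}).

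Next I would compare $v_{i+1}$ with $\Pre_1(v_i)$ using the update rule in step~3.3. By construction of $\gamma_{i+1}$: for $s \notin I$ we have $\gamma_{i+1}(s) = \gamma_i(s)$ and $\Pre_1(v_i)(s) = v_i(s) = \Pre_{1:\gamma_i}(v_i)(s)$ (the latter since $v_i = \vars{\overline{\gamma}_i}$ satisfies its own Bellman equation on $S \setminus \wab$, and trivially on $T \union W_2$); for $s \in I$ we have $\gamma_{i+1}(s) = \xi_1(s)$ with $\Pre_{1:\xi_1}(v_i)(s) = \Pre_1(v_i)(s) > v_i(s)$. In both cases $\Pre_{1:\gamma_{i+1}}(v_i)(s) \geq \Pre_1(v_i)(s) \geq v_i(s)$ for all $s \in S \setminus \wab$, while on $T$ both sides equal $1$ and on $W_2$ both equal $0$. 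Hence $\Pre_{1:\gamma_{i+1}}(v_i) \geq v_i$ everywhere, i.e. $v_i$ is a subsolution of the operator $\Pre_{1:\gamma_{i+1}}$.

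Now I would run the argument of Lemma~\ref{lem-selector} with the valuation $v_i$ and the proper selector $\gamma_{i+1}$: define $w_k = \E^{\overline{\gamma}_{i+1}, \stra_2}(v_i(\randpath_k))$ for an arbitrary player-2 strategy $\stra_2$; by induction $w_{k+1} \geq \Pre_{1:\gamma_{i+1}}(w_k) \geq \Pre_{1:\gamma_{i+1}}(v_i) \geq v_i$, using monotonicity of $\Pre_{1:\gamma_{i+1}}$ and that the bound is preserved. Splitting $w_k$ into the contributions from $\randpath_k \in T$, $\randpath_k \in W_2$, and $\randpath_k \in S \setminus \wab$, and taking $k \to \infty$: the $S \setminus \wab$ term vanishes because $\gamma_{i+1}$ is proper so $\Reach(T \union W_2)$ has probability~1 and $T \union W_2$ is absorbing; the $W_2$ term vanishes because $v_i$ is $0$ on $W_2$; and the $T$ term converges to $\Prb^{\overline{\gamma}_{i+1}, \stra_2}(\Reach(T))$ since $v_i \le 1$ on $T$ and $T$ is absorbing, whence $\Prb^{\overline{\gamma}_{i+1}, \stra_2}(\Reach(T)) \geq \lim_k w_k \geq v_i$. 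Taking the infimum over $\stra_2$ gives $v_{i+1} = \vars{\overline{\gamma}_{i+1}} \geq v_i$. To upgrade $v_i$ to $\Pre_1(v_i)$, I would repeat the same argument but starting the induction one step in: $w_1(s) = \Pre_{1:\gamma_{i+1}}(v_i)(s) \geq \Pre_1(v_i)(s)$, and then $w_{k+1} \geq \Pre_{1:\gamma_{i+1}}(w_k)$ with $w_k \geq w_1 \geq \Pre_1(v_i)$ maintained by induction (again using $\Pre_{1:\gamma_{i+1}}(\Pre_1(v_i)) \geq \Pre_1(v_i)$, which holds since $\Pre_1(v_i) \geq v_i$ and hence $\Pre_{1:\gamma_{i+1}}(\Pre_1(v_i)) \geq \Pre_{1:\gamma_{i+1}}(v_i) \geq \Pre_1(v_i)$ on $S \setminus \wab$, with equality on $T \union W_2$). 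The same limit computation then yields $v_{i+1} \geq \Pre_1(v_i)$. Finally, $\Pre_1(v_i) \geq v_i$ everywhere gives $v_{i+1} \geq v_i$, and for $s \in I$ we have $v_{i+1}(s) \geq \Pre_1(v_i)(s) > v_i(s)$.

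The main obstacle is the justification that $v_{i+1}$ is correctly computed as the value of the MDP $G_{\gamma_{i+1}}$ and in particular that the properness of $\gamma_{i+1}$ lets us pass to the limit and identify $\lim_k w_k$ with an actual reachability probability — this is precisely where properness is essential and where a non-proper selector would break the argument (as in Example~\ref{examp-nonopt}). Once the limit-passing is set up exactly as in Lemma~\ref{lem-selector}, the rest is bookkeeping with the update rule and monotonicity of $\Pre_{1:\gamma_{i+1}}$.
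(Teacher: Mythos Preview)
Your proposal is correct, but it takes a genuinely different route from the paper's proof. The paper argues via the LP characterization of MDP reachability values: it sets $\wh{w}_i = 1 - \Pre_1(v_i)$ and verifies that $\wh{w}_i$ is a \emph{feasible} solution of the linear program for player~2 reaching $W_2$ in the $2$-MDP $G_{\gamma_{i+1}}$ (using $\Pre_{1:\gamma_{i+1}}(v_i) = \Pre_1(v_i)$ on $I$ and $\gamma_{i+1}=\gamma_i$ off $I$). Properness of $\gamma_{i+1}$ is used to identify $1 - v_{i+1}$ with the optimal player-2 reachability value to $W_2$ (no end components remain in $S \setminus (T\cup W_2)$), so the LP minimality immediately gives $1 - v_{i+1} \leq \wh{w}_i$, i.e.\ $v_{i+1} \geq \Pre_1(v_i)$.

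You instead rerun (and slightly sharpen) the supermartingale argument of Lemma~\ref{lem-selector}: having checked $\Pre_{1:\gamma_{i+1}}(v_i) = \Pre_1(v_i) \geq v_i$, you carry the stronger invariant $\E^{\overline{\gamma}_{i+1},\stra_2}[v_i(\randpath_k)] \geq \Pre_1(v_i)$ for $k \geq 1$ (the key point being $\Pre_{1:\gamma_{i+1}}(\Pre_1(v_i)) \geq \Pre_{1:\gamma_{i+1}}(v_i) = \Pre_1(v_i)$ by monotonicity), and then pass to the limit using properness exactly as in that lemma. This is a legitimate alternative; in effect you observe that the proof of Lemma~\ref{lem-selector} already yields the conclusion $\geq \Pre_1(v)$ rather than merely $\geq v$. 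Your approach stays closer to the elementary, combinatorial style of Section~\ref{sec-optisel} and avoids invoking the LP from Section~\ref{sec:mdp}, while the paper's approach is shorter once that LP machinery is in place. Both use properness of $\gamma_{i+1}$ in an essential way (you for the vanishing of the $S\setminus(T\cup W_2)$ term, the paper for equating $1-v_{i+1}$ with the player-2 reachability value to $W_2$).
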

\begin{proof}
Consider the valuations $v_i$ and $v_{i+1}$ obtained at iterations $i$ and 
$i+1$, respectively, and let $w_i$ be the valuation defined by 
$w_i(s) = 1 - v_i(s)$ for all states $s \in S$. 
Since $\gamma_{i+1}$ is proper (by Lemma~\ref{lemm:proper4}),
it follows that the counter-optimal strategy for player~2 to minimize
$v_{i+1}$ is obtained by maximizing the probability to reach $W_2$.
In fact, there are no end components in $S \setminus (W_2 \union T)$
in the 2-MDP $G_{\gamma_{i+1}}$. 
Let 
\[
  \wh{w}_{i}(s) =
  \begin{cases} 
    w_i(s) & \text{\ if\ }s \in S \setminus I; \\
    1-\Pre_1(v_i)(s) < w_i(s) &\text{\ if\ }s \in I. 
  \end{cases}
\]
In other words, $\wh{w}_{i} = 1 - \Pre_1(v_i)$, and we also have 
$\wh{w}_{i} \leq w_i$.
We now show that $\wh{w}_{i}$ is a feasible solution to the 
linear program for MDPs with the objective $\Reach(W_2)$, as
described in Section~\ref{sec:mdp}.
Since $v_i =\vars{\overline{\gamma}_i}$, it follows that
for all states $s\in S$ and all moves $a_2 \in \mov_2(s)$, we have
\[
  w_i(s) \geq \sum_{t \in S} w_i(t) \cdot \trans_{\gamma_i}(s,a_2).  
\]
For all states $s \in S \setminus I$, 
we have $\gamma_i(s)=\gamma_{i+1}(s)$ and
$\wh{w}_{i}(s) = w_i(s)$,
and since $\wh{w}_{i} \leq w_i$, it follows that for all 
states $s \in S \setminus I$ and all moves $a_2 \in \mov_2(s)$, we have
\[
  \wh{w}_{i}(s) \geq \sum_{t \in S} \wh{w}_{i}(t) \cdot \trans_{\gamma_{i+1}}(s,a_2)
 \qquad \text{ (for $s \in (S\setminus I)$)}.
\]

Since for $s \in I$ the selector $\gamma_{i+1}(s)$ is obtained as an
optimal selector for $\Pre_1(v_i)(s)$, it follows that 
for all states $s\in I$ and all moves $a_2 \in \mov_2(s)$, we have
\[
\Pre_{\gamma_{i+1},a_2}(v_i)(s) \geq \Pre_1(v_i)(s);
\]
in other words, $1-\Pre_1(v_i)(s) \geq 1-\Pre_{\gamma_{i+1},a_2}(v_i)(s)$.
Hence for all states $s\in I$ and all moves $a_2 \in \mov_2(s)$, we have
\[
  \wh{w}_{i}(s) \geq \sum_{t \in S} w_{i}(t) \cdot \trans_{\gamma_{i+1}}(s,a_2). 
\]
Since $\wh{w}_{i} \leq w_i$,  
for all states $s\in I$ and all moves $a_2 \in \mov_2(s)$, we have
\[
  \wh{w}_{i}(s) \geq \sum_{t \in S} \wh{w}_{i}(t) \cdot \trans_{\gamma_{i+1}}(s,a_2) 
\qquad \text{( for  $s\in I$)}.
\] 
Hence it follows that $\wh{w}_{i}$ is a feasible solution to the 
linear program for MDPs with reachability objectives.
Since the reachability valuation 
for player~2 for $\Reach(W_2)$ is the least
solution (observe that the objective function of the linear program is a 
minimizing function), it follows that 
$v_{i+1} \geq 1-\wh{w}_{i} = \Pre_1(v_i)$.
Thus we obtain
$v_{i+1}(s) \geq v_i(s)$ for all states $s \in S$, and 
$v_{i+1}(s) > v_i(s)$ for all states $s \in I$.
\qed
\end{proof}

\begin{theorem}{(Strategy improvement).}\label{thrm:stra-improve}
The following two assertions hold 
about Algorithm~\ref{algorithm:strategy-improve}:
\begin{enumerate}
\item 
For all $i \geq 0$, we have $\overline{\gamma}_i \preceq
\overline{\gamma}_{i+1}$; moreover, 
if $\overline{\gamma}_i = \overline{\gamma}_{i+1}$, then 
$\overline{\gamma}_i$ is an optimal strategy.

\item $\lim_{i \to \infty} v_i =
\lim_{i \to \infty} \vars{\overline{\gamma}_i} = \var$.

\end{enumerate}
\end{theorem}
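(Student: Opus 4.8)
The plan is to derive both assertions from the convergence lemmas already established, chiefly Lemma~\ref{lemm:stra-improve} (one-step improvement) and Lemma~\ref{lem-selector}, together with the characterization of the value as the least fixpoint of $\Pre_1$ from the value-iteration scheme (\ref{eq-valiter}). For the first assertion, fix $i\geq 0$. By Lemma~\ref{lemm:stra-improve} we have $v_{i+1}(s)\geq v_i(s)$ for all $s$, which is exactly condition~(i) of the preorder $\prec$; and if $I\neq\emptyset$ then $v_{i+1}(s)>v_i(s)$ for the states $s\in I$, giving condition~(ii), hence $\overline{\gamma}_i\prec\overline{\gamma}_{i+1}$. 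If instead $I=\emptyset$, the algorithm sets $\gamma_{i+1}=\gamma_i$ and $\overline{\gamma}_i=\overline{\gamma}_{i+1}$. So in all cases $\overline{\gamma}_i\preceq\overline{\gamma}_{i+1}$. Moreover $\overline{\gamma}_i=\overline{\gamma}_{i+1}$ forces $I=\emptyset$, i.e.\ $\Pre_1(v_i)=v_i$, so $v_i$ is a fixpoint of $\Pre_1$ with $v_i(s)=0$ on $W_2$; the selector $\gamma_i$ is value-optimal for $v_i$ and proper by Lemma~\ref{lemm:proper4}, so Lemma~\ref{lem-selector} gives $\winval{1}^{\overline{\gamma}_i}(\Reach(T))\geq v_i$. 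Since trivially $v_i=\winval{1}^{\overline{\gamma}_i}(\Reach(T))\leq\var$, it remains to show $v_i\geq\var$: a fixpoint of $\Pre_1$ that is $0$ on $W_2$ dominates every iterate $u_k$ of (\ref{eq-valiter}) by an easy induction ($u_0=[T]\leq v_i$ since $v_i=1$ on the absorbing target $T$, and monotonicity of $\Pre_1$ propagates the inequality), hence $v_i\geq\lim_k u_k=\var$. Therefore $v_i=\var$ and $\overline{\gamma}_i$ is optimal.

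For the second assertion, set $v=\lim_{i\to\infty}v_i$; the limit exists because $(v_i)_i$ is pointwise nondecreasing (Lemma~\ref{lemm:stra-improve}) and bounded by $1$. Clearly $v\leq\var$ since each $v_i\leq\var$. For the reverse inequality, the plan is to show $v$ is a fixpoint of $\Pre_1$, after which the domination argument of the previous paragraph (using $v(s)=0$ on $W_2$, which holds as each $v_i$ vanishes there) yields $v\geq\var$ and hence $v=\var$. To see $\Pre_1(v)=v$: monotonicity and $v_i\leq v$ give $\Pre_1(v)\geq\Pre_1(v_i)$, and Lemma~\ref{lemm:stra-improve} gives $v_{i+1}\geq\Pre_1(v_i)$, so passing to the limit $v\geq\Pre_1(v_i)$ for all... more carefully, one shows $v\geq\Pre_1(v)$ by a continuity/compactness argument on the one-shot matrix games defining $\Pre_1$ (the optimal value of the matrix game is continuous in the payoff entries, so $\Pre_1(v_i)\to\Pre_1(v)$), combined with $v_{i+1}\geq\Pre_1(v_i)$; and $v\leq\Pre_1(v)$ follows from $\Pre_1(v)\geq\Pre_1(v_i)\geq v_{i+1}$ ... hence $v\leq\lim_i v_{i+1}=v$ is automatic, so we get $v\geq\Pre_1(v)$, while $\Pre_1(v)\geq v_{i+1}$ for all $i$ gives $\Pre_1(v)\geq v$; combining, $\Pre_1(v)=v$.

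The main obstacle I expect is the continuity step: justifying $\lim_i\Pre_1(v_i)=\Pre_1(v)$. This requires that the value of the zero-sum one-shot matrix game $\Pre_1(\cdot)(s)$ depends continuously on the valuation (equivalently, on the matrix of expected payoffs $\sum_t\trans(s,a,b)(t)\,w(t)$), which is a standard fact — the value of a matrix game is $1$-Lipschitz in the $\ell_\infty$ norm of the payoff matrix — but it must be invoked cleanly, and one must be careful that $\Pre_1$ is only upper-semicontinuous a priori through the $\inf_{\xi_2}$ and continuous through the $\sup_{\xi_1}$, so the full continuity argument should go through the Lipschitz bound rather than through semicontinuity of the two optimizations separately. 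Everything else is bookkeeping with the already-proven monotonicity and properness lemmas.
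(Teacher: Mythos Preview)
Your Part~1 matches the paper's argument essentially verbatim: Lemma~\ref{lemm:stra-improve} for monotonicity, $I=\emptyset$ forces $\Pre_1(v_i)=v_i$, then least-fixpoint domination together with Lemmas~\ref{lemm:proper4} and~\ref{lem-selector} give optimality.

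Your Part~2 takes a genuinely different route. The paper does not argue that the limit $v$ is a fixpoint of $\Pre_1$ and invokes no continuity of matrix-game values at all. Instead it runs a simple sandwich: from Lemma~\ref{lemm:stra-improve} one has $v_{k+1}\geq\Pre_1(v_k)$, and since $u_0=[T]\leq v_0$ and $u_{k+1}=\Pre_1(u_k)$, monotonicity of $\Pre_1$ gives $u_k\leq v_k$ for all $k$ by induction; together with $v_k\leq\var$ and $\lim_k u_k=\var$ from the value-iteration result of~\cite{dAM04}, this squeezes $v_k\to\var$. This avoids the continuity step you flag as the main obstacle.

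Your continuity route is sound in principle (the value of a matrix game is indeed $1$-Lipschitz in the $\ell_\infty$ norm of the payoff matrix, so $\Pre_1(v_i)\to\Pre_1(v)$), but your write-up of the direction $\Pre_1(v)\geq v$ is tangled: you write ``$\Pre_1(v)\geq\Pre_1(v_i)\geq v_{i+1}$'', yet Lemma~\ref{lemm:stra-improve} gives the reverse inequality $v_{i+1}\geq\Pre_1(v_i)$, so ``$\Pre_1(v)\geq v_{i+1}$'' does not follow from what you have. The clean fix is to observe that $v_i$ satisfies the Bellman equation in the $2$-MDP $G_{\gamma_i}$, hence $v_i=\Pre_{1:\gamma_i}(v_i)\leq\Pre_1(v_i)$; then $\Pre_1(v)\geq\Pre_1(v_i)\geq v_i$ for every $i$ by monotonicity, and taking the supremum over $i$ gives $\Pre_1(v)\geq v$ without any appeal to continuity. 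With that correction your argument goes through, but the paper's sandwich is shorter and stays entirely within the lemmas already proved.
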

\begin{proof}
We prove the two parts as follows.
\begin{enumerate}
\item 
The assertion that $\overline{\gamma}_i \preceq \overline{\gamma}_{i+1}$ 
follows from Lemma~\ref{lemm:stra-improve}.
If $\overline{\gamma}_i =\overline{\gamma}_{i+1}$, then 
$\Pre_1(v_i)=v_i$.
Let $v= \var$, and since $v$ is the least solution to satisfy 
$\Pre_1(x)=x$ (i.e., the least fixpoint)~\cite{dAM04}, it follows
that $v_i \geq v$.
From Lemma~\ref{lemm:proper4} it follows that 
$\overline{\gamma}_i$ is proper.
Since $\overline{\gamma}_i$ is proper by Lemma~\ref{lem-selector}, we have 
$\vars{\overline{\gamma}_i} \geq v_i \geq v$.
It follows that $\overline{\gamma}_i$ is optimal for player~1. 

\item  Let $v_0= [T]$ and $u_0=[T]$.
We have $u_0 \leq v_0$.
For all $k\geq 0$, by Lemma~\ref{lemm:stra-improve}, we have 
$v_{k+1} \geq [T] \lor \Pre_1(v_k)$.
For all $k \geq 0$, let $u_{k+1}=[T] \lor \Pre_1(u_k)$.
By induction we conclude that for all $k \geq 0$, we have $u_k \leq v_k$.
Moreover, $v_k \leq \var$, 
that is, for all $k\geq 0$, we have 
$$u_k \leq v_k \leq \var.$$
Since $\lim_{k \to \infty} u_k = \var$, it follows that
\[
\lim_{k \to \infty} \vars{\overline{\gamma}_k}  = 
\lim_{k \to \infty} v_k  = \var.
\]
%
%%\begin{align*}
%%\lim_{k \to \infty} \vars{\overline{\gamma}_k} & = \lim_{k \to \infty} v_k \\
%% & = \var.
%%\end{align*}
\end{enumerate}
The theorem follows.
\qed
\end{proof}

\subsection{Termination for turn-based stochastic games}
If the input game structure to Algorithm~\ref{algorithm:strategy-improve}
is a turn-based stochastic game structure, then if we start with a 
proper selector $\gamma_0$ that is pure, then for all $i \geq 0$ we can 
choose the selector $\gamma_i$ such that $\gamma_i$ is both proper and pure: 
the above claim follows since given a valuation $v$, if a state $s$ is a 
player~1 state, then there is an action $a$ at $s$ (or choice of an 
edge at $s$) that achieves $\Pre_1(v)(s)$ at $s$.
Since the number of pure selectors is bounded, if we start with a 
pure, proper selector then termination is ensured.
Hence we present a procedure to compute a pure, proper selector,
and then present termination bounds (i.e., bounds on $i$ such 
that $u_{i+1}=u_i$).
The construction of a pure, proper selector is based on the notion 
of \emph{attractors} defined below.

\medskip\noindent{\em Attractor strategy.}
Let $A_0=W_2 \cup T$,  and for $i\geq 0$ we have
\[
A_{i+1}= A_i \cup \set{s \in S_1 \cup S_R \mid E(s) \cap A_i \neq \emptyset}
\cup \set{s \in S_2 \mid E(s) \subs A_i}.
\]
Since for all $s \in S \setminus W_2$ we have $\va(\Reach(T))>0$,
it follows that from all states in $S\setminus W_2$ player~1
can ensure that $T$ is reached with positive probability.
It follows that for some $i \geq 0$ we have $A_i=S$.
The pure \emph{attractor} selector $\xi^*$ is as follows:
for a state $s \in (A_{i+1}\setminus A_i) \cap S_1$ 
we have $\xi^*(s)(t)=1$, where $t \in A_i$ (such a $t$ 
exists by construction).
The pure memoryless strategy $\ov{\xi^*}$ ensures that for 
all $i\geq 0$, from $A_{i+1}$ the game reaches $A_i$ 
with positive probability.
Hence there is no end-component $C$ contained in 
$S\setminus (W_2 \cup T)$ in the MDP $G_{\ov{\xi^*}}$.
It follows that $\xi^*$ is a pure selector that is proper,
and the selector $\xi^*$ can be computed in $O(|E|)$ time.
We now present the termination bounds.

\medskip\noindent{\em Termination bounds.}
We present termination bounds for binary turn-based 
stochastic games.
A turn-based stochastic game is binary if for all $s\in S_R$
we have $|E(s)|\leq 2$, and for all $s \in S_R$ if $|E(s)|=2$,
then for all $t \in E(s)$ we have $\trans(s)(t)=\frac{1}{2}$,
i.e., for all probabilistic states there are at most two
successors and the transition function $\trans$ is uniform.

\begin{lemma}{}\label{lemm:MC-bound}
Let $G$ be a binary Markov chain with $|S|$ states with a reachability
objective $\Reach(T)$.
Then for all $s \in S$ we have 
$\va(\Reach(T))=\frac{p}{q}$, with $p,q \in \nats$ and $p,q \leq 4^{|S|-1}$. 
\end{lemma}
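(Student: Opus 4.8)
The plan is to set up a linear system for the reachability probabilities in a binary Markov chain and bound the solution using Cramer's rule together with the structure of the coefficient matrix. First I would assume without loss of generality that all states in $T$ are absorbing (as done elsewhere in the paper) and that every non-target state can reach $T$ (states that cannot reach $T$ have value $0$ and can be discarded). Let $S' = S \setminus T$ and let $x(s) = \va(\Reach(T))(s)$ for $s \in S'$. The standard hitting-probability equations give, for each $s \in S'$: if $s$ is deterministic with successor $t$ then $x(s) = x(t)$; if $s$ is probabilistic with two successors $t_1, t_2$ then $x(s) = \frac{1}{2}x(t_1) + \frac{1}{2}x(t_2)$. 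Collecting these, we get a linear system $(I - P)x = b$, where $P$ is the substochastic matrix of transitions among states of $S'$ and $b$ records the one-step probabilities of landing directly in $T$. Multiplying the probabilistic rows by $2$, we obtain an integer-coefficient system $Mx = c$ where $M$ has entries in $\{-1, 0, 1, 2\}$ (diagonal entries $1$ or $2$, off-diagonal entries $0$ or $-1$, with at most two nonzero off-diagonal entries per row), and $c$ is an integer vector with entries in $\{0, 1, 2\}$.

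The key step is then to apply Cramer's rule: $x(s) = \det(M_s)/\det(M)$, where $M_s$ is $M$ with the column for $s$ replaced by $c$. Since reachability values exist and are unique under our assumptions (every state reaches $T$, so $I - P$ is invertible — this is where the assumption that all states reach $T$ is used), $\det(M) \neq 0$, and $\det(M)$ is a nonzero integer, so $|\det(M)| \geq 1$. It then suffices to bound $|\det(M_s)|$ and $|\det(M)|$ by $4^{|S|-1}$. For this I would use a Hadamard-type or row-norm bound: each row of $M$ (and of $M_s$) has entries summing in absolute value to at most $4$ — indeed a probabilistic row contributes $2 + 1 + 1 = 4$ and a deterministic row contributes $1 + 1 = 2$, while the replaced column $c$ has entries at most $2$. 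A crude bound $|\det| \le \prod_i \|r_i\|_1$ would give $4^{n}$ where $n = |S'| \le |S| - 1$; getting the sharper $4^{|S|-1}$ requires being slightly careful, e.g.\ noting $\|r_i\|_1 \le 4$ and $n \le |S|-1$, which already yields $|\det(M_s)|, |\det(M)| \le 4^{|S|-1}$, hence $p = |\det(M_s)| \le 4^{|S|-1}$ and $q = |\det(M)| \le 4^{|S|-1}$ with $p/q$ in lowest terms bounded by the same.

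I would phrase the determinant bound via the inequality $|\det(A)| \le \prod_i \|\text{row}_i(A)\|_1$ (which follows, e.g., from repeatedly expanding along rows or from the fact that the permanent of $|A|$ dominates $|\det A|$ and then bounding the permanent). Since $|S'| = |S| - |T| \le |S| - 1$, the product over at most $|S|-1$ rows each of $\ell_1$-norm at most $4$ gives the claimed bound $4^{|S|-1}$ for both numerator and denominator. Finally, reducing $p/q$ to lowest terms only decreases $p$ and $q$, so the bound persists.

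The main obstacle I anticipate is twofold: first, justifying cleanly that $\det(M) \ne 0$ — this needs the reduction that every non-target state reaches $T$ with positive probability, making $I-P$ nonsingular (spectral radius of $P$ strictly less than $1$); and second, making the constant come out to exactly $4^{|S|-1}$ rather than something like $4^{|S|}$ or $3^{|S|}\cdot 2$, which is just a matter of accounting for the number of non-target states being at most $|S|-1$ and choosing the row-norm bound rather than a Hadamard $\ell_2$ bound. Neither is deep, but the bookkeeping on the constant is where care is required.
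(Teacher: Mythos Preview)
Your argument is correct. The paper's own proof is a one-line citation: it invokes Lemma~2 of Condon~\cite{Con93} (stated there for halting simple stochastic games) and observes that a Markov chain always reaches its recurrent classes with probability~1, so Condon's halting hypothesis is automatically satisfied.

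Your route is different only in that you unpack the cited result instead of invoking it. Condon's Lemma~2 is itself proved by exactly the mechanism you describe: write the hitting probabilities as the solution of a linear system over the non-target states, clear the $\tfrac12$'s to get an integer matrix with small entries, and bound numerator and denominator via Cramer's rule and a crude determinant estimate. So you are re-deriving Condon's bound rather than citing it; the payoff is self-containment, the cost is a page of bookkeeping the paper avoids. One detail worth making explicit if you write this out: the row-$\ell_1$ bound of~$4$ for $M_s$ (not just $M$) is not automatic, since replacing a column by $c$ could in principle increase a row norm. It does not here, because a row with $c_i=2$ has both successors in $T$ and hence no off-diagonal entries in $S'$, and a row with $c_i=1$ has $\ell_1$-norm at most~$3$ in $M$ to begin with; but this deserves a sentence.
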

\begin{proof}
The results follow as a special case of Lemma~2 of~\cite{Con93}.
Lemma~2 of~\cite{Con93} holds for halting turn-based stochastic games,
and since Markov chains reaches the set of closed connected recurrent
states with probability~1 from all states the result follows.
\qed
\end{proof}

\begin{lemma}{}\label{lemm:TB-bound}
Let $G$ be a binary turn-based stochastic game with a reachability
objective $\Reach(T)$.
Then for all $s \in S$ we have 
$\va(\Reach(T))=\frac{p}{q}$, with $p,q \in \nats$ and $p,q \leq 4^{|S_R|-1}$. 
\end{lemma}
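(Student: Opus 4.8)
The plan is to reduce the bound for a binary turn-based stochastic game $G$ to the bound for binary Markov chains already established in Lemma~\ref{lemm:MC-bound}, by exploiting the existence of pure memoryless optimal strategies for both players in turn-based stochastic reachability games. First I would recall that for turn-based stochastic games with reachability objectives both players have pure memoryless optimal strategies (this is classical, and also follows from the termination discussion just given together with determinacy). Fix a pair $(\stra_1^*,\stra_2^*)$ of pure memoryless optimal strategies. Once both players commit to these pure memoryless strategies, the game collapses to a Markov chain $M = G_{\stra_1^*,\stra_2^*}$ on the state space $S$, and by optimality $\va(\Reach(T))(s) = \Prb_s^{\stra_1^*,\stra_2^*}(\Reach(T))$ for every $s$, i.e.\ the game value at every state equals the corresponding reachability value in $M$.

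The key point is that $M$ is still a \emph{binary} Markov chain in the sense of Lemma~\ref{lemm:MC-bound}, but with only $|S_R|$ genuinely probabilistic states: at a former player-1 or player-2 state, the pure memoryless strategy selects a single successor, so that state becomes deterministic (a "probabilistic" state with a single successor of probability~1), while at a state $s \in S_R$ the transition function $\trans(s)$ is untouched and remains uniform over at most two successors. Deterministic states do not affect the arithmetic complexity of the reachability values: I would argue this either by contracting the deterministic states (replacing each maximal deterministic segment by a single edge, which preserves reachability probabilities and does not introduce new denominators), or simply by observing that the proof of Lemma~2 of~\cite{Con93} / Lemma~\ref{lemm:MC-bound} produces a linear system whose size in the relevant sense is governed by the number of non-trivial probabilistic branch points, which is $|S_R|$. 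Either way one gets that each value is $\frac{p}{q}$ with $p,q \in \nats$ and $p,q \le 4^{|S_R|-1}$.

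The main obstacle is making the "deterministic states don't count" step fully rigorous: one must be careful that contracting chains of deterministic states preserves both the binary/uniform structure of the remaining probabilistic states and the set of target states reachable, without accidentally creating self-loops or multi-edges that would change the bound. I would handle this by a clean contraction argument — iteratively, for a deterministic state $s \notin T$ with unique successor $t$, redirect all incoming edges of $s$ to $t$ and delete $s$; for $s \in T$ one simply notes $s$ is absorbing-for-the-objective and its value is $1$, contributing nothing to the arithmetic — and then invoke Lemma~\ref{lemm:MC-bound} on the resulting binary Markov chain, whose state set has size at most $|S_R| + |T'|$ for an absorbing target, so that the denominator bound $4^{|S_R|-1}$ follows. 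Finally I would transfer the bound back to $G$ via the equality of values established in the first paragraph.
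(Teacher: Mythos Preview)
Your proposal is correct and follows essentially the same approach as the paper: fix pure memoryless optimal strategies for both players (citing Condon), collapse the resulting deterministic player states to obtain a binary Markov chain on the $|S_R|$ probabilistic states, and invoke Lemma~\ref{lemm:MC-bound}. The paper dispatches the contraction step in a single parenthetical remark, whereas you are more explicit (and more cautious) about making it rigorous; your worry about the state count ``$|S_R| + |T'|$'' versus $|S_R|$ is easily resolved by noting that all target states can be merged into a single absorbing winning state and all value-$0$ states into a single absorbing losing state, so the reduced chain genuinely has at most $|S_R|$ states (with the absorbing endpoints counted among, or in addition to, the random states --- either way the Condon bound gives $4^{|S_R|-1}$).
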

\begin{proof}
Since pure memoryless optimal strategies exist for both players 
(existence of pure memoryless optimal strategies for both players in
turn-based stochastic reachability games follows from~\cite{Con92}),
we fix pure memoryless optimal strategies $\stra_1$ and 
$\stra_2$ for both players.
The Markov chain $G_{\stra_1,\stra_2}$ can be then reduced to an equivalent 
Markov chains with $|S_R|$ states (since we fix deterministic successors
for states in $S_1 \cup S_2$, they can be collapsed to their successors).
The result then follows from Lemma~\ref{lemm:MC-bound}.
\qed
\end{proof}

From Lemma~\ref{lemm:TB-bound} it follows that at iteration~$i$ of the
reachability strategy improvement algorithm either 
the sum of the values either increases by $\frac{1}{4^{|S_R|-1}}$ or else
there is a valuation $u_i$ such that $u_{i+1}=u_i$.
Since the sum of values of all states can be at most $|S|$, it follows
that algorithm terminates in at most $|S| \cdot 4^{|S_R|-1}$ iterations.
Moreover, since the number of pure memoryless strategies is at most
$\prod_{s \in S_1} |E(s)|$, the algorithm terminates in at most
$\prod_{s \in S_1} |E(s)|$ iterations.
It follows from the results of~\cite{ZP95} that a turn-based stochastic
game structure $G$ can be reduced to a equivalent binary turn-based
stochastic game structure $G'$ such that the set of player~1 and player~2
states in $G$ and $G'$ are the same and the number of probabilistic
states in $G'$ is $O(|\trans|)$, where $|\trans|$ is the size of the
transition function in $G$.
Thus we obtain the following result.

\begin{theorem}{}
Let $G$ be a turn-based stochastic game with a reachability objective 
$\Reach(T)$, then the reachability strategy improvement algorithm
computes the values in time 
\[
O\big(\min\set{\prod_{s \in S_1} |E(s)|, 2^{O(|\trans|)} } \cdot \mathit{poly}(|G|\big);
\]
where $\mathit{poly}$ is polynomial function.
\end{theorem}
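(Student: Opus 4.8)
The plan is to combine the termination bound just derived for binary turn-based stochastic games with the reduction of an arbitrary turn-based stochastic game to an equivalent binary one, and then account carefully for the per-iteration cost. First I would observe that by the preceding discussion the reachability strategy-improvement algorithm, started from the pure proper attractor selector $\xi^*$ (computable in $O(|E|)$ time), generates a sequence of pure proper selectors; since the number of pure memoryless player-1 strategies is at most $\prod_{s \in S_1}|E(s)|$ and the sequence is strictly $\prec$-increasing until it stabilizes (Theorem~\ref{thrm:stra-improve}), the algorithm performs at most $\prod_{s \in S_1}|E(s)|$ iterations. For the second bound inside the $\min$, I would apply the result of~\cite{ZP95}: $G$ reduces to an equivalent binary turn-based stochastic game $G'$ with the same player-1 and player-2 state sets and with $O(|\trans|)$ probabilistic states. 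Running the algorithm on $G'$, Lemma~\ref{lemm:TB-bound} guarantees that each non-stabilizing iteration increases $\sum_s v_i(s)$ by at least $1/4^{|S_R'|-1}$ with $|S_R'| = O(|\trans|)$, and since this sum is bounded by $|S'| \le |G'| = \mathit{poly}(|G|)$, the number of iterations is at most $|S'|\cdot 4^{|S_R'|-1} = 2^{O(|\trans|)}$. Taking the better of the two iteration counts gives the $\min$ in the statement.

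Next I would bound the cost of a single iteration. Each iteration requires: (i) computing $v_i = \winval{1}^{\overline{\gamma}_i}(\Reach(T))$, which by Section~\ref{sec:mdp} is the solution of a linear program of size polynomial in $|G|$ (in the turn-based case this is in fact a linear system once the player-2 optimal responses are fixed, but the LP bound suffices); (ii) computing $\Pre_1(v_i)$ at each state, which for a turn-based game is simply a $\max$ or $\min$ over the at most $|E(s)|$ outgoing edges, hence $O(|E|)$ total; (iii) forming the set $I$ and updating the selector, which is $O(|S|)$. All of these are $\mathit{poly}(|G|)$, and on $G'$ they remain $\mathit{poly}(|G'|) = \mathit{poly}(|G|)$ since the reduction of~\cite{ZP95} blows up the size only polynomially. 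Multiplying the iteration count by the per-iteration cost yields the claimed time bound $O\big(\min\{\prod_{s\in S_1}|E(s)|,\, 2^{O(|\trans|)}\}\cdot \mathit{poly}(|G|)\big)$, where for the first term of the $\min$ we run the algorithm directly on $G$ and for the second we run it on $G'$ and translate the computed values back (the values agree on the common states by the equivalence of $G$ and $G'$).

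The main obstacle I anticipate is not any single deep step but rather the bookkeeping needed to make the two halves of the $\min$ consistent: the bound $\prod_{s\in S_1}|E(s)|$ is most naturally stated for the original game $G$, whereas the $2^{O(|\trans|)}$ bound comes from Lemma~\ref{lemm:TB-bound} which is stated only for \emph{binary} games and therefore must be applied to $G'$. I would need to check that the~\cite{ZP95} reduction preserves exactly what is needed — identical player-1 and player-2 state sets (so that the witnessed optimal strategy transfers back), $|S_R'| = O(|\trans|)$, and polynomial overall size — and that the attractor-based initialization and the properness invariant (Lemma~\ref{lemm:proper4}) carry over to $G'$ unchanged, which they do since those arguments are purely graph-theoretic. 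A secondary point to be careful about is that Lemma~\ref{lemm:TB-bound} bounds the \emph{value} denominators, and to turn this into a progress-per-iteration statement one uses that every $v_i$ produced by the algorithm is itself the value of the reachability MDP $G'_{\gamma_i}$ with a fixed pure proper selector, hence has denominators bounded by $4^{|S_R'|-1}$ as well (apply Lemma~\ref{lemm:TB-bound} to the MDP, viewed as a degenerate turn-based game); thus consecutive distinct valuations differ in their sum by at least $1/4^{|S_R'|-1}$. Once this is in place, the rest is the routine multiplication described above, and the theorem follows.
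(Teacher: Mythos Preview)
Your proposal is correct and follows essentially the same approach as the paper: the $\prod_{s\in S_1}|E(s)|$ bound from counting pure memoryless selectors, the $2^{O(|\trans|)}$ bound via the Zwick--Paterson reduction to a binary game combined with Lemma~\ref{lemm:TB-bound}, and a polynomial per-iteration cost. You are in fact more careful than the paper on one point the paper glosses over: Lemma~\ref{lemm:TB-bound} bounds the denominators of the \emph{optimal} values, whereas the progress-per-iteration argument needs the same bound for the intermediate valuations $v_i$; your observation that each $v_i$ is the value of the MDP $G'_{\gamma_i}$, itself a degenerate binary turn-based game with the same probabilistic states, is exactly what is needed to close that gap.
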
 

The results of~\cite{GH07} presented an algorithm for turn-based 
stochastic games that works in time $O(|S_R| ! \cdot \mathit{poly}(|G|))$.
The algorithm of~\cite{GH07} works only for turn-based stochastic
games, for general turn-based stochastic games the complexity of 
the algorithm of~\cite{GH07} is better.
However, for turn-based stochastic games where the transition function 
at all states can be expressed with constantly many bits we have 
$|\trans| =O(|S_R|)$.
In these cases the reachability strategy improvement algorithm (that works 
for both concurrent and turn-based stochastic games) 
works in time $2^{O(|S_R|)} \cdot \mathit{poly}(|G|)$ 
as compared to the time $2^{O(|S_R|\cdot \log(|S_R|)} \cdot \mathit{poly}(|G|)$
of  the algorithm of~\cite{GH07}.

\section{Existence of Memoryless Optimal Strategies for Concurrent Safety Games}
A proof of the existence of memoryless optimal strategies for
safety games can be found in \cite{dAM04}: 
the proof uses results on martingales to obtain the result.
For sake of completeness we present (an alternative) proof of the result:
the proof we present is similar in spirit with the other proofs in
this paper and uses the results on MDPs to obtain the result.
The proof is very similar to the proof presented in~\cite{EY06}.

\begin{theorem}{(Memoryless optimal strategies).}\label{thrm:memory-safe}
Memoryless optimal strategies exist for all
concurrent games with safety objectives.
\end{theorem}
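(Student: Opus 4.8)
The plan is to mimic the approach used for reachability games, but now exploiting the fact that the value-iteration scheme for safety objectives converges from \emph{above} to a fixpoint, and to extract a single memoryless optimal selector directly from the limit valuation. Let $F \subs S$ be the set of safe states, and write $v^* = \va(\Safe(F))$. First I would recall (from \cite{dAM04}) that the safety value is the \emph{greatest} fixpoint of the operator $w \mapsto [F] \wedge \Pre_1(w)$, obtained as the limit of the decreasing sequence $w_0 = [F]$, $w_{k+1} = [F] \wedge \Pre_1(w_k)$; thus $v^* \leq \Pre_1(v^*)$ componentwise, and in fact $v^*(s) = \Pre_1(v^*)(s)$ for all $s \in F$ with $v^*(s) > 0$, while $v^*(s) = 0$ on the complement of the ``winning region'' $W_1 = \{s \mid v^*(s) > 0\}$ (after restricting attention to $F$, since any state outside $F$ trivially has safety value $0$ and can be made absorbing without loss of generality).

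Next I would fix a player-1 selector $\xi_1$ that is \emph{value-optimal} for $v^*$, i.e. $\Pre_{1:\xi_1}(v^*) = \Pre_1(v^*) \geq v^*$ (such a selector exists because $\Pre_1(v^*)(s)$ is the value of a one-shot matrix game, which has optimal mixed strategies). The claim is that the memoryless strategy $\overline{\xi}_1$ is optimal, i.e. $\winval{1}^{\overline{\xi}_1}(\Safe(F)) \geq v^*$. To see this, fix any player-2 strategy $\stra_2$ and set $v_k = \E^{\overline{\xi}_1, \stra_2}\bigl(v^*(\randpath_k) \cdot \mathbf{1}_{\{\randpath_0,\ldots,\randpath_k \in F\}}\bigr)$, the expected value of $v^*$ along plays that have stayed in $F$ for the first $k$ steps. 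Using $\Pre_{1:\xi_1}(v^*) \geq v^*$ and $v^* \leq [F]$, an induction on $k$ gives $v_k \geq v^*$ for all $k$ (the key point: a one-step expectation under $\overline{\xi}_1$ against \emph{any} player-2 move is at least $\Pre_{1:\xi_1}(v^*) \geq v^*$, and since $v^*$ vanishes outside $F$, the restriction to plays staying in $F$ loses nothing). Since $v_k \leq \Prb_s^{\overline{\xi}_1,\stra_2}(\randpath_0,\ldots,\randpath_k \in F)$ and the latter decreases to $\Prb_s^{\overline{\xi}_1,\stra_2}(\Safe(F))$, taking $k \to \infty$ yields $\Prb_s^{\overline{\xi}_1,\stra_2}(\Safe(F)) \geq v^*(s)$. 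As $\stra_2$ was arbitrary, $\winval{1}^{\overline{\xi}_1}(\Safe(F)) \geq v^*$; the reverse inequality $\winval{1}^{\overline{\xi}_1}(\Safe(F)) \leq v^*$ holds by definition of the value, so $\overline{\xi}_1$ is optimal.

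The subtle point — and the step I expect to be the main obstacle — is that the naive induction ``$v_k \geq v^*$'' needs $v^*(s) = 0$ to hold outside $F$ so that leaving $F$ costs nothing; this is why one first makes all unsafe states absorbing with value $0$. More delicately, one must be careful that the argument does \emph{not} secretly require $\overline{\xi}_1$ to be ``proper'' in any reachability sense: unlike the reachability case (Lemma~\ref{lem-selector}), here there is no need to force the play to reach an absorbing region, because the safety value is already a greatest fixpoint and the supermartingale-type bound $v_k \geq v^*$ suffices on its own, with the monotone limit of the $v_k$ (they are nonincreasing and bounded below by $0$) always defined. I would present these points explicitly — asserting the greatest-fixpoint characterization, reducing to absorbing unsafe states, producing the matrix-game-optimal selector, and running the supermartingale induction — following the same style as the proof of Lemma~\ref{lem-selector} and the memoryless $\vare$-optimality theorem for reachability given above.
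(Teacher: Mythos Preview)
Your argument is correct in substance, but it is a genuinely different route from the paper's. The paper fixes the same value-optimal selector $\gamma$ with $\Pre_{1:\gamma}(v^*) = \Pre_1(v^*) = v^*$, passes to the player-2 MDP $G_\gamma$, and shows that $w = 1 - v^*$ is a \emph{feasible} point of the linear program for $\Reach(S\setminus F)$ described in Section~\ref{sec:mdp}; since the reachability value is the \emph{least} feasible solution of that LP, player~2's reachability probability is at most $w$, hence player~1's safety probability under $\overline{\gamma}$ is at least $v^*$. Your proof instead runs a direct martingale-style argument on the sequence $v_k = \E^{\overline{\xi}_1,\stra_2}[v^*(\randpath_k)]$: the inequality $\Pre_{1:\xi_1}(v^*) \geq v^*$ makes $(v^*(\randpath_k))_k$ a \emph{sub}martingale, so $v_k \geq v^*$, while $v_k \leq \Prb(\randpath_k \in F)$ because $v^* \leq [F]$, and the right-hand side decreases to $\Prb(\Safe(F))$. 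This is essentially the martingale proof the paper attributes to \cite{dAM04} in the sentence preceding the theorem; the paper deliberately offers the LP-feasibility argument as an alternative aligned with the MDP machinery already set up and reused throughout (e.g.\ in Lemmas~\ref{lemm:stra-improve} and~\ref{lemm:stra-improve-safe1}). Your approach is more self-contained; theirs is shorter given Section~\ref{sec:mdp} and integrates better with the later strategy-improvement proofs.

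One small slip to fix: you call the bound ``supermartingale-type'' and say the $v_k$ are ``nonincreasing''. Since $\Pre_{1:\xi_1}(v^*) \geq v^*$, the process $v^*(\randpath_k)$ is a \emph{sub}martingale and the $v_k$ are \emph{nondecreasing}. This does not damage the proof --- what you actually use is $v^* \leq v_k \leq \Prb(\randpath_0,\ldots,\randpath_k \in F)$ and the monotone convergence of the \emph{right-hand} side to $\Prb(\Safe(F))$; the limit of $v_k$ itself is never needed --- but the wording should be corrected. Also, your $W_1 = \{s \mid v^*(s) > 0\}$ clashes with the paper's later use of $W_1$ for the almost-sure winning set $\{s \mid v^*(s) = 1\}$; since your argument never uses this set, simply drop it.
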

\begin{proof}
Consider a concurrent game structure $G$ with an safety 
objective $\Safe(F)$ for player~1.
Then it follows from the results of~\cite{dAM04} that
\[
\va(\Safe(F)) = \nu X. \big(\min\set{[F], \Pre_1(X)} \big),
\] 
where $[F]$ is the indicator function of the set $F$ and 
$\nu$ denotes the greatest fixpoint.
Let $T=S \setminus F$, and for all states $s \in T$ we have  
$\va(\Safe(F))(s)=0$, and hence any memoryless strategy from $T$ is 
an optimal strategy.
Thus without loss of generality we assume all states in $T$ 
are absorbing. 
Let $v=\va(\Safe(F))$, and since we assume all states in $T$ are absorbing it 
follows that $\Pre_1(v)=v$ (since $v$ is a fixpoint).
Let $\gamma$ be a player~1 selector such that for all states $s$ we have 
$\Pre_{1:\gamma}(v)(s)=\Pre_1(v)(s)=v(s)$.
We show that $\ov{\gamma}$ is an memoryless optimal strategy.
Consider the player-2 MDP $G_{\gamma}$ and we consider the maximal 
probability for player~2 to reach the target set $T$.
Consider the valuation $w$ defined as $w=1-v$.
For all states $s \in T$ we have $w(s)=1$.
Since $\Pre_{1:\gamma}(v)=\Pre_1(v)$ it follows that 
for all states $s \in F$ and all $a_2 \in \mov_2(s)$ we have 
\[
\Pre_{\gamma,a_2}(v)(s) \geq \Pre_1(v)(s) = v(s);
\]
in other words, for all $s \in F$ we have 
$1-\Pre_1(v)(s) =1-v(s) \geq 1-\Pre_{\gamma,a_2}(v)(s)$.
Hence for all states $s\in F$ and all moves $a_2 \in \mov_2(s)$, we have
\[
  w(s) \geq \sum_{t \in S} w(t) \cdot \trans_{\gamma}(s,a_2). 
\]
Hence it follows that $w$ is a feasible solution to the 
linear program for MDPs with reachability objectives, i.e., 
given the memoryless strategy $\ov{\gamma}$ for player~1 
the maximal probability valuation for player~2 to reach $T$ 
is at most $w$.
Hence the memoryless strategy $\ov{\gamma}$ ensures that the 
probability valaution for player~1 to stay safe in $F$ against all 
player~2 strategies is at least $v=\va(\Safe(F))$.
Optimality of $\ov{\gamma}$ follows.
\qed
\end{proof}

\section{Strategy Improvement Algorithm for Concurrent Safety Games}

\noindent
In this section we present a strategy improvement 
algorithm for concurrent games with safety objectives. 
We consider a concurrent game structure with a safe set $F$, i.e.,
the objective for player~1 is $\Safe(F)$.
The algorithm will produce a sequence of selectors 
$\gamma_0, \gamma_1, \gamma_2, \ldots$ for player 1, such that
Condition~\ref{l-improve-1}, Condition~\ref{l-improve-3} and
Condition~\ref{l-improve-2} of Section~\ref{sec-stra-improve-reach} 
are satisfied.
\begin{comment}
%
\begin{enumerate}
  \item \label{l-improve-1} 
  for all $i \geq 0$, we have 
  $\vas{\overline{\gamma}_i}(\Safe(F)) \leq \vas{\overline{\gamma}_{i+1}}(\Safe(F))$;

  \item \label{l-improve-3} 
  if there is $i \geq 0$ such that $\gamma_i = \gamma_{i+1}$, 
  then $\vas{\overline{\gamma}_i}(\Safe(F)) = \va(\Safe(F))$; and 

\item \label{l-improve-2} 
  $\lim_{i \rightarrow \infty} \vas{\overline{\gamma}_i}(\Safe(F)) = \va(\Safe(F))$. 
\end{enumerate}
%
Condition~\ref{l-improve-1} guarantees that the algorithm computes a
sequence of monotonically improving selectors. 
Condition~\ref{l-improve-3} guarantees that if a selector cannot be
improved, then it is optimal. 
Condition~\ref{l-improve-2} guarantees that the value guaranteed by
the selectors converges to the value of the game, or equivalently,
that for all $\vare > 0$, there is a number $i$ of iterations
such that the memoryless player-1 strategy $\ov{\gamma}_i$ is 
$\vare$-optimal. 
\end{comment}
Note that for concurrent safety games, there may be no $i \geq
0$ such that $\gamma_i = \gamma_{i+1}$, that is, the algorithm may fail to
generate an optimal selector, as the value can be irrational~\cite{dAM04}.
%This is because there are concurrent safety games such that the
%values are irrational~\cite{dAM04}.
We start with a few notations

\medskip\noindent{\bf Optimal selectors.}
Given a valuation $v$ and a state $s$, we define by
\[
\OptSel(v,s) =\set{\xi_1 \in \Sel_1(s) \mid \Pre_{1:\xi_1}(v)(s) =\Pre_1(v)(s)}
\]
the set of optimal selectors for $v$ at state $s$.
For an optimal selector $\xi_1 \in \OptSel(v,s)$, we define the set
of counter-optimal actions as follows:
\[
\CountOpt(v,s,\xi_1) =\set{ b \in \mov_2(s) \mid 
\Pre_{\xi_1,b}(v)(s) =\Pre_1(v)(s)}.
\]
Observe that for $\xi_1 \in \OptSel(v,s)$, for all $b \in 
\mov_2(s) \setminus \CountOpt(v,s,\xi_1)$ we have 
$\Pre_{\xi_1,b}(v)(s) > \Pre_1(v)(s)$.
We define the set of optimal selector support and the counter-optimal 
action set as follows:
\[
\begin{array}{rcl}
\OptSelCount(v,s) & = & \set{(A,B) \subs \mov_1(s) \times \mov_2(s) \mid
\exists \xi_1 \in \Sel_1(s). \ \xi_1 \in \OptSel(v,s) \\
 &  & \land %%& 
 \ \ \supp(\xi_1)=A \ \land \ \CountOpt(v,s,\xi_1)=B
};
\end{array}
\]
i.e., it consists of pairs $(A,B)$ of actions of player~1 and player~2,
such that there is an optimal selector $\xi_1$ with support $A$,
and $B$ is the set of counter-optimal actions to $\xi_1$.

\medskip\noindent{\bf Turn-based reduction.} Given a concurrent 
game $G=\langle S,\moves,\mov_1,\mov_2, \trans \rangle $ and 
a valuation $v$ we construct a turn-based stochastic game
$\ov{G}_v=\langle (\ov{S},\ov{E}), (\ov{S}_1,\ov{S}_2,\ov{S}_R),\ov{\trans}
\rangle$ as follows:
\begin{enumerate}
\item The set of states is as follows:
\[
\begin{array}{rcl}
\ov{S}& = & S \cup \set{(s,A,B) \mid s\in S, \ (A,B) \in \OptSelCount(v,s)} \\
	&\cup & \set{(s,A,b) \mid s \in S, \ (A,B) \in \OptSelCount(v,s), \ b \in B}.
\end{array}
\]

\item The state space partition is as follows: 
$\ov{S}_1=S$; $\ov{S}_2=\set{(s,A,B) \mid s \in S, (A,B) \in \OptSelCount(v,s)}$;
and $\ov{S}_R=\set{(s,A,b) \mid s\in S ,\  (A,B) \in \OptSelCount(v,s), b \in B}$.
In other words, $(\ov{S}_1,\ov{S}_2,\ov{S}_R)$ is a partition of the state 
space, where $\ov{S}_1$ are player~1 states, $\ov{S}_2$ are player~2 states,
and $\ov{S}_R$ are random or probabilistic states.

\item The set of edges is as follows:
\[ 
\begin{array}{rcl}
\ov{E} & = & \set{(s,(s,A,B)) \mid s \in S, (A,B) \in \OptSelCount(v,s)} \\
	& \cup & \set{((s,A,B),(s,A,b)) \mid b \in B} 
	\cup \set{((s,A,b),t) \mid \displaystyle t \in \bigcup_{a \in A} \dest(s,a,b)}.
\end{array}
\]

\item The transition function $\ov{\trans}$ for all states in $\ov{S}_R$ 
is uniform over its successors.
\end{enumerate}
Intuitively, the reduction is as follows.
Given the valuation $v$, state $s$ is a player~1 state where
player~1 can select a pair $(A,B)$ (and move to
state $(s,A,B)$) with $A \subs \mov_1(s)$ 
and $B \subs \mov_2(s)$ such that there is an optimal 
selector $\xi_1$ with support exactly $A$ and the set of
counter-optimal actions to $\xi_1$ is the set $B$.
From a player~2 state $(s,A,B)$, player~2 can choose any action
$b$ from the set $B$, and move to state $(s,A,b)$.
A state $(s,A,b)$ is a probabilistic state where all the 
states in $\bigcup_{a\in A} \dest(s,a,b)$ are chosen 
uniformly at random.
Given a set $F \subseteq S$ we denote by $\ov{F}= F \cup 
\set{(s,A,B) \in\ov{S} \mid s \in F} \cup 
\set{(s,A,b) \in\ov{S} \mid s \in F}$.
We refer to the above reduction as $\TB$, i.e., 
$(\ov{G}_v,\ov{F})=\TB(G,v,F)$.

\medskip\noindent{\bf Value-class of a valuation.}
Given a valuation $v$ and a real $0\leq r \leq 1$, 
the \emph{value-class} $U_r(v)$ of value $r$ is the set of 
states with valuation $r$, i.e., 
$U_r(v)=\set{s \in S \mid v(s)=r}$

\subsection{The strategy-improvement algorithm} 

\noindent{\bf Ordering of strategies.}
Let $G$ be a concurrent game and $F$ be the set of safe states.
Let $T =S \setminus F$.
Given a concurrent game structure $G$ with a safety objective $\Safe(F)$,
the set of \emph{almost-sure winning} states is the set of states $s$ such that
the value at $s$ is~$1$, i.e., $W_1=\set{s\in S \mid \va(\Safe(F))=1}$
is the set of almost-sure winning states.
An optimal strategy from $W_1$ is referred as an almost-sure winning 
strategy.
The set $W_1$ and an almost-sure winning strategy can be computed in 
linear time by the algorithm given in~\cite{dAH00}.
We assume without loss of generality that all states in $W_1 \union T$
are absorbing. 
We recall the preorder $\prec$ on the strategies for player 1 (as defined in 
Section~\ref{subsec:reach-stra})
as follows:
given two player 1 strategies $\stra_1$ and $\stra_1'$, let
$\stra_1 \prec \stra_1'$ if the following two conditions hold:
(i)~$\vas{\stra_1}(\Safe(F)) \leq \vas{\stra_1'}(\Safe(F))$; and 
(ii)~$\vas{\stra_1}(\Safe(F))(s) < \vas{\stra_1'}(\Safe(F))(s)$ 
for some state $s\in S$.
Furthermore, we write 
$\stra_1 \preceq \stra_1'$ if either $\stra_1 \prec \stra_1'$ or 
$\stra_1 = \stra_1'$.
We first present an example that shows the improvements 
based only on $\Pre_1$ operators are not sufficient for 
safety games, even on turn-based games and then present our algorithm.

\begin{example}{}\label{examp:conc-safety}
Consider the turn-based stochastic game shown in Fig~\ref{fig:example-tbs}, where
the $\Box$ states are player~1 states, the $\Diamond$ states are 
player~2 states, and $\bigcirc$ states are random states with probabilities
labeled on edges.
The safety goal is to avoid the state $s_4$. 
Consider a memoryless strategy $\stra_1$ for player~1 that chooses the successor $s_0 \to
s_2$, and the counter-strategy $\stra_2$ for player~2 chooses $s_1 \to s_0$.
Given the strategies $\stra_1$ and $\stra_2$, the value at 
$s_0,s_1$ and $s_2$ is $1/3$, and since all
successors of $s_0$ have value $1/3$, the value cannot be improved by $\Pre_1$.
However, note that if player~2 is restricted to choose only value optimal 
selectors for the value $1/3$, then player~1 can switch to the strategy 
$s_0\to s_1$ and ensure that the game stays in the value class $1/3$ 
with probability~1.
Hence switching to $s_0 \to s_1$ would force player~2 to select a 
counter-strategy that switches to the strategy $s_1 \to s_3$, and
thus player~1 can get a value $2/3$.
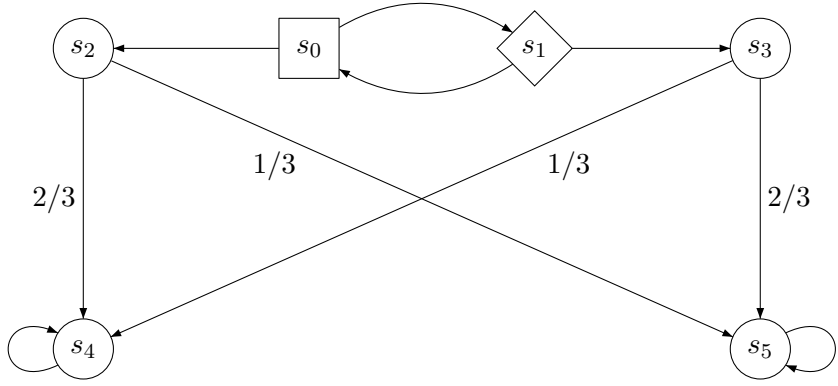
\begin{figure}[!tb]
\begin{center} 
\def\fsize{\normalsize}

\begin{picture}(75,45)(15,0)
{\fsize
\node[Nmarks=n](n2)(10,50){$s_2$}
\node[Nmarks=n](n4)(10,10){$s_4$}
\node[Nmarks=n](n3)(100,50){$s_3$}
\node[Nmarks=n](n5)(100,10){$s_5$}
\node[Nmarks=n, Nmr=0](n0)(40,50){$s_0$}
\rpnode[Nmarks=n](n1)(70,50)(4,5.0){$s_1$}
\drawloop[loopangle=180, loopdiam=6](n4){}
\drawloop[loopangle=0, loopdiam=6](n5){}
\drawedge[ELpos=50, ELside=l, curvedepth=0](n0,n2){}
\drawedge[ELpos=50, ELside=l, curvedepth=0](n1,n3){}
\drawedge[ELpos=50, ELside=r, curvedepth=0](n2,n4){$2/3$}
\drawedge[ELpos=30, ELside=r, curvedepth=0](n2,n5){$1/3$}
\drawedge[ELpos=50, ELside=l, curvedepth=0](n3,n5){$2/3$}
\drawedge[ELpos=30, ELside=l, curvedepth=0](n3,n4){$1/3$}
\drawedge[ELpos=50, ELside=l, curvedepth=6](n0,n1){}
\drawedge[ELpos=30, ELside=l, curvedepth=6](n1,n0){}
}
\end{picture}
\end{center}
\caption{A turn-based stochastic safety game.}\label{fig:example-tbs}
\end{figure}
\qed
\end{example}

\smallskip\noindent{\bf Informal description of Algorithm~\ref{algorithm:strategy-improve-safe}.}
We first present the basic strategy improvement algorithm 
(Algorithm~\ref{algorithm:strategy-improve-safe}) and will later 
present a convergent version (Algorithm~\ref{algorithm:strategy-convergent}) 
for computing the values for all states in $S \setminus W_1$.
The algorithm (Algorithm~\ref{algorithm:strategy-improve-safe}) iteratively 
improves player-1 strategies according to the preorder $\prec$. 
The algorithm starts with the random selector 
$\gamma_0=\overline{\xi}_1^\unif$ that plays at all states all actions
uniformly at random.
At iteration $i+1$, the algorithm considers the memoryless player-1 strategy
$\overline{\gamma}_i$ and computes the value $\vas{\overline{\gamma}_i}(\Safe(F))$.
Observe that since $\overline{\gamma}_i$ is a memoryless strategy, the
computation of $\vas{\overline{\gamma}_i}(\Safe(F))$ involves solving the 2-MDP 
$G_{{\gamma}_i}$.
The valuation $\vas{\overline{\gamma}_i}(\Safe(F))$ is named $v_i$.
For all states $s$ such that $\Pre_1(v_i)(s) > v_i(s)$, 
the memoryless strategy at $s$ is modified to a selector 
that is value-optimal for $v_i$. 
The algorithm then proceeds to the next iteration.
If $\Pre_1(v_i) = v_i$, then the algorithm constructs the 
game $(\ov{G}_{v_i},\ov{F})=\TB(G,v_i,F)$, and computes
$\ov{A}_i$ as the set of almost-sure winning states in $\ov{G}_{v_i}$
for the objective $\Safe(\ov{F})$.
Let $U=(\ov{A}_i \cap S) \setminus W_1$.
If $U$ is non-empty, then a selector $\gamma_{i+1}$ is obtained at $U$ 
from an pure memoryless optimal strategy (i.e.,
an almost-sure winning strategy) in $\ov{G}_{v_i}$, and
the algorithm proceeds to iteration $i+1$.
If $\Pre_1(v_i)=v_i$ and $U$ is empty, then the algorithm stops and returns 
the memoryless strategy $\overline{\gamma}_i$ for player~1.
Unlike strategy improvement algorithms for turn-based games (see
\cite{Con93} for a survey),
Algorithm~\ref{algorithm:strategy-improve-safe} is not guaranteed to
terminate (see Example~\ref{ex:counter-soda}).
We will show that Algorithm~\ref{algorithm:strategy-improve-safe} 
has both the monotonicity and optimality on termination properties,
however, as we will illustrate in Example~\ref{ex:counter-soda}, 
the valuations of Algorithm~\ref{algorithm:strategy-improve-safe}
need not necessarily converge to the values.
However, for turn-based stochastic games Algorithm~\ref{algorithm:strategy-improve-safe}
correctly converges to the values.
We will show that Algorithm~\ref{algorithm:strategy-convergent} has 
all the desired properties (i.e., monotonicity, optimality on termination,
and convergence to the values).

\begin{algorithm*}[t]
\caption{Safety Strategy-Improvement Algorithm}
\label{algorithm:strategy-improve-safe}
{
\begin{tabbing}
aaa \= aaa \= aaa \= aaa \= aaa \= aaa \= aaa \= aaa \kill
\\
\> {\bf Input:} a concurrent game structure $G$ with safe set $F$. \\
\>   {\bf Output:} a strategy $\overline{\gamma}$ for player~1. \\ 

\> 0. Compute $W_1=\set{s \in S \mid \va(\Safe(F))(s)=1}$. \\
\> 1. Let $\gamma_0=\xi_1^\unif$ and $i=0$. \\
\> 2. Compute $v_0 = \vas{\overline{\gamma}_0}(\Safe(F))$. \\

\> 3. {\bf do \{ } \\ 
\>\> 3.1. Let $I= \set{s \in S \setminus (W_1 \cup T) \mid \Pre_1(v_i)(s) > v_i(s)}$. \\
\>\> 3.2 {\bf if} $I \neq \emptyset$, {\bf then} \\
\>\>\> 3.2.1 Let $\xi_1$ be a player-1 
        selector such that for all states $s \in I$, \\
\>\>\>\>	we have $\Pre_{1:\xi_1}(v_i)(s) =\Pre_1(v_i)(s) > v_i(s)$.\\
\>\>\> 3.2.2 The player-1 selector $\gamma_{i+1}$ is defined
          as follows: for each state $s\in S$, let\\
\>\>\>\> $	\displaystyle 
	\gamma_{i+1}(s)=
	\begin{cases}
	\gamma_i(s) & \text{\ if \ }s\not\in I;\\
	\xi_1(s) & \text{\ if\ }s\in I.
	\end{cases}$
	\\ 
\>\> 3.3 {\bf else}  \\
\>\>\> 3.3.1 let$(\ov{G}_{v_i},\ov{F})=\TB(G,v_i,F)$ \\
\>\>\> 3.3.2 let $\ov{A}_i$ be the set of almost-sure winning states in $\ov{G}_{v_i}$
	for $\Safe(\ov{F})$ and \\
\>\>\>\> $\ov{\stra}_1$ be a pure memoryless almost-sure winning strategy from the set $\ov{A}_i$.\\
\>\>\> 3.3.3 {\bf if} ($(\ov{A}_i \cap S) \setminus W_1 \neq \emptyset$) \\
\>\>\>\> 3.3.3.1 let $U= (\ov{A}_i \cap S)\setminus W_1$ \\
\>\>\>\> 3.3.3.2 The player-1 selector $\gamma_{i+1}$ is defined
          as follows: for $s\in S$, let\\
\>\>\>\> $	\displaystyle 
	\gamma_{i+1}(s)=
	\begin{cases}
	\gamma_i(s) & \text{\ if \ }s\not\in U;\\
	\xi_1(s) & \text{\ if\ }s\in U, \xi_1(s) \in \OptSel(v_i,s), 
				\supp(\xi_1(s))=A, \\
	&\ \ \ov{\stra}_1(s)=(s,A,B), B=\CountOpt(s,v,\xi_1).
	\end{cases}$
	\\
\>\> 3.4. Compute $v_{i+1} =\vas{\overline{\gamma}_{i+1}}(\Safe(F))$. \\
\>\> 3.5. Let $i=i+1$. \\
\> {\bf \} until } $I=\emptyset$ and $(\ov{A}_{i-1} \cap S) \setminus W_1=\emptyset$. \\
\> 4. {\bf return} $\overline{\gamma}_{i}$.  
\end{tabbing}
}
\end{algorithm*}

\begin{algorithm*}[t]
\caption{$k$-Uniform Restricted Safety Strategy-Improvement Algorithm}
\label{algo:k-uniform}
{
\begin{tabbing}
aaa \= aaa \= aaa \= aaa \= aaa \= aaa \= aaa \= aaa \kill
\\
\> {\bf Input:} a concurrent game structure $G$ with safe set $F$, and number $k$. \\
\>   {\bf Output:} a strategy $\overline{\gamma}$ for player~1. \\ 

\> 0. Compute $W_1=\set{s \in S \mid \va(\Safe(F))(s)=1}$; and $k=\max\set{k,|\moves|}$. \\
\> 1. Let $\gamma_0=\xi_1^\unif$ and $i=0$. \\
\> 2. Compute $v_0 = \vas{\overline{\gamma}_0}(\Safe(F))$. \\

\> 3. {\bf do \{ } \\ 
\>\> 3.1. Let $I_k= \set{s \in S \setminus (W_1 \cup T) \mid \sup_{\xi_1' \in \Sel^k(s)} \Pre_{1:\xi_1'}(v_i)(s) > v_i(s)}$. \\
\>\> 3.2 {\bf if} $I_k \neq \emptyset$, {\bf then} \\
\>\>\> 3.2.1 Let $\xi_1$ be a $k$-uniform player-1 
        selector such that for all states $s \in I$, \\
\>\>\>\>	we have $\Pre_{1:\xi_1}(v_i)(s) = \sup_{\xi_1' \in \Sel^k(s)} \Pre_{1:\xi_1'}(v_i)(s) > v_i(s)$.\\
\>\>\> 3.2.2 The player-1 selector $\gamma_{i+1}$ is defined
          as follows: for each state $s\in S$, let\\
\>\>\>\> $	\displaystyle 
	\gamma_{i+1}(s)=
	\begin{cases}
	\gamma_i(s) & \text{\ if \ }s\not\in I_k;\\
	\xi_1(s) & \text{\ if\ }s\in I_k.
	\end{cases}$
	\\ 
\>\> 3.3 {\bf else}  \\
\>\>\> 3.3.1 let$(\ov{G}_{v_i}^k,\ov{F})=\TB(G,v_i,F,k)$ \\
\>\>\> 3.3.2 let $\ov{A}_i^k$ be the set of almost-sure winning states in $\ov{G}_{v_i}^k$
	for $\Safe(\ov{F})$ and \\
\>\>\>\> $\ov{\stra}_1$ be a pure memoryless almost-sure winning strategy from the set $\ov{A}_i^k$.\\
\>\>\> 3.3.3 {\bf if} ($(\ov{A}_i^k \cap S) \setminus W_1 \neq \emptyset$) \\
\>\>\>\> 3.3.3.1 let $U= (\ov{A}_i^k \cap S)\setminus W_1$ \\
\>\>\>\> 3.3.3.2 The player-1 selector $\gamma_{i+1}$ is defined
          as follows: for $s\in S$, let\\
\>\>\>\> $	\displaystyle 
	\gamma_{i+1}(s)=
	\begin{cases}
	\gamma_i(s) & \text{\ if \ }s\not\in U;\\
	\xi_1(s) & \text{\ if\ }s\in U, \xi_1(s) \in \OptSel(v_i,s,k), \supp(\xi_1(s))=A, \\
	&\ \ \ov{\stra}_1(s)=(s,A,B), B=\CountOpt(s,v,\xi_1,k).
	\end{cases}$
	\\
\>\> 3.4. Compute $v_{i+1} =\vas{\overline{\gamma}_{i+1}}(\Safe(F))$. \\
\>\> 3.5. Let $i=i+1$. \\
\> {\bf \} until } $I_k=\emptyset$ and $(\ov{A}_{i-1}^k \cap S) \setminus W_1=\emptyset$. \\
\> 4. {\bf return} $\overline{\gamma}_{i}$.  
\end{tabbing}
}
\end{algorithm*}

\begin{lemma}{}\label{lemm:stra-improve-safe1}
Let $\gamma_i$ and $\gamma_{i+1}$ be the player-1 selectors obtained at 
iterations $i$ and $i+1$ of Algorithm~\ref{algorithm:strategy-improve-safe}.
Let $I=\set{s \in S \setminus (W_1 \cup T) \mid \Pre_1(v_i)(s) > v_i(s)}$. 
Let $v_i=\vas{\overline{\gamma}_i}(\Safe(F))$ and 
$v_{i+1}=\vas{\overline{\gamma}_{i+1}}(\Safe(F))$.
Then $v_{i+1}(s)  \geq  \Pre_1(v_i)(s)$ for all states $s\in S$;
and therefore
$v_{i+1}(s) \geq v_i(s)$ for all states $s\in S$, 
and $v_{i+1}(s) > v_i(s)$ for all states $s\in I$.
\end{lemma}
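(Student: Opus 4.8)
The plan is to adapt, essentially verbatim, the argument used for reachability in Lemma~\ref{lemm:stra-improve}, transported to the safety setting by complementation; the statement to prove here concerns the $\Pre_1$-improvement step, i.e.\ the case where $\gamma_{i+1}$ agrees with $\gamma_i$ off $I$ and is value-optimal for $v_i$ at each state of $I$ (step~3.2.2 of Algorithm~\ref{algorithm:strategy-improve-safe}). Set $T=S\setminus F$, and work with the complementary valuations $w_i=1-v_i$ and $\wh{w}_i=1-\Pre_1(v_i)$. Since $v_i=\vas{\overline{\gamma}_i}(\Safe(F))$ is the value of $\Safe(F)$ for player~1 in the $2$-MDP $G_{\gamma_i}$, its complement $w_i$ is the value of $\Reach(T)$ for player~2 in $G_{\gamma_i}$; in particular $w_i$ is a feasible solution of the linear program of Section~\ref{sec:mdp} for that MDP, because this value is precisely the pointwise-least feasible solution. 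I would first record that $\Pre_1(v_i)\ge v_i$ at every state: $v_i$ satisfies the one-player Bellman equation $v_i=\min\{[F],\Pre_{1:\gamma_i}(v_i)\}$, so $v_i\le\Pre_{1:\gamma_i}(v_i)\le\Pre_1(v_i)$. Together with the definition of $I$ and the fact that every state of $W_1\cup T$ is absorbing, this gives $\Pre_1(v_i)=v_i$ off $I$ and $\Pre_1(v_i)>v_i$ exactly on $I$; hence $\wh{w}_i=w_i$ on $S\setminus I$, $\wh{w}_i<w_i$ on $I$, and $\wh{w}_i\le w_i$ everywhere.

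The core of the proof is to verify that $\wh{w}_i$ is a feasible solution of the linear program of Section~\ref{sec:mdp} for the $2$-MDP $G_{\gamma_{i+1}}$ with target $T$. The bounds $0\le\wh{w}_i\le1$ are immediate; for $s\in T$ we have $v_i(s)=0$ and $s\notin I$, so $\wh{w}_i(s)=w_i(s)=1$ and the inequality constraints hold trivially since $s$ is absorbing; similarly the inequality constraints at absorbing states $s\in W_1$ are trivial. For $s\in F\setminus(W_1\cup I)$ we have $\gamma_{i+1}(s)=\gamma_i(s)$, and feasibility of $w_i$ for $G_{\gamma_i}$ together with $\wh{w}_i(s)=w_i(s)$ and $\wh{w}_i\le w_i$ yields $\wh{w}_i(s)\ge\sum_{t\in S}\wh{w}_i(t)\cdot\trans_{\gamma_{i+1}}(s,a_2)(t)$ for every $a_2\in\mov_2(s)$. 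Finally, for $s\in I$, value-optimality of $\gamma_{i+1}(s)$ for $v_i$ gives $\Pre_{\gamma_{i+1},a_2}(v_i)(s)\ge\Pre_1(v_i)(s)$ for every $a_2\in\mov_2(s)$; subtracting from $1$ and using that $\trans_{\gamma_{i+1}}(s,a_2)$ is a probability distribution turns this into $\sum_{t\in S}w_i(t)\cdot\trans_{\gamma_{i+1}}(s,a_2)(t)\le1-\Pre_1(v_i)(s)=\wh{w}_i(s)$, and $\wh{w}_i\le w_i$ then delivers the required constraint. This last case is the one requiring genuine care: it is exactly here that the weakening $\wh{w}_i\le w_i$ must be pushed inside the transition sum, and it is the analogue of the delicate step in the proof of Lemma~\ref{lemm:stra-improve}.

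To conclude, note that the value of $\Reach(T)$ for player~2 in $G_{\gamma_{i+1}}$ equals $1-v_{i+1}$ (leaving $F$ is the same event as reaching $T$), and it is the pointwise-least feasible solution of the above linear program; hence feasibility of $\wh{w}_i$ forces $1-v_{i+1}\le\wh{w}_i=1-\Pre_1(v_i)$, i.e.\ $v_{i+1}\ge\Pre_1(v_i)$ at every state. Combining this with $\Pre_1(v_i)\ge v_i$ (with equality off $I$ and strict inequality on $I$) gives $v_{i+1}(s)\ge v_i(s)$ for all $s\in S$ and $v_{i+1}(s)>v_i(s)$ for all $s\in I$, as required. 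Compared with the reachability case, the safety case is actually slightly simpler: no appeal to properness of $\gamma_{i+1}$ is needed, since for safety the losing event is directly the reachability objective $\Reach(T)$, whose value is governed by the least solution of the linear program with no end-component correction. The remaining situation, where $I=\emptyset$ but $\gamma_{i+1}$ differs from $\gamma_i$ because of the non-local step~3.3, is not covered by this argument; there the claimed bound reduces to $v_{i+1}\ge\Pre_1(v_i)=v_i$ and is handled in the analysis of the turn-based reduction $\TB$.
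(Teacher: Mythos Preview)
Your proof is correct and follows essentially the same approach as the paper's own proof: define $w_i=1-v_i$ and $\wh{w}_i=1-\Pre_1(v_i)$, verify that $\wh{w}_i$ is feasible for the reachability linear program in the $2$-MDP $G_{\gamma_{i+1}}$ with target $T$ by splitting into the cases $s\in S\setminus I$ (where $\gamma_{i+1}=\gamma_i$) and $s\in I$ (where $\gamma_{i+1}(s)$ is value-optimal for $v_i$), and conclude using that the reachability value is the least feasible solution. Your write-up is in fact slightly more careful than the paper's in two places---you explicitly justify $\Pre_1(v_i)\ge v_i$ from the Bellman equation for $v_i$ in $G_{\gamma_i}$, and you note that the non-local step~3.3 (the case $I=\emptyset$) is outside the scope of this lemma and handled separately---both of which are welcome.
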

\begin{proof}
The proof is essentially similar to the proof of Lemma~\ref{lemm:stra-improve}, and
we present the details for completeness.
Consider the valuations $v_i$ and $v_{i+1}$ obtained at iterations $i$ and 
$i+1$, respectively, and let $w_i$ be the valuation defined by 
$w_i(s) = 1 - v_i(s)$ for all states $s \in S$. 
The counter-optimal strategy for player~2 to minimize
$v_{i+1}$ is obtained by maximizing the probability to reach $T$.
Let 
\[
  \wh{w}_{i}(s) =
  \begin{cases} 
    w_i(s) & \text{\ if\ }s \in S \setminus I; \\
    1-\Pre_1(v_i)(s) < w_i(s) &\text{\ if\ }s \in I. 
  \end{cases}
\]
In other words, $\wh{w}_{i} = 1 - \Pre_1(v_i)$, and we also have 
$\wh{w}_{i} \leq w_i$.
We now show that $\wh{w}_{i}$ is a feasible solution to the 
linear program for MDPs with the objective $\Reach(T)$, as
described in Section~\ref{sec:mdp}.
Since $v_i =\vas{\overline{\gamma}_i}(\Safe(F))$, it follows that
for all states $s\in S$ and all moves $a_2 \in \mov_2(s)$, we have
\[
  w_i(s) \geq \sum_{t \in S} w_i(t) \cdot \trans_{\gamma_i}(s,a_2).  
\]
For all states $s \in S \setminus I$, 
we have $\gamma_i(s)=\gamma_{i+1}(s)$ and
$\wh{w}_{i}(s) = w_i(s)$,
and since $\wh{w}_{i} \leq w_i$, it follows that for all 
states $s \in S \setminus I$ and all moves $a_2 \in \mov_2(s)$, we have
\[
  \wh{w}_{i}(s)=w_i(s) \geq \sum_{t \in S} \wh{w}_{i}(t) \cdot \trans_{\gamma_{i+1}}(s,a_2) 
\qquad \text{( for  $s\in S\setm I$)}.
\]

Since for $s \in I$ the selector $\gamma_{i+1}(s)$ is obtained as an
optimal selector for $\Pre_1(v_i)(s)$, it follows that 
for all states $s\in I$ and all moves $a_2 \in \mov_2(s)$, we have
\[
\Pre_{\gamma_{i+1},a_2}(v_i)(s) \geq \Pre_1(v_i)(s);
\]
in other words, $1-\Pre_1(v_i)(s) \geq 1-\Pre_{\gamma_{i+1},a_2}(v_i)(s)$.
Hence for all states $s\in I$ and all moves $a_2 \in \mov_2(s)$, we have
\[
  \wh{w}_{i}(s) \geq \sum_{t \in S} w_{i}(t) \cdot \trans_{\gamma_{i+1}}(s,a_2). 
\]
Since $\wh{w}_{i} \leq w_i$,  
for all states $s\in I$ and all moves $a_2 \in \mov_2(s)$, we have
\[
  \wh{w}_{i}(s) \geq \sum_{t \in S} \wh{w}_{i}(t) \cdot \trans_{\gamma_{i+1}}(s,a_2) 
\qquad \text{( for  $s\in I$)}.
\] 
Hence it follows that $\wh{w}_{i}$ is a feasible solution to the 
linear program for MDPs with reachability objectives.
Since the reachability valuation 
for player~2 for $\Reach(T)$ is the least
solution (observe that the objective function of the linear program is a 
minimizing function), it follows that 
$v_{i+1} \geq 1-\wh{w}_{i} = \Pre_1(v_i)$.
Thus we obtain
$v_{i+1}(s) \geq v_i(s)$ for all states $s \in S$, and 
$v_{i+1}(s) > v_i(s)$ for all states $s \in I$.
\qed
\end{proof}

Recall that by Example~\ref{examp:conc-safety} it follows that 
improvement by only step~3.2 is not sufficient to guarantee
convergence to optimal values.
We now present a lemma about the turn-based reduction,
and then show that step 3.3 also leads to an improvement.
Finally, in Theorem~\ref{thrm:safe-termination} we show that if
improvements by step 3.2 and step 3.3 are not possible, then 
the optimal value and an optimal strategy is obtained.

\begin{lemma}{}\label{lemm:stra-improve-safetb}
Let $G$ be a concurrent game with a set $F$ of safe states.
Let $v$ be a valuation and 
consider $(\ov{G}_v,\ov{F})=\TB(G,v,F)$.
Let $\ov{A}$ be the set of almost-sure winning states in $\ov{G}_v$
for the objective $\Safe(\ov{F})$, and let $\ov{\stra}_1$ be a
pure memoryless almost-sure winning strategy from $\ov{A}$ in
$\ov{G}_v$.
Consider a memoryless strategy $\stra_1$ in $G$ for 
states in $\ov{A}\cap S$ as follows:
if $\ov{\stra}_1(s)=(s,A,B)$, then $\stra_1(s) \in 
\OptSel(v,s)$ such that $\supp(\stra_1(s))=A$ and $\CountOpt(v,s,\stra_1(s))=B$.
Consider a pure memoryless strategy $\stra_2$ 
for player~2.
If for all states $s \in \ov{A} \cap S$, we have 
$\stra_2(s) \in \CountOpt(v,s,\stra_1(s))$, then 
for all $s \in \ov{A} \cap S$, we have 
$\Prb_s^{\stra_1,\stra_2}(\Safe(F))=1$. 
\end{lemma}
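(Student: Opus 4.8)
The plan is to prove the statement by showing that, under the strategies $\stra_1$ and $\stra_2$ restricted as in the hypothesis, every state reachable with positive probability from $\ov{A}\cap S$ again lies in $\ov{A}\cap S$; since $\ov{A}\cap S\subseteq F$ this forces the play to stay in $F$ forever, and hence $\Prb_s^{\stra_1,\stra_2}(\Safe(F))=1$ for every $s\in\ov{A}\cap S$.

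First I would recall the standard structural facts about the almost-sure winning set $\ov{A}$ and a pure memoryless almost-sure winning strategy $\ov{\stra}_1$ for the safety objective $\Safe(\ov{F})$ in a turn-based stochastic game (see, e.g., \cite{dAH00}): (a) $\ov{A}\subseteq\ov{F}$; (b) for every player-1 state $s\in\ov{A}$ we have $\ov{\stra}_1(s)\in\ov{A}$, since a move to a state outside $\ov{A}$ could be followed by a player-2 escape strategy from that state and would yield a positive-probability violation of $\Safe(\ov{F})$, contradicting $s\in\ov{A}$; and (c) every player-2 state and every random state in $\ov{A}$ has all of its $\ov{G}_v$-successors in $\ov{A}$.

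Next I would propagate (b) and (c) through the three-level structure of $\TB(G,v,F)$. Fix $s\in\ov{A}\cap S$; observe $\ov{A}\cap S=\ov{A}\cap\ov{S}_1$, and write $\ov{\stra}_1(s)=(s,A,B)$. By (b), $(s,A,B)\in\ov{A}$; since $(s,A,B)\in\ov{S}_2$, fact (c) gives $(s,A,b)\in\ov{A}$ for every $b\in B$; since $(s,A,b)\in\ov{S}_R$, fact (c) gives that every successor of $(s,A,b)$ lies in $\ov{A}$, and by the edge definition of the reduction these successors are exactly the original states in $\bigcup_{a\in A}\dest(s,a,b)$, so $\bigcup_{a\in A}\dest(s,a,b)\subseteq\ov{A}\cap S$ for every $b\in B$. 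Now by the construction of $\stra_1$ from $\ov{\stra}_1$ we have $\supp(\stra_1(s))=A$ and $\CountOpt(v,s,\stra_1(s))=B$, hence for every move $b\in\CountOpt(v,s,\stra_1(s))$ the set $\bigcup_{a\in\supp(\stra_1(s))}\dest(s,a,b)$ of possible successors of $s$ when player~1 draws a move from $\stra_1(s)$ and player~2 plays $b$ is contained in $\ov{A}\cap S$.

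Finally I would close with a one-line induction on play length: since $\stra_2$ is pure memoryless and $\stra_2(s)\in\CountOpt(v,s,\stra_1(s))$ for all $s\in\ov{A}\cap S$, the previous paragraph shows that from any $s_0\in\ov{A}\cap S$ every state reachable with positive probability under $(\stra_1,\stra_2)$ lies in $\ov{A}\cap S$; and by the definition of $\ov{F}$ we have $\ov{A}\cap S\subseteq\ov{F}\cap S=F$, so the play never leaves $F$, giving $\Prb_{s_0}^{\stra_1,\stra_2}(\Safe(F))=1$. The main point requiring care is fact (b)--(c), i.e.\ that $(\ov{A},\ov{\stra}_1)$ is a trap for player~2 in the turn-based safety game $\ov{G}_v$; once this standard fact is invoked (or replaced by the short escape-strategy argument above), the rest is routine bookkeeping over the layered reduction.
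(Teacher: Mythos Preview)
Your proof is correct, and it takes a different (and arguably cleaner) route than the paper's own argument. The paper proceeds by first lifting $\stra_2$ to a pure memoryless strategy $\ov{\stra}_2$ in $\ov{G}_v$ (sending $(s,A,B)$ to $(s,A,\stra_2(s))$), and then reasons about the Markov chain $G_{\ov{\stra}_1,\ov{\stra}_2}$ in terms of closed recurrent classes: since $\ov{\stra}_1$ is almost-sure winning for $\Safe(\ov{F})$, every closed recurrent class that meets $\ov{A}$ lies inside $\ov{A}$ and is reached with probability~1 from $\ov{A}$; the same recurrent-class structure is then read off in $G_{\stra_1,\stra_2}$ restricted to $\ov{A}\cap S$.

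Your argument bypasses the recurrent-class analysis entirely: you use the standard trap characterization of the almost-sure safety set $\ov{A}$ in a turn-based stochastic game (player-2 and random states in $\ov{A}$ have all successors in $\ov{A}$, and the winning strategy $\ov{\stra}_1$ keeps player-1 states in $\ov{A}$), push this closure property through the three layers of the $\TB$ reduction, and conclude directly that $\ov{A}\cap S$ is an invariant set of the Markov chain $G_{\stra_1,\stra_2}$ contained in $F$. This is more elementary and gives a slightly stronger intermediate statement (invariance of $\ov{A}\cap S$, not just that the recurrent classes are contained in it). The paper's approach, on the other hand, makes the correspondence between the two Markov chains $G_{\stra_1,\stra_2}$ and $(\ov{G}_v)_{\ov{\stra}_1,\ov{\stra}_2}$ explicit, which is the lens used again in the proofs of Lemma~\ref{lemm:stra-improve-safe2} and Theorem~\ref{thrm:safe-termination}.
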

\begin{proof}
We analyze the Markov chain arising after the player fixes
the memoryless strategies $\stra_1$ and $\stra_2$.
Given the strategy $\stra_2$ consider the strategy $\ov{\stra}_2$
as follows: if $\ov{\stra}_1(s)=(s,A,B)$ and $\stra_2(s)=b \in 
\CountOpt(v,s,\stra_1(s))$, then at state $(s,A,B)$ choose
the successor $(s,A,b)$. 
Since $\ov{\stra}_1$ is an almost-sure winning strategy for 
$\Safe(\ov{F})$, it follows that in the Markov chain obtained by 
fixing $\ov{\stra}_1$ and $\ov{\stra}_2$ in $\ov{G}_v$, 
all closed connected recurrent set of states that intersect 
with $\ov{A}$ are contained in $\ov{A}$, and from all 
states of $\ov{A}$ the closed connected recurrent set of states 
within $\ov{A}$ are reached with probability~1. 
It follows that in the Markov chain obtained from fixing 
$\stra_1$ and $\stra_2$ in $G$ 
all closed connected recurrent set of states that intersect 
with $\ov{A}\cap S $ are contained in $\ov{A} \cap S$, and from all 
states of $\ov{A}\cap S$ the closed connected recurrent set of states 
within $\ov{A} \cap S$ are reached with probability~1. 
The desired result follows. 
\qed
\end{proof}

\begin{lemma}{}\label{lemm:stra-improve-safe2}
Let $\gamma_i$ and $\gamma_{i+1}$ be the player-1 selectors obtained at 
iterations $i$ and $i+1$ of Algorithm~\ref{algorithm:strategy-improve-safe}.
Let $I=\set{s \in S \setminus (W_1 \cup T) \mid \Pre_1(v_i)(s) > v_i(s)}=\emptyset$,
and $(\ov{A}_i \cap S)\setminus W_1 \neq \emptyset$.  
Let $v_i=\vas{\overline{\gamma}_i}(\Safe(F))$ and 
$v_{i+1}=\vas{\overline{\gamma}_{i+1}}(\Safe(F))$.
Then 
$v_{i+1}(s) \geq v_i(s)$ for all states $s\in S$, 
and $v_{i+1}(s) > v_i(s)$ for some state $s\in (\ov{A}_i \cap S) \setminus W_1$.
\end{lemma}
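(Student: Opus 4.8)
The plan is to prove the two inequalities separately: monotonicity ($v_{i+1}\ge v_i$ on all of $S$) is obtained exactly as in Lemma~\ref{lemm:stra-improve-safe1}, and the strict improvement on $U:=(\ov{A}_i\cap S)\setminus W_1$ is obtained by a submartingale argument built around the turn-based reduction $\TB$. For monotonicity: since step~3.3 is taken we have $I=\emptyset$, and together with $v_i=\vas{\overline{\gamma}_i}(\Safe(F))$ this gives $\Pre_1(v_i)=v_i$ (in particular $\Pre_1(v_i)(s)=v_i(s)$ for $s\in U\subseteq F\setminus W_1$). The selector $\gamma_{i+1}$ coincides with $\gamma_i$ off $U$ and on $U$ is an optimal selector for $v_i$ (step~3.3.3.2), so $\Pre_{\gamma_{i+1},b}(v_i)(s)\ge v_i(s)$ for every $s\in S$ and every $b\in\mov_2(s)$. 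Hence $w_i:=1-v_i$ is a feasible solution of the reachability-of-$T$ linear program for the $2$-MDP $G_{\gamma_{i+1}}$, and as the player-$2$ reachability valuation is the least such solution we get $v_{i+1}\ge 1-w_i=v_i$; this is the same computation as in the proof of Lemma~\ref{lemm:stra-improve-safe1}, specialized to $I=\emptyset$.

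For the strict improvement I first record the structural consequences of step~3.3. Write $W:=\ov{A}_i\cap S$. Since an almost-sure winning region for a safety objective in a turn-based game is closed under the winning strategy and under the adversary's moves, the edge structure of $\TB$ (the random node $(s,A,b)$ has successor set $\bigcup_{a\in A}\dest(s,a,b)$) together with Lemma~\ref{lemm:stra-improve-safetb} yields: $W\subseteq F$; under $\overline{\gamma}_{i+1}$ the set $W$ is closed \emph{as long as} player~2 chooses at every visited state $s\in W$ a move in $\CountOpt(v_i,s,\gamma_{i+1}(s))$; and against any such response player~1 stays in $W\subseteq F$ forever. On the other hand, if at some $s\in W$ player~2 uses a move $b\notin\CountOpt(v_i,s,\gamma_{i+1}(s))$, then $\Pre_{\gamma_{i+1},b}(v_i)(s)>\Pre_1(v_i)(s)=v_i(s)$, and as this ranges over a finite set there is a uniform $\delta>0$ with $\Pre_{\gamma_{i+1},b}(v_i)(s)\ge v_i(s)+\delta$ for all such pairs (if there is no such pair at all, then from $W$ the play stays in $W\subseteq F$ forever under $\overline{\gamma}_{i+1}$ against every player-$2$ strategy, so $v_{i+1}\equiv 1$ on $W\supseteq U$ and we are done).

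Now fix $s_0\in U$ (nonempty by hypothesis) and an arbitrary player-$2$ strategy $\stra_2$. By the $\Pre$-inequalities of the first paragraph, $k\mapsto v_i(\randpath_k)$ is a $[0,1]$-valued submartingale under $\overline{\gamma}_{i+1}$ and $\stra_2$. Let $\tau$ be the first step at which player~2 uses a non-counter-optimal move at a visited state of $W$; until time $\tau$ the play stays in $W$, so on $\{\tau=\infty\}$ it stays in $W\subseteq F$ forever and $\Safe(F)$ holds. Splitting on $\tau$, using at step $\tau$ the gap $\delta$, using afterwards the pointwise bound $\Prb_t^{\overline{\gamma}_{i+1},\stra_2}(\Safe(F))\ge v_i(t)$ from the monotonicity part, and using the optional-stopping inequality $\E(v_i(\randpath_\tau)\,\mathbb{1}_{\tau<\infty})+\Prb(\tau=\infty)\ge v_i(s_0)$ for the bounded submartingale, one obtains
\[
\Prb_{s_0}^{\overline{\gamma}_{i+1},\stra_2}(\Safe(F))\ \ge\ v_i(s_0)+\delta\cdot\Prb_{s_0}^{\overline{\gamma}_{i+1},\stra_2}(\tau<\infty).
\]
Combined with the trivial bound $\Prb_{s_0}^{\overline{\gamma}_{i+1},\stra_2}(\Safe(F))\ge\Prb_{s_0}^{\overline{\gamma}_{i+1},\stra_2}(\tau=\infty)=1-\Prb_{s_0}^{\overline{\gamma}_{i+1},\stra_2}(\tau<\infty)$, and minimizing the resulting lower bound over the value of $\Prb(\tau<\infty)\in[0,1]$, we get $\Prb_{s_0}^{\overline{\gamma}_{i+1},\stra_2}(\Safe(F))\ge v_i(s_0)+\tfrac{\delta(1-v_i(s_0))}{1+\delta}$, uniformly in $\stra_2$. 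Since $s_0\notin W_1$ forces $v_i(s_0)\le\va(\Safe(F))(s_0)<1$, taking the infimum over $\stra_2$ gives $v_{i+1}(s_0)>v_i(s_0)$; in fact this holds for \emph{every} $s_0\in U$, which is more than claimed.

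The main obstacle is the structural claim about $W=\ov{A}_i\cap S$: converting the almost-sure winning region of the auxiliary turn-based game $\ov{G}_{v_i}$ into a set that player~1, playing $\overline{\gamma}_{i+1}$, can keep the play inside \emph{unless} player~2 spends a fixed amount $\delta>0$ of expected $v_i$-potential to escape it. This is exactly where Lemma~\ref{lemm:stra-improve-safetb} and the precise definition of $\TB$ are used, and where the "non-local'' nature of the algorithm pays off. Once this is established, the rest — the submartingale bookkeeping and the combination of the two lower bounds on $\Prb(\Safe(F))$ — is routine.
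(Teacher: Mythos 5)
Your proof is correct, and while the monotonicity half is exactly the paper's argument (feasibility of $w_i=1-v_i$ in the reachability linear program for $G_{\gamma_{i+1}}$, plus the least-solution property), the strict-improvement half takes a genuinely different route. The paper fixes a \emph{pure memoryless} counter-optimal strategy $\stra_2$ for player~2 in the MDP $G_{\gamma_{i+1}}$, invokes Lemma~\ref{lemm:stra-improve-safetb} to derive a contradiction with $U\cap W_1=\emptyset$ unless $\stra_2$ deviates from $\CountOpt(v_i,s,\gamma_{i+1}(s))$ at some $s^*\in U$, and then perturbs $w_i$ only at $s^*$ to get a feasible LP solution that is strictly smaller there; this yields strictness at \emph{some} state of $U$, which is all the lemma claims. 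You instead exploit the closure properties of the almost-sure winning region $\ov{A}_i$ of the turn-based game to set up a stopping-time/submartingale argument with a uniform gap $\delta>0$ at any deviation, and the combination of the two lower bounds on $\Prb(\Safe(F))$ gives the quantitative estimate $v_{i+1}(s)\ge v_i(s)+\delta(1-v_i(s))/(1+\delta)$ at \emph{every} $s\in U$ — strictly more than the statement requires (using $v_i(s)<1$ on $U$ since $U\cap W_1=\emptyset$). The structural facts you rely on (that $\ov{A}_i$ is closed under the witness selector against counter-optimal moves, because player-2 and random successors of almost-sure winning states stay in the region) are exactly the content underlying Lemma~\ref{lemm:stra-improve-safetb}, so that part is sound. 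One small point to tighten: with an arbitrary (randomized, history-dependent) $\stra_2$, the time $\tau$ of the first non-counter-optimal move is not measurable in the plays-only probability space the paper uses; either enrich the space to record moves, or simply note (as the paper does) that in the MDP $G_{\gamma_{i+1}}$ it suffices to take $\stra_2$ pure memoryless, in which case $\tau$ is a genuine hitting time and your optional-stopping computation goes through verbatim.
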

\begin{proof}
We first show that $v_{i+1} \geq v_i$.
Let $U=(\ov{A}_i \cap S)\setminus W_1$.
Let $w_i(s) = 1 - v_i(s)$ for all states $s \in S$. 
Since $v_i =\vas{\overline{\gamma}_i}(\Safe(F))$, it follows that
for all states $s\in S$ and all moves $a_2 \in \mov_2(s)$, we have
\[
  w_i(s) \geq \sum_{t \in S} w_i(t) \cdot \trans_{\gamma_i}(s,a_2).  
\]
The selector $\xi_1(s)$ chosen for $\gamma_{i+1}$ at $s \in U$ satisfies that
$\xi_1(s) \in \OptSel(v_i,s)$.
It follows that for all states $s\in S$ and all moves $a_2 \in \mov_2(s)$, 
we have
\[
  w_i(s) \geq \sum_{t \in S} w_i(t) \cdot \trans_{\gamma_{i+1}}(s,a_2).  
\]
It follows that the maximal probability with which player~2 can reach
$T$ against the strategy $\ov{\gamma}_{i+1}$ is at most $w_i$.
It follows that $v_i(s) \leq v_{i+1}(s)$.

We now argue that for some state $s \in U$ we have $v_{i+1}(s)>v_i(s)$.
Given the strategy $\ov{\gamma}_{i+1}$, consider a pure memoryless
counter-optimal strategy $\stra_2$ for player~2 to reach $T$.
Since the selectors $\gamma_{i+1}(s)$ at states $s\in U$ are obtained from the 
almost-sure strategy $\ov{\stra}$ in the turn-based game $\ov{G}_{v_i}$ 
to satisfy $\Safe(\ov{F})$, 
it follows from Lemma~\ref{lemm:stra-improve-safetb} 
that if for every state $s \in U$, the action 
$\stra_2(s) \in \CountOpt(v_i,s,\gamma_{i+1})$, then from 
all states $s \in U$, the game stays safe in $F$ with probability~1.
Since $\ov{\gamma}_{i+1}$ is a given strategy for player~1, and 
$\stra_2$ is counter-optimal against $\ov{\gamma}_{i+1}$, this 
would imply that $U\subseteq \set{s \in S \mid \va(\Safe(F))=1}$.
This would contradict that $W_1=\set{s\in S \mid \va(\Safe(F))=1}$ 
and $U \cap W_1=\emptyset$.
It follows that for some state $s^* \in U$ we have $\stra_2(s^*)
\not\in \CountOpt(v_i,s^*,\gamma_{i+1})$,
and since $\gamma_{i+1}(s^*) \in \OptSel(v_i,s^*)$ 
we have 
\[
v_i(s^*) < \sum_{t \in S} v_i(t) \cdot \trans_{\gamma_{i+1}}(s^*,\stra_2(s^*));
\]
in other words, we have
\[
w_i(s^*) > \sum_{t \in S} w_i(t) \cdot \trans_{\gamma_{i+1}}(s^*,\stra_2(s^*)).
\]
Define a valuation $z$ as follows:
$z(s)=w_i(s)$ for $s \neq s^*$, and 
$z(s^*)=\sum_{t \in S} w_i(t) \cdot \trans_{\gamma_{i+1}}(s^*,\stra_2(s^*))$.
%%Hence $z < w_i$, and 
Given the strategy $\ov{\gamma}_{i+1}$ and the
counter-optimal strategy $\stra_2$, the valuation $z$ satisfies the
inequalities of the linear-program for reachability to $T$.
It follows that the probability to reach $T$ given $\ov{\gamma}_{i+1}$ 
is at most $z$.
%%Since $z < w_i$, it follows 
Thus we obtain
that $v_{i+1}(s) \geq v_i(s)$ for all $s\in S$,
and $v_{i+1}(s^*) > v_i(s^*)$.
This concludes the proof.
\qed
\end{proof}

We obtain the following theorem from Lemma~\ref{lemm:stra-improve-safe1}
and Lemma~\ref{lemm:stra-improve-safe2} that shows that the sequences of
values we obtain is monotonically non-decreasing.

\begin{theorem}{(Monotonicity of values).}\label{thrm:safe-mono}
For $i\geq 0$, let $\gamma_{i}$ and $\gamma_{i+1}$ be the player-1 selectors obtained
at iterations $i$ and $i+1$ of Algorithm~\ref{algorithm:strategy-improve-safe}.
If $\gamma_{i}\neq \gamma_{i+1}$, then 
(a)~for all $s \in S$ we have 
$\vas{\overline{\gamma}_i}(\Safe(F))(s) \leq \vas{\overline{\gamma}_{i+1}}(\Safe(F))(s)$; and
(b)~for some $s^* \in S$ we have 
$\vas{\overline{\gamma}_i}(\Safe(F))(s^*) < \vas{\overline{\gamma}_{i+1}}(\Safe(F))(s^*)$.
\end{theorem}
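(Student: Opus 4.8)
The plan is to prove the theorem by a case analysis on which branch of the loop body of Algorithm~\ref{algorithm:strategy-improve-safe} produced the selector $\gamma_{i+1}$ from $\gamma_i$. Since by hypothesis $\gamma_i \neq \gamma_{i+1}$, the algorithm did not halt at iteration $i$ (otherwise $\gamma_{i+1}$ would not exist), so $\gamma_{i+1}$ is obtained either through step~3.2 (the case $I \neq \emptyset$, where the selector is modified exactly on $I$ to a value-optimal selector for $v_i$) or through step~3.3 (the case $I = \emptyset$, in which the non-termination of the algorithm forces $U := (\ov{A}_i \cap S)\setminus W_1 \neq \emptyset$, and the selector is modified on $U$ using a pure memoryless almost-sure winning strategy of the turn-based game $\ov{G}_{v_i}$). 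Writing $v_j = \vas{\overline{\gamma}_j}(\Safe(F))$ as in the two preceding lemmas, it suffices to establish assertions (a) and (b) in each of these two cases.

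In the first case, $I \neq \emptyset$, and $\gamma_{i+1}$ agrees with $\gamma_i$ outside $I$ while being value-optimal for $v_i$ on $I$; this is exactly the hypothesis of Lemma~\ref{lemm:stra-improve-safe1}, which yields $v_{i+1}(s) \geq v_i(s)$ for all $s\in S$ (giving (a)) and $v_{i+1}(s) > v_i(s)$ for every $s\in I$. Since $I$ is nonempty, any $s^* \in I$ witnesses (b).

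In the second case, $I = \emptyset$, and as noted the algorithm can define $\gamma_{i+1}$ only when $U = (\ov{A}_i \cap S)\setminus W_1$ is nonempty, with $\gamma_{i+1}$ the modification of $\gamma_i$ on $U$ read off from the almost-sure winning strategy $\ov{\stra}_1$ in $\ov{G}_{v_i}$ as in step~3.3.3.2. This is precisely the situation of Lemma~\ref{lemm:stra-improve-safe2}, which gives $v_{i+1}(s) \geq v_i(s)$ for all $s\in S$ (giving (a)) and $v_{i+1}(s^*) > v_i(s^*)$ for some $s^* \in U$ (giving (b)). Since the two cases are exhaustive whenever $\gamma_{i+1}$ exists and differs from $\gamma_i$, the theorem follows.

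The step I expect to require the most care—though it is hardly a genuine obstacle—is the bookkeeping of the algorithm's control flow: confirming that steps~3.2 and~3.3 are the only ways $\gamma$ can change within an iteration, and that the hypothesis $\gamma_i \neq \gamma_{i+1}$ excludes exactly the terminating branch, so that one of Lemma~\ref{lemm:stra-improve-safe1} or Lemma~\ref{lemm:stra-improve-safe2} always applies. All the substantive work—exhibiting the perturbed valuation $\wh{w}_i$ as a feasible point of the reachability linear program for the induced $2$-MDP, and, in the step~3.3 case, invoking the turn-based reduction through Lemma~\ref{lemm:stra-improve-safetb}—has already been carried out in those two lemmas, so the present proof is essentially their combination.
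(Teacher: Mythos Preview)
Your proposal is correct and follows exactly the approach of the paper: the theorem is stated as an immediate consequence of Lemma~\ref{lemm:stra-improve-safe1} (for the $I\neq\emptyset$ branch) and Lemma~\ref{lemm:stra-improve-safe2} (for the $I=\emptyset$, $U\neq\emptyset$ branch), and your case analysis on the algorithm's control flow is precisely how these two lemmas combine. The paper gives no further details beyond citing the two lemmas, so your write-up is in fact more explicit than the original.
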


\begin{theorem}{(Optimality on termination).}\label{thrm:safe-termination}
Let $v_i$ be the valuation at iteration $i$ of 
Algorithm~\ref{algorithm:strategy-improve-safe} such that 
$v_i=\vas{\ov{\gamma}_i}(\Safe(F))$. 
If  
$I=\set{s\in S \setminus (W_1 \cup T) \mid \Pre_1(v_i)(s) > v_i(s)}=\emptyset$,
and $(\ov{A}_i \cap S)\setminus W_1=\emptyset$,
then $\ov{\gamma}_i$ is an optimal strategy and 
$v_i=\va(\Safe(F))$.
\end{theorem}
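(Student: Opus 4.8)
The plan is to establish the two inequalities $v_i\le\va(\Safe(F))$ and $\va(\Safe(F))\le v_i$ separately. The first is essentially bookkeeping; the second is where the hypothesis $(\ov{A}_i\cap S)\setminus W_1=\emptyset$ does the real work, and it is the part I expect to be hard.

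For $v_i\le\va(\Safe(F))$ I would first note that $v_i=\vas{\ov{\gamma}_i}(\Safe(F))$, being the value of player~2's reachability MDP $G_{\gamma_i}$ with target $T$, is a fixpoint: on $F$ one has $v_i=\Pre_{1:\gamma_i}(v_i)$, and on $T$ one has $v_i=0=[F]$. Hence $v_i\le\Pre_1(v_i)$ and $v_i\le[F]$ everywhere. Combined with the hypothesis $I=\emptyset$, which says $\Pre_1(v_i)(s)\le v_i(s)$ for $s\in S\setminus(W_1\cup T)$, and with the fact that $W_1\cup T$ consists of absorbing states (so $\Pre_1(v_i)=v_i$ there), this forces $\Pre_1(v_i)=v_i$, so $v_i=\min\{[F],\Pre_1(v_i)\}$. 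Since $\va(\Safe(F))=\nu X.\min\{[F],\Pre_1(X)\}$ is the greatest fixpoint of this operator, $v_i\le\va(\Safe(F))$ follows (of course this also holds trivially since $\ov{\gamma}_i$ is one player-1 strategy).

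For the reverse inequality I would argue by contradiction: suppose $D=\{s\mid v_i(s)<\va(\Safe(F))(s)\}\neq\emptyset$ and derive that $(\ov{A}_i\cap S)\setminus W_1\neq\emptyset$. One checks $D\subseteq F\setminus W_1$ and $\va(\Safe(F))(s)\in(0,1)$, $v_i(s)\in[0,1)$ for $s\in D$ (using that $W_1$ and $T$ are absorbing and $W_1=\{s\mid\va(\Safe(F))(s)=1\}$). Using $\Pre_1(v_i)=v_i$ and the minimax theorem for the one-shot game at each state, I would fix a player-2 selector $\xi_2$ with $\Pre_{a,\xi_2}(v_i)(s)\le v_i(s)$ for every $a\in\mov_1(s)$ and every $s$; then, against every player-1 strategy, $(v_i(\randpath_k))_{k}$ is a bounded supermartingale under $\ov{\xi}_2$. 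Splitting each play according to whether it reaches $T$, reaches $W_1$, or stays in $F\setminus W_1$ forever (call the last event $E_2$), and passing to the limit of the supermartingale, one gets for every player-1 strategy $\stra_1$ and every state $s$ the bound $\Prb_s^{\stra_1,\ov{\xi}_2}(\Safe(F))\le v_i(s)+\Prb_s^{\stra_1,\ov{\xi}_2}(E_2)$. Thus it suffices to prove $\Prb_s^{\stra_1,\ov{\xi}_2}(E_2)=0$ for all $\stra_1$ and $s$: that forces $\va(\Safe(F))(s)\le v_i(s)$ for all $s$, contradicting $D\neq\emptyset$, whence $v_i=\va(\Safe(F))$ and $\ov{\gamma}_i$ is optimal.

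The crux — and the main obstacle — is this ``no safe trap'' claim, equivalently that the player-1 MDP $G_{\cdot,\ov{\xi}_2}$ has no end component inside $F\setminus W_1$. Suppose $C\subseteq F\setminus W_1$ were such an end component, closed and made recurrent by some player-1 selector $\zeta_1$ against $\ov{\xi}_2$. Then $v_i$ restricted to the recurrent chain on $C$ is a bounded superharmonic function, hence constant, $v_i\equiv r$ on $C$, and the supermartingale inequality is tight there, $\Pre_{\zeta_1(s),\xi_2(s)}(v_i)(s)=r$ for every $s\in C$. One then has to show that a non-empty subset of $C$ is such that player~1, using only $v_i$-value-optimal selectors, can confine every counter-optimal player-2 response to that subset; this is where the fixpoint equation $\Pre_1(v_i)=v_i$, the saddle-point/complementary-slackness structure of the one-shot matrix games (the support of a $v_i$-minimax player-2 selector consists of counter-optimal actions against every $v_i$-optimal player-1 selector), and value-class arguments in the spirit of Lemmas~\ref{lem-1}--\ref{lem-2} must be combined. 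Granting this, translating the confining behaviour through the reduction $\TB$ exactly as in Lemma~\ref{lemm:stra-improve-safetb} exhibits states of $F\setminus W_1$ that are almost-sure winning in $\ov{G}_{v_i}$ for $\Safe(\ov{F})$, i.e.\ that lie in $\ov{A}_i$, contradicting $(\ov{A}_i\cap S)\setminus W_1=\emptyset$. The whole difficulty is exactly this reconciliation of value-optimality with confinement of counter-optimal responses — the reason the reduction $\ov{G}_{v_i}$ must carry the counter-optimal action sets $B$, and the reason a plain $\Pre_1$-based improvement step does not suffice (Example~\ref{examp:conc-safety}).
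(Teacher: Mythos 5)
There is a genuine gap, and it is located exactly where you placed your bets: the ``no safe trap'' claim is false, and more fundamentally the whole quantifier structure of your second inequality cannot work. You propose to fix a \emph{single} memoryless player-2 selector $\xi_2$ (one-shot minimax for $v_i$) and show that against $\ov{\xi}_2$ every player-1 strategy has safety probability at most $v_i$, by proving that the player-1 MDP $G_{\cdot,\ov{\xi}_2}$ has no end component inside $F\setminus W_1$. If this worked, $\ov{\xi}_2$ would be a memoryless \emph{optimal} strategy for player~2's reachability objective $\Reach(T)$ --- but optimal strategies for the reachability player need not exist in concurrent games, even under the theorem's hypotheses. Concretely, take the hide-and-run game: state $s_0$ with player-1 (safety) moves $\set{\mathit{throw},\mathit{wait}}$ and player-2 moves $\set{\mathit{run},\mathit{hide}}$, where $(\mathit{throw},\mathit{run})$ goes to an absorbing safe state, $(\mathit{wait},\mathit{run})$ and $(\mathit{throw},\mathit{hide})$ lead (possibly via a transient safe state) to the absorbing target, and $(\mathit{wait},\mathit{hide})$ stays at $s_0$. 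Here $\va(\Safe(F))(s_0)=0$, and already at iteration $0$ one gets $v_0(s_0)=0$, $I=\emptyset$ and $(\ov{A}_0\cap S)\setminus W_1=\emptyset$, so the theorem's hypotheses hold; yet the only one-shot minimax selector at $s_0$ is ``hide with probability $1$'', and against it player~1 plays $\mathit{wait}$ forever and stays safe with probability $1>v_0(s_0)$. So $\Prb^{\stra_1,\ov{\xi}_2}(E_2)=1$, your crux claim fails, and no other choice of a single player-2 strategy (memoryless or not) can repair it, since player~2 simply has no optimal strategy in this game.

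The paper's proof sidesteps this by reversing the quantifiers: using the existence of memoryless optimal strategies for the \emph{safety} player (Theorem~\ref{thrm:memory-safe}), it suffices to show $\vas{\stra_1}(\Safe(F))\leq v_i$ for each \emph{memoryless} $\stra_1$, and the player-2 witness $\stra_2$ is constructed \emph{as a function of} $\stra_1$: at states where $\stra_1(s)\notin\OptSel(v_i,s)$ it picks an action making $\Pre$ strictly drop, and at states where $\stra_1(s)\in\OptSel(v_i,s)$ it plays according to an optimal player-2 strategy in the turn-based game $\ov{G}_{v_i}$ applied to the pair $(\supp(\stra_1(s)),\CountOpt(v_i,s,\stra_1(s)))$; then the no-end-component argument (split into the two cases above, using the hypothesis $(\ov{A}_i\cap S)\setminus W_1=\emptyset$ for the first and a least-value-class argument for the second) and the reachability linear program finish the proof. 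A secondary flaw in your sketch would bite even if the quantifier issue were ignored: from $\Pre_{\zeta_1(s),\xi_2(s)}(v_i)(s)=v_i(s)$ you cannot conclude $\zeta_1(s)\in\OptSel(v_i,s)$ (equality against one optimal opponent selector does not make a selector optimal, e.g.\ in the identity matrix game), so the end component you find in $G_{\cdot,\ov{\xi}_2}$ need not translate into the game $\ov{G}_{v_i}$ at all, since its states only encode supports of $v_i$-optimal selectors together with their counter-optimal action sets.
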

\begin{proof}
We show that for all memoryless strategies $\stra_1$ for player~1 we have 
$\vas{\stra_1}(\Safe(F)) \leq v_i$.
Since memoryless optimal strategies exist for concurrent games with safety objectives
(Theorem~\ref{thrm:memory-safe}) the desired result follows.

Let $\ov{\stra}_2$ be a pure memoryless optimal strategy for 
player~2 in $\ov{G}_{v_i}$ for the objective 
complementary to $\Safe(\ov{F})$, 
where $(\ov{G}_{v_i},\Safe(\ov{F}))=\TB(G,v_i,F)$.
Consider a memoryless strategy $\stra_1$ for player~1,
and we define a pure memoryless strategy $\stra_2$
for player~2 as follows.
\begin{enumerate}
\item If $\stra_1(s) \not\in \OptSel(v_i,s)$, then $\stra_2(s)=b \in \mov_2(s)$,
	such that $\Pre_{\stra_1(s),b}(v_i)(s) < v_i(s)$;
	(such a $b$ exists since $\stra_1(s) \not\in \OptSel(v_i,s)$).

\item If $\stra_1(s) \in \OptSel(v_i,s)$, then let $A=\supp(\stra_1(s))$,
	and consider $B$ such that $B=\CountOpt(v_i,s,\stra_1(s))$.
	Then we have $\stra_2(s)=b$, such that $\ov{\stra}_2((s,A,B))=(s,A,b)$. 
\end{enumerate}
Observe that by construction of $\stra_2$, for all 
$s \in S \setminus (W_1 \cup T)$, we have 
$\Pre_{\stra_1(s),\stra_2(s)}(v_i)(s) \leq v_i(s)$.
We first show that in the Markov chain obtained by fixing $\stra_1$ and 
$\stra_2$ in $G$, there is no closed connected recurrent set of states $C$
such that $C \subseteq S \setminus (W_1 \cup T)$.
Assume towards contradiction that $C$ is a closed connected recurrent 
set of states in $S \setminus (W_1 \cup T)$.
The following case analysis achieves the contradiction.
\begin{enumerate}
\item Suppose for every state $s \in C$ we have $\stra_1(s) \in \OptSel(v_i,s)$.
Then consider the strategy $\ov{\stra}_1$ in $\ov{G}_{v_i}$ such that 
for a state $s \in C$ we have $\ov{\stra}_1(s)=(s,A,B)$,
where $\stra_1(s)=A$, and $B=\CountOpt(v_i,s,\stra_1(s))$.
Since $C$ is closed connected recurrent states, it follows by construction 
that for all states $s \in C$ in the game $\ov{G}_{v_i}$ we have 
$\Prb_s^{\ov{\stra}_1,\ov{\stra}_2}(\Safe(\ov{C}))=1$,
where $\ov{C}=C \cup \set{(s,A,B) \mid s \in C} \cup \set{(s,A,b) \mid s \in C}$.
It follows that for all $s \in C$ in $\ov{G}_{v_i}$ 
we have $\Prb_s^{\ov{\stra}_1,\ov{\stra}_2}(\Safe(\ov{F}))=1$.
Since $\ov{\stra}_2$ is an optimal strategy, it follows that $C 
\subseteq (\ov{A}_i \cap S)\setminus W_1$.
This contradicts that $(\ov{A}_i \cap S) \setminus W_1=\emptyset$.

\item Otherwise for some state $s^* \in C$ we have $\stra_1(s^*) \not\in
\OptSel(v_i,s^*)$.
Let $r=\min\set{q \mid U_q(v_i) \cap C \neq \emptyset}$, i.e., 
$r$ is the least value-class with non-empty intersection with $C$.
Hence it follows that for all $q<r$, we have 
$U_q(v_i) \cap C=\emptyset$.
Observe that since for all $s \in C$ we have 
$\Pre_{\stra_1(s),\stra_2(s)}(v_i)(s) \leq v_i(s)$,
it follows that for all $s \in U_r(v_i)$ either
(a)~$\dest(s,\stra_1(s),\stra_2(s))\subseteq U_r(v_i)$;
or (b)~$\dest(s,\stra_1(s),\stra_2(s)) \cap U_q(v_i) \neq \emptyset$,
for some $q<r$.
Since $U_r(v_i)$ is the least value-class with non-empty intersection 
with $C$, it follows that for all $s \in U_r(v_i)$ we have 
$\dest(s,\stra_1(s),\stra_2(s)) \subseteq U_r(v_i)$.
It follows that $C \subseteq U_r(v_i)$. 
Consider the state $s^* \in C$ such that $\stra_1(s^*) \not\in \OptSel(v_i,s)$.
By the construction of $\stra_2(s)$, we have 
$\Pre_{\stra_1(s^*),\stra_2(s^*)}(v_i)(s^*) < v_i(s^*)$.
Hence we must have $\dest(s^*,\stra_1(s^*),\stra_2(s^*)) \cap U_{q}(v_i) 
\neq \emptyset$, for some $q <r$.
Thus we have a contradiction.
\end{enumerate}
It follows from above that there is no closed connected recurrent set of states
in $S\setminus (W_1 \cup T)$, and hence with probability~1 
the game reaches $W_1 \cup T$ from all states in $S \setminus (W_1 \cup T)$.
Hence the probability to satisfy $\Safe(F)$ is equal to the probability 
to reach $W_1$.
Since for all states $s \in S\setminus (W_1 \cup T)$ we have 
$\Pre_{\stra_1(s),\stra_2(s)}(v_i)(s) \leq v_i(s)$, 
it follows that given the strategies $\stra_1$ and 
$\stra_2$, the valuation $v_i$ satisfies all the inequalities 
for linear program to reach $W_1$.
It follows that the probability to reach $W_1$ from $s$ is 
atmost $v_i(s)$.
It follows that for all $s \in S\setminus (W_1 \cup T)$ 
we have $\vas{\stra_1}(\Safe(F))(s)\leq v_i(s)$.
The result follows.
%%% completes the proof.
\qed
\end{proof}

\medskip\noindent{\bf $k$-uniform selectors and strategies.}
For concurrent games, we will use the result that for 
$\vare>0$, there is a \emph{$k$-uniform memoryless} strategy
that achieves the value of a safety objective within $\vare$.
We first define $k$-uniform selectors and $k$-uniform 
memoryless strategies.
For a positive integer $k>0$, a selector $\xi$ for player~1 is 
\emph{$k$-uniform} if for all $s \in S \setminus (T \cup W_1)$ and all 
$a \in \supp(\stra_1(s))$ there exists $i,j \in \Nats$ such that 
$0 \leq i \leq j \leq k$ and $\xi(s)(a)=\frac{i}{j}$, i.e., the moves in the 
support are  played with probability that are multiples of 
$\frac{1}{\ell}$ with $\ell \leq k$.
We denote by $\Sel^k$ the set of $k$-uniform selectors. 
A memoryless strategy is $k$-uniform if it is obtained from a $k$-uniform 
selector.
We denote by $\Stra_1^{M,k}$ the set of $k$-uniform memoryless strategies
for player~1.
We first present a technical lemma (Lemma~\ref{lemm-expl-constr}) that will
be used in the key lemma (Lemma~\ref{lemm:kuniform}) to prove the 
convergence result.

\begin{lemma}{}\label{lemm-expl-constr}
Let $a_1,a_2,\ldots,a_m$ be $m$ real numbers such that 
(1) for all $1 \leq i \leq m$, we have $a_i>0$; and
(2) $\sum_{i=1}^m a_i=1$.
Let $c=\min_{1\leq i\leq m} a_i$.
For $\eta>0$, there exists $k \geq \frac{m}{c\cdot \eta}$ and 
$m$ real numbers $b_1, b_2, \ldots,b_m$ such that 
(1) for all $1 \leq i \leq m$, we have $b_i$ is a multiple of $\frac{1}{k}$ and $b_i>0$; 
(2) $\sum_{i=1}^m b_i=1$; and 
(3) for all $1 \leq i \leq m$, we have $\frac{a_i}{b_i} \leq 1 + \eta$ and 
$\frac{b_i}{a_i} \leq 1 + \eta$. 
\end{lemma}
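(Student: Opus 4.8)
The goal is to approximate a given positive probability vector $(a_1,\dots,a_m)$ by a nearby probability vector with all entries being multiples of $\frac{1}{k}$ for a suitable $k$, while controlling the multiplicative distortion of each coordinate. The natural approach is to take $k$ large, round each $a_i$ appropriately, and then fix up the rounding error so that the entries still sum to $1$. First I would fix $\eta>0$ and choose $k$ to be any integer with $k \geq \frac{m}{c\cdot\eta}$ (so in particular $k \geq \frac{1}{c \eta} \geq \frac{1}{a_i \eta}$ for each $i$, and also $k \geq \frac{m}{\eta}$). For each $i$, set $b_i' = \frac{\lceil k a_i \rceil}{k}$, i.e. round each $a_i$ \emph{up} to the nearest multiple of $\frac{1}{k}$. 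Then $b_i' \geq a_i > 0$, each $b_i'$ is a positive multiple of $\frac 1k$, and $a_i \leq b_i' < a_i + \frac{1}{k}$, so $b_i' \leq a_i(1 + \frac{1}{k a_i}) \leq a_i(1+\eta)$, giving $\frac{b_i'}{a_i}\leq 1+\eta$ (and trivially $\frac{a_i}{b_i'}\leq 1$).

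The remaining issue is that $\sum_i b_i'$ may exceed $1$: writing $S' = \sum_{i=1}^m b_i'$, we have $1 \leq S' < 1 + \frac{m}{k}$, so the excess $S' - 1$ is a multiple of $\frac 1k$, say $S'-1 = \frac{\ell}{k}$ with $0 \leq \ell < m$. I would then remove a total of $\frac{\ell}{k}$ from the $b_i'$, taking away $\frac{1}{k}$ from $\ell$ distinct coordinates — and crucially, only from coordinates $i$ with $k a_i$ not an integer (equivalently $b_i' > a_i$), of which there must be at least $\ell$ (since among the $m$ coordinates, the ones with $b_i' = a_i$ contribute integer multiples summing to an integer, so if fewer than $\ell$ coordinates were rounded up strictly we could not have accumulated excess $\frac{\ell}{k}$; this is where a short counting argument is needed). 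For each such adjusted coordinate, the new value $b_i = b_i' - \frac 1k$ still satisfies $b_i \geq a_i - (b_i' - a_i) \cdots$ — more directly, $b_i' - \frac 1k < a_i + \frac 1k - \frac 1k = a_i$, wait: since $b_i' > a_i$ and $b_i'$ is the round-\emph{up}, we have $b_i' - \frac{1}{k} < a_i$, so $b_i = b_i' - \frac 1k$ lies strictly below $a_i$; I need $b_i > 0$ and $\frac{a_i}{b_i} \leq 1+\eta$. From $a_i - \frac 1k < b_i < a_i$ we get $b_i > a_i - \frac 1k \geq a_i - c\eta \geq a_i(1 - \eta) > 0$ (using $c \leq a_i$ and $\eta < 1$, which we may assume WLOG since otherwise the statement is easy), hence $\frac{a_i}{b_i} < \frac{1}{1-\eta}$; to land exactly at $1+\eta$ I would instead replace $\eta$ by $\eta/2$ at the outset, or note $\frac{1}{1-\eta}\le 1+2\eta$ and rescale. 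The cleanest fix is: prove the lemma with $\eta$ replaced by $\eta' = \min\{\eta, 1/2\}$ and use $\frac{1}{1-\eta'} \leq 1 + 2\eta' \leq 1+2\eta$; or, simplest, carry out the whole argument with $\eta/3$ in place of $\eta$ so all the slack absorbs. Finally $\frac{b_i}{a_i} < 1 \leq 1+\eta$ for adjusted coordinates, and the unadjusted coordinates keep their bound from the first step.

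**The main obstacle.** The only genuinely delicate point is the counting step: guaranteeing that there are at least $\ell$ coordinates available to decrement (i.e. with $b_i' > a_i$) and that decrementing each by exactly $\frac 1k$ keeps it positive and within the multiplicative window. This follows because $\sum_{i: b_i' = a_i} b_i'$ is a sum of integer multiples of $\frac 1k$ where each term equals $a_i$, but $a_i$ need not be a multiple of $\frac 1k$ — so actually if $b_i' = a_i$ then $a_i$ \emph{is} a multiple of $\frac 1k$; hence $\sum_{i : b_i' = a_i} a_i$ is a multiple of $\frac 1k$, and consequently $\sum_{i: b_i' > a_i}(b_i' - a_i)$ is also a multiple of $\frac 1k$ and equals exactly $S' - 1 = \frac \ell k$ only if we also account for $\sum a_i = 1$; since each strictly-rounded $b_i' - a_i \in (0, \frac 1k]$, the number of such coordinates is at least $\ell$. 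Once this bookkeeping is in place, every required inequality is immediate, so I expect the proof to be short modulo getting the constants to match $1+\eta$ exactly (handled by a preliminary rescaling of $\eta$).
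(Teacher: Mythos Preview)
Your approach is correct, but it differs from the paper's in a way worth noting. You round each $a_i$ up to a multiple of $\tfrac1k$ and then \emph{subtract} $\tfrac1k$ from enough coordinates to restore the sum to~$1$; this forces the counting argument (at least $\ell$ coordinates were strictly rounded) and the case analysis on whether a coordinate was adjusted, and it also requires the auxiliary assumption $\eta<1$ (or a preliminary rescaling of $\eta$) to guarantee $b_i>0$ and to convert $\tfrac{1}{1-\eta}$ into a $1+\eta$ bound. The paper instead rounds each $a_i$ up to $\bar b_i$ and then \emph{normalizes}: $b_i=\bar b_i/\sum_j \bar b_j$. This automatically gives $\sum_i b_i=1$ with all $b_i$ multiples of $\tfrac1k$ for $k=\sum_j \lceil \ell a_j\rceil\ge \ell$, and the two ratio bounds drop out in one line each from $b_i\le \bar b_i\le a_i+\tfrac1\ell$ and $b_i\ge a_i/(1+\tfrac m\ell)$, with no restriction on $\eta$ and no counting step. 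Your method has the mild advantage that $k$ can be chosen in advance (any integer $\ge m/(c\eta)$), whereas in the paper $k$ emerges from the normalization; but the normalization route is noticeably shorter and avoids all the bookkeeping you flagged as ``the main obstacle.''
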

\begin{proof}
Let $\ell=\frac{m}{\eta \cdot c}$.
For $1 \leq i \leq m$, define $\ov{b}_i$ such that $\ov{b}_i$ is a multiple of 
$\frac{1}{\ell}$ and $a_i \leq \ov{b}_i \leq a_i + \frac{1}{\ell}$ 
(basically define $\ov{b}_i$ as the least multiple of $\frac{1}{\ell}$ that is
at least the value of $a_i$).
For $1 \leq i \leq m$, let $b_i= \frac{\ov{b}_i}{ \sum_{i=1}^m \ov{b}_i}$; 
i.e., $b_i$ is defined from $\ov{b}_i$ with normalization.
Clearly, $\sum_{i=1}^m b_i=1$, and for all $1\leq i \leq m$, we have $b_i>0$ and $b_i$ can 
be expressed as a multiple of $\frac{1}{k}$, for some $k \geq \frac{m}{\eta \cdot c}$.
We have the following inequalities: for all $1 \leq i \leq m$, we have 
\[
b_i \leq a_i + \frac{1}{\ell}; \qquad b_i \geq \frac{a_i}{1 + \frac{m}{\ell}}.
\] 
The first inequality follows since $\ov{b}_i \leq a_i + \frac{1}{\ell}$ and 
$\sum_{i=1}^m \ov{b}_i \geq \sum_{i=1}^m a_i=1$.
The second inequality follows since $\ov{b}_i \geq a_i$ and 
$\sum_{i=1}^m \ov{b}_i \leq \sum_{i=1}^m (a_i + \frac{1}{\ell}) = 
\sum_{i=1}^m a_i + \frac{m}{\ell} = 1 + \frac{m}{\ell}$.
Hence for all $1 \leq i \leq m$, we have
\[
\frac{b_i}{a_i} \leq 1 + \frac{1}{\ell \cdot a_i} \leq 1 + \frac{1}{\ell \cdot c} \leq 1 + \eta;
\]
\[
\frac{a_i}{b_i} \leq 1 + \frac{m}{\ell} \leq 1 +\eta\cdot c\leq 1 + \eta.
\]
The desired result follows.
\qed
\end{proof}

\begin{lemma}{}\label{lemm:kuniform}
For all concurrent game structures $G$, for all safety objectives
$\Safe(F)$, for $F \subseteq S$,
for all $\vare>0$, there exist $k>0$ and $k$-uniform selectors $\xi$ such that
$\overline{\xi}$ is an $\vare$-optimal strategy. 
%%%for $k\in2^{{2^{O\big(\frac{|G|}{\vare}\big)}}}$.
\end{lemma}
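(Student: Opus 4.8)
The plan is to start from a memoryless \emph{optimal} player-1 selector, round its probabilities to the nearest suitable rationals to obtain a nearby $k$-uniform selector, and then show by a continuity argument that so small a perturbation decreases the guaranteed safety value by less than $\vare$. The rounding device is exactly Lemma~\ref{lemm-expl-constr}, and the existence of a memoryless optimal selector to start from is Theorem~\ref{thrm:memory-safe}; the work is in controlling the effect of the perturbation.

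Concretely, I would first fix a memoryless optimal selector $\gamma$ for player~1 and write $v=\va(\Safe(F))$, so $\winval{1}^{\overline{\gamma}}(\Safe(F))=v$. Since $W_1\cup T$ is absorbing and $W_1\subseteq F$ (a state outside $F$ lies in $T$ and has value $0$), every state of $W_1\cup T$ realises its value under any selector, so it suffices to build $\xi$ on $S\setminus(W_1\cup T)$; there put $c=\min\{\gamma(s)(a)\mid s\in S\setminus(W_1\cup T),\ a\in\supp(\gamma(s))\}>0$. For a parameter $\eta>0$ fixed at the very end, I would apply Lemma~\ref{lemm-expl-constr} separately at each such $s$ to the positive numbers $\{\gamma(s)(a)\mid a\in\supp(\gamma(s))\}$, obtaining a distribution $\xi(s)$ whose entries are positive multiples of $1/k_s$ for some $k_s\ge |\moves|/(c\eta)$ with $|\xi(s)(a)-\gamma(s)(a)|\le\eta\,\gamma(s)(a)$; defining $\xi$ arbitrarily on $W_1\cup T$ and taking $k$ a common multiple of the finitely many $k_s$ makes $\xi$ a $k$-uniform selector. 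The two facts I would record are: (i)~$\supp(\xi(s))=\supp(\gamma(s))$ for every $s$; and (ii)~$|\trans_{\xi}(s,b)(t)-\trans_{\gamma}(s,b)(t)|\le\eta$ for all $s,t\in S$ and all player-2 moves $b$.

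Next I would reduce the claim to a finite continuity statement. Because $\overline{\xi}$ is memoryless and player~2 has a pure memoryless optimal strategy for the reachability objective $\Reach(T)$ in the $2$-MDP $G_{\xi}$ (Section~\ref{sec:mdp}), $\winval{1}^{\overline{\xi}}(\Safe(F))(s)=1-\max_{\xi_2}\Prb_s^{\overline{\xi},\overline{\xi}_2}(\Reach(T))$, the maximum over the finitely many pure player-2 selectors $\xi_2$; likewise $\max_{\xi_2}\Prb_s^{\overline{\gamma},\overline{\xi}_2}(\Reach(T))=1-v(s)$, since $\gamma$ is optimal. So it is enough to show that for each fixed pure $\xi_2$ and each state $s$, $\Prb_s^{\overline{\xi},\overline{\xi}_2}(\Reach(T))\to\Prb_s^{\overline{\gamma},\overline{\xi}_2}(\Reach(T))$ as $\eta\to0$; then maximising over the finitely many $\xi_2$ and the finitely many $s$ yields $\winval{1}^{\overline{\xi}}(\Safe(F))\to v$, and any small enough $\eta$ (hence its $k$) gives $\winval{1}^{\overline{\xi}}(\Safe(F))\ge v-\vare$. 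For a single chain I would use fact~(i): since $\supp(\xi(s))=\supp(\gamma(s))$, the Markov chains $G_{\xi,\xi_2}$ and $G_{\gamma,\xi_2}$ have the same underlying graph, hence the same bottom strongly connected components, hence the same set $Z$ of states in a recurrent class disjoint from $T$ and the same set of transient states; on such a chain every path reaches $T\cup Z$ almost surely, so $\Prb_{\cdot}(\Reach(T))=1-\Prb_{\cdot}(\Reach(Z))$, and on the transient states $\Prb_{\cdot}(\Reach(Z))$ is the unique solution $(I-Q)^{-1}d$ of the natural linear system, where $Q$ is the (always strictly substochastic) transient-to-transient block and $d(s)=\sum_{t\in Z}\trans(s)(t)$. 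By fact~(ii), $Q$ and $d$ converge as $\eta\to0$, and as $I-Q$ is invertible so does the solution; for $s\in T\cup Z$ the probability is $0$ or $1$ in both chains. This gives the pointwise convergence.

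The main obstacle is precisely this continuity step. Reachability probabilities in Markov chains are \emph{not} continuous in the transition probabilities in general: they jump when an edge vanishes. It is therefore essential that the rounding preserves supports, which is exactly what the positivity clause of Lemma~\ref{lemm-expl-constr} buys (and why a naive truncation of small probabilities would break everything). Keeping the support fixed freezes the graph, the bottom components, and the transient set, and keeps $I-Q$ invertible, which is what makes the perturbed values converge to the values of $\gamma$. All remaining estimates — the bound $|\xi(s)(a)-\gamma(s)(a)|\le\eta$, hence $|\trans_\xi-\trans_\gamma|\le\eta$ entrywise, and the choice of $\eta$ from the finitely many convergences — are routine.
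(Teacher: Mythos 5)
Your proof is correct, but it takes a genuinely different route from the paper at the decisive step. Both arguments start identically: fix a memoryless optimal selector (Theorem~\ref{thrm:memory-safe}) and round it with Lemma~\ref{lemm-expl-constr}, the essential point in both cases being that the rounding preserves supports. Where the paper then invokes Solan's perturbation theorem for competitive Markov decision processes \cite{Sol03} (specialized from discounted/limit-average to reachability) to get the explicit bound $-4|S|\rho \le v-v' \le 4|S|\rho/(1-2|S|\rho)^+$ in terms of the ratio distortion $\rho$, and hence an explicit choice $\eta=\min\{1/(4|S|),\vare/(8|S|)\}$ and $k>|\moves|/(c\eta)$, you replace this citation by an elementary first-principles continuity argument: reduce to the finitely many pure memoryless counter-selectors of player~2, observe that support preservation freezes the underlying graph (hence the recurrent classes and the transient block), and use continuity of $(I-Q)^{-1}d$ at an invertible $I-Q$. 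Your version is self-contained and avoids the detour through discounted games, but it is purely asymptotic: it yields no explicit bound on $k$ in terms of $\vare$ and the game, whereas the paper's quantitative bound is what later feeds the discussion of termination bounds for the $k$-uniform algorithm. Two small points to tidy: the entrywise bound on $\trans_\xi-\trans_\gamma$ is $O(|\moves|\cdot\eta)$ (or $2\eta$), not $\eta$ — immaterial for the limit; and your treatment of $W_1\cup T$ should either keep $\gamma$ unchanged there (the paper's definition of $k$-uniformity only constrains states outside $T\cup W_1$, and your continuity argument then covers all states), or justify the absorbing-$W_1$ normalization by patching in an almost-sure winning selector at $W_1$, which exists and depends only on supports; "arbitrary" selectors at $W_1$ would not do without that remark.
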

\begin{proof}
Our proof uses a result of Solan \cite{Sol03} and 
the existence of memoryless optimal strategies for concurrent 
safety games (Theorem~\ref{thrm:memory-safe}). 
We first present the result of Solan specialized for 
MDPs with reachability objectives.

\smallskip\noindent{\em The result of~\cite{Sol03}.} 
Let $G=(S,\moves,\mov_2,\trans)$ and $G'=(S,\moves,\mov_2,\trans')$ be two 
player-2 MDPs defined on the same state space $S$, with the same move set $\moves$ 
and the same move assignment function $\mov_2$, but with two different
transition functions $\trans$ and $\trans'$, respectively.
Let 
\[
\rho(G,G')= \max_{s,t \in S, a_2 \in \mov_2(s)} 
\setb{ 
\frac{\trans(s,a_2)(t)}{\trans'(s,a_2)(t)},
\frac{\trans'(s,a_2)(t)}{\trans(s,a_2)(t)}} -1;
\]
where by convention $x/0=+\infty$ for $x>0$, and $0/0=1$ 
(compare with equation (9) of~\cite{Sol03}: $\rho(G,G')$ is
obtained as a specialization of (9) of~\cite{Sol03} for MDPs).
Let $T \subseteq S$. 
For $s \in S$, let $v(s)$ and $v'(s)$ denote the 
value for player~2 for the reachability objective $\Reach(T)$ from $s$ in 
$G$ and $G'$, respectively.
Then from Theorem~6 of~\cite{Sol03} (also see equation (10) of~\cite{Sol03}) 
it follows that 
\begin{eqnarray}\label{eq-sol-sp}
-4 \cdot |S| \cdot \rho(G,G') \leq v(s) - v'(s) \leq 
\frac{4 \cdot |S| \cdot \rho(G,G') }{(1- 2\cdot|S| \cdot \rho(G,G'))^+};
\end{eqnarray}
where $x^+=\max\set{x,0}$.
We first explain how specialization of Theorem~6 of~\cite{Sol03} yields
(\ref{eq-sol-sp}).
Theorem~6 of~\cite{Sol03} was proved for value functions of discounted games 
with costs, even when the discount factor $\lambda=0$. 
Since the value functions of limit-average games are obtained as the limit of
the value functions of discounted games as the discount factor goes to $0$~\cite{MN81},
the result of Theorem~6 of~\cite{Sol03} also holds 
for concurrent limit-average games (this was the main result of~\cite{Sol03}).
Since reachability objectives are special case of limit-average objectives, Theorem~6 
of~\cite{Sol03} also holds for reachability objectives.
In the special case of reachability objectives with the same target set,
the different cost functions used in equation (10) of~\cite{Sol03} coincide, 
and the maximum absolute value of the cost is~1. Thus we obtain (\ref{eq-sol-sp})
as a specialization of Theorem~6 of~\cite{Sol03}.

We now use the existence of memoryless optimal strategies in concurrent safety 
games, and (\ref{eq-sol-sp}) to obtain our desired result.
Consider a concurrent safety game $G=(S,\moves,\mov_1,\mov_2,\trans)$ with safe set $F$ 
for player~1. 
Let $\stra_1$ be a memoryless optimal strategy for the objective $\Safe(F)$. 
Let 
$c= \min_{s \in S, a_1 \in \mov_1(s)} \set{ \stra_1(s)(a_1) \mid \stra_1(s)(a_1)>0}$
be the minimum positive transition probability given by $\stra_1$.
Given $\vare>0$, let $\eta= \min\set{\frac{1}{4\cdot|S|}, \frac{\vare}{8\cdot |S|}}$.
We define a  memoryless strategy $\stra_1'$ satisfying the following conditions: 
for $s \in S$ and $a_1 \in \mov_1(s)$ we have
\begin{enumerate}
\item if $\stra_1(s)(a_1)=0$, then $\stra_1'(s)(a_1)=0$;
\item if $\stra_1(s)(a_1)>0$, then following conditions are satisfied:
	\begin{enumerate}
	\item $\stra_1'(s)(a_1)>0$; 
	\item $\frac{\stra_1(s)(a_1)}{\stra_1'(s)(a_1)} \leq 1 + \eta$;
	\item $\frac{\stra_1'(s)(a_1)}{\stra_1(s)(a_1)} \leq 1 + \eta$;
	and 
	\item $\stra_1'(s)(a_1)$ is a multiple of $\frac{1}{k}$, for 
	an integer $k>0$ (such a $k$ exists for $k > \frac{|\moves|}{c\cdot \eta})$.
	\end{enumerate}
\end{enumerate}
For $k> \frac{|\moves|}{c \cdot \eta}$, such a strategy $\stra_1'$ exists 
(follows from the construction of Lemma~\ref{lemm-expl-constr}).
Let $G_1$ and $G_1'$ be the two player-2 MDPs obtained from $G$ by fixing the 
memoryless strategies $\stra_1$ and $\stra_1'$, respectively.
Then by definition of $\stra_1'$ we have $\rho(G_1,G_1') \leq \eta$.
Let $T=S \setminus F$. For $s \in S$, let the value of player~2 for the objective $\Reach(T)$ 
in $G_1$ and $G_1'$ be $v(s)$ and $v'(s)$, respectively.
By (\ref{eq-sol-sp}) we have 
\[
-4 \cdot |S| \cdot \eta \leq v(s) -v'(s) \leq \frac{4 \cdot |S| \cdot \eta }{(1- 2\cdot|S| \cdot \eta)^+};
\] 
Observe that by choice of $\eta$ we have 
(a)~$4\cdot |S| \cdot \eta \leq \frac{\vare}{2\cdot |S|}$ and 
(b)~$2\cdot |S| \cdot \eta \leq \frac{1}{2}$.
Hence we have $-\vare \leq v(s) -v'(s) \leq \vare$.
Since $\stra_1$ is a memoryless optimal strategy, it follows that $\stra_1'$ is a 
$k$-uniform memoryless $\vare$-optimal strategy.
\qed
\end{proof}

\smallskip\noindent{\bf Turn-based stochastic games convergence.}
We first observe that since pure memoryless optimal strategies exist for
turn-based stochastic games with safety objectives 
(the results follows from results of~\cite{Con92,LigLip69}), for turn-based stochastic games 
it suffices to iterate over pure memoryless selectors.
Since the number of pure memoryless strategies is
finite, it follows for turn-based stochastic games 
Algorithm~\ref{algorithm:strategy-improve-safe} always 
terminates and yields an optimal strategy.
In other words, we can restrict the selectors used in 
Algorithm~\ref{algorithm:strategy-improve-safe} 
in Steps 3.2.2 and 3.3.2.2 to be pure memoryless selectors.
Then the local improvement steps of Algorithm~\ref{algorithm:strategy-improve-safe}
with pure memoryless selectors terminates, and by Theorem~\ref{thrm:safe-termination}
yield a globally optimal pure memoryless strategy (which is an optimal 
strategy).
We will use the argument for turn-based stochastic games to 
a variant of Algorithm~\ref{algorithm:strategy-improve-safe} 
restricted to $k$-uniform selectors.

\medskip\noindent{\bf Strategy improvement with $k$-uniform selectors.}
We now present the variant of Algorithm~\ref{algorithm:strategy-improve-safe}
where we restrict the algorithm to $k$-uniform selectors.
The notations are essentially the same as used in 
Algorithm~\ref{algorithm:strategy-improve-safe}, but restricted to $k$-uniform 
selectors and presented as Algorithm~\ref{algo:k-uniform}.
(for example, $\ov{G}_{v_i}^k$ is similar to $\ov{G}_{v_i}$ but restricted
to $k$-uniform selectors, and similarly $\OptSel(v_i,s,k)$ are the optimal
$k$-uniform selectors, see Section~\ref{sec:appendix} for complete details).
We first argue that if we restrict 
Algorithm~\ref{algorithm:strategy-improve-safe} 
such that every iteration yields a $k$-uniform selector, for 
$k>0$, then the algorithm terminates, i.e., 
Algorithm~\ref{algo:k-uniform} terminates.
%For $k>0$, the restriction of Algorithm~\ref{algorithm:strategy-improve-safe} 
%to $k$-uniform selectors means that instead of considering all possible
%selectors for player~1, the algorithm restricts player~1 to select 
%among the $k$-uniform selectors.
The basic argument that if Algorithm~\ref{algorithm:strategy-improve-safe} 
is restricted to $k$-uniform selectors for player~1, for $k>0$, 
then the algorithm terminates,  follows from the facts that 
(i)~the sequence of strategies obtained are monotonic (Theorem~\ref{thrm:safe-mono}) 
(i.e., the algorithm does not cycle among $k$-uniform selectors); and
(ii)~the number of $k$-uniform selectors for a given $k$ is finite.
Given $k>0$, let us denote by $z_i^k$ the valuation of 
Algorithm~\ref{algo:k-uniform} at iteration $i$.
%%where the selectors for player~1 are restricted to be $k$-uniform.

\begin{lemma}{}\label{lemm-conv-terminate}
For all $k >0$, there exists $i \geq 0$ such that 
$z_i^k= z_{i+1}^k$.
\end{lemma}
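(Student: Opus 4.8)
The plan is to combine monotonicity of the valuations produced by Algorithm~\ref{algo:k-uniform} with the finiteness of the set of $k$-uniform memoryless strategies. Fix $k>0$. For a state $s$, a $k$-uniform selector at $s$ is a probability distribution on the finite set $\mov_1(s)$ each of whose weights has the form $i/j$ with $0\le i\le j\le k$; hence $\Sel^k(s)$ is finite, and therefore the set $\Stra_1^{M,k}$ of $k$-uniform memoryless player-1 strategies is finite as well. Consequently the set of valuations $\set{\vas{\overline{\xi}}(\Safe(F)) \mid \xi\in\Sel^k}$ is finite. Since step~0 of Algorithm~\ref{algo:k-uniform} replaces $k$ by $\max\set{k,|\moves|}$, the initial selector $\xi_1^\unif$ is $k$-uniform, and each selector $\gamma_{i+1}$ produced in steps~3.2.2 and~3.3.3.2 is obtained from $\gamma_i$ by overwriting it on a set of states with a member of $\Sel^k$ (in step~3.3.3.2, a member of $\OptSel(v_i,s,k)\subseteq\Sel^k$); so every $\gamma_i$ is $k$-uniform memoryless, and every $z_i^k=\vas{\overline{\gamma}_i}(\Safe(F))$ lies in this finite set.

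Next I would record that the valuations are monotonically non-decreasing and strictly increase at every iteration in which the loop does not terminate. The $k$-uniform analogues of Lemma~\ref{lemm:stra-improve-safe1}, Lemma~\ref{lemm:stra-improve-safe2} and Theorem~\ref{thrm:safe-mono} are proved with the obvious changes: in the analogue of Lemma~\ref{lemm:stra-improve-safe1} one takes $\wh{w}_i = 1-\sup_{\xi_1'\in\Sel^k(\cdot)}\Pre_{1:\xi_1'}(v_i)$ and uses that $\gamma_{i+1}$ is value-optimal \emph{among $k$-uniform selectors} on $I_k$, so that $\wh{w}_i$ is still a feasible point of the player-2 reachability linear program for the $2$-MDP $G_{\gamma_{i+1}}$ with target $T=S\setminus F$; in the analogue of Lemma~\ref{lemm:stra-improve-safe2} one only replaces the turn-based reduction $\TB(G,v_i,F)$ by $\TB(G,v_i,F,k)$ and applies Lemma~\ref{lemm:stra-improve-safetb} to it. The conclusion is: $z_{i+1}^k(s)\ge z_i^k(s)$ for all $i\ge 0$ and all $s\in S$; and whenever the loop does not terminate at iteration $i$, either $I_k\neq\emptyset$ (step~3.2 fires) or $I_k=\emptyset$ and $(\ov{A}_i^k\cap S)\setminus W_1\neq\emptyset$ (step~3.3.3 fires), and in both cases $z_{i+1}^k(s^*)>z_i^k(s^*)$ for some $s^*\in S$, so $z_i^k\neq z_{i+1}^k$; conversely, if the loop terminates at iteration $i$ the selector is not modified, so $z_i^k=z_{i+1}^k$.

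Finally I would argue by contradiction. Suppose $z_i^k\neq z_{i+1}^k$ for every $i\ge 0$ (so, by the last remark, Algorithm~\ref{algo:k-uniform} never terminates). Because $z_i^k\le z_{i+1}^k$ pointwise for all $i$, the valuations $z_0^k,z_1^k,z_2^k,\dots$ are pairwise distinct: if $z_i^k=z_j^k$ with $i<j$, then $z_i^k\le z_{i+1}^k\le\dots\le z_j^k=z_i^k$ forces $z_i^k=z_{i+1}^k$, contradicting the supposition. Hence $\set{z_i^k\mid i\ge 0}$ is infinite, contradicting the finiteness of $\set{\vas{\overline{\xi}}(\Safe(F))\mid\xi\in\Sel^k}$ established above. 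Therefore some $i\ge 0$ satisfies $z_i^k=z_{i+1}^k$, which is the claim. The only nontrivial part of this plan is checking that the monotonicity lemmas survive the restriction to $k$-uniform selectors; the single delicate point there is that the feasibility of $\wh{w}_i$ for the player-2 reachability program uses only that $\gamma_{i+1}$ is value-optimal within the class over which the algorithm optimizes, namely $\Sel^k$, so the original proofs go through unchanged.
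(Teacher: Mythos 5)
Your proposal is correct and takes essentially the same route as the paper, which justifies the lemma by exactly the two facts you use: monotonicity (strict improvement at every non-terminating iteration, via the $k$-uniform analogues of the improvement lemmas) and finiteness of the set of $k$-uniform selectors, so the algorithm cannot cycle and must terminate. Your write-up merely fills in the details the paper leaves as a sketch, in particular the check that the feasibility argument in the improvement lemmas only needs value-optimality within the class of $k$-uniform selectors.
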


\smallskip\noindent{\bf Convergence to optimal $k$-uniform 
memoryless strategies.}
We now argue that the valuation Algorithm~\ref{algo:k-uniform} 
converges to is optimal for $k$-uniform selectors.
%A more formal 
The argument is as follows: if we restrict player~1 to chose
between $k$-uniform selectors, then a concurrent game structures $G$ can be 
converted to a turn-based stochastic game structure,
where player~1 first chooses a $k$-uniform selector, then player~2
chooses an action, and then the transition is determined by the
chosen $k$-uniform selector of player~1, the action of player~2
and the transition function $\trans$ of the game structure $G$.
Then by termination of turn-based stochastic games it follows
that the algorithm will terminate.
It follows from Theorem~\ref{thrm:safe-termination} that upon termination 
we obtain optimal strategy for the turn-based stochastic game.
In other words, as discussed above for turn-based stochastic game, 
the local iteration converges to a globally optimal strategy.
Hence the valuation obtained upon termination is the maximal value 
obtained over all $k$-uniform memoryless strategies. 
%Given $k>0$, let us denote by $z_i^k$ the valuation of 
%Algorithm~\ref{algorithm:strategy-improve-safe} at iteration $i$,
%where the selectors for player~1 are restricted to be $k$-uniform.
This gives us the following lemma (also see appendix for a detailed
proof).

\begin{lemma}{}\label{lemm-conv-optimal}
For all $k>0$, let $i\geq 0$ be such that 
$z_i^k= z_{i+1}^k$.
Then we have 
$z_i^k= \max_{\stra_1 \in \Stra_1^{M,k}} \inf_{\stra_2 \in \Stra_2} 
\Prb^{\stra_1,\stra_2}(\Safe(F))$.
\end{lemma}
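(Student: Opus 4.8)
The plan is to reduce the statement to the optimality-on-termination property already established for the general safety strategy-improvement algorithm, Theorem~\ref{thrm:safe-termination}, by viewing Algorithm~\ref{algo:k-uniform} run on $G$ as an instance of Algorithm~\ref{algorithm:strategy-improve-safe} run on an auxiliary \emph{turn-based} stochastic game $H_k$. In $H_k$ the player-1 states are the original states $s\in S$; at such a state player~1 picks a $k$-uniform selector $\xi\in\Sel^k(s)$ and moves to a player-2 state $(s,\xi)$; there player~2 picks a move $b\in\mov_2(s)$ and moves to a probabilistic state $(s,\xi,b)$; and from $(s,\xi,b)$ the successor $t\in S$ is chosen with probability $\sum_{a\in\mov_1(s)}\trans(s,a,b)(t)\cdot\xi(s)(a)$, with safe set $\ov F$ extended to the auxiliary states over $F$ in the obvious way. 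Two basic observations drive the proof: (i)~pure memoryless player-1 strategies in $H_k$ are in bijection with $k$-uniform memoryless strategies in $G$, and under this bijection the safety value of $H_k$ from $s$ equals $\inf_{\stra_2\in\Stra_2}\Prb^{\stra_1,\stra_2}(\Safe(F))(s)$ of the corresponding strategy (play in $H_k$ from $s$ is exactly the $2$-MDP $G_{\gamma}$ of the chosen selector $\gamma$); and (ii)~since turn-based stochastic safety games admit pure memoryless optimal strategies for both players (by~\cite{Con92}, or Theorem~\ref{thrm:memory-safe} restricted to turn-based games), the safety value of $H_k$ at $s$ equals $\max_{\stra_1\in\Stra_1^{M,k}}\inf_{\stra_2\in\Stra_2}\Prb^{\stra_1,\stra_2}(\Safe(F))(s)$.

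Next I would verify that the two algorithms run in lock-step on the $S$-states. Extending a valuation $v$ on $S$ to $H_k$ by $v(s,\xi,b)=\sum_t v(t)\sum_a\trans(s,a,b)(t)\xi(s)(a)$ and $v(s,\xi)=\min_b v(s,\xi,b)$, one has the identity $\Pre^{H_k}_1(v)(s)=\sup_{\xi\in\Sel^k(s)}\Pre_{1:\xi}(v)(s)$ on $S$, so the local step~3.2 of Algorithm~\ref{algo:k-uniform} coincides with step~3.2 of Algorithm~\ref{algorithm:strategy-improve-safe} at the player-1 states of $H_k$; likewise $\ov G^k_{v_i}=\TB(G,v_i,F,k)$ and $\TB(H_k,\cdot,\cdot)$ single out the same set of almost-sure-safe $S$-states, since both amount to solving the turn-based safety game in which player~2 is restricted to the moves that keep the value equal to $v_i$, so step~3.3 coincides as well (as in Lemma~\ref{lemm:stra-improve-safetb}, uniformising the transition at a probabilistic state does not change which states are almost-sure winning for safety). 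Hence the valuations $z_i^k$ agree on $S$ with the valuations of Algorithm~\ref{algorithm:strategy-improve-safe} on $H_k$, where, as noted in the turn-based-convergence discussion, restricting to pure selectors is harmless on turn-based games. Now if $z_i^k=z_{i+1}^k$, then by monotonicity (Theorem~\ref{thrm:safe-mono}) we must have $\gamma_i=\gamma_{i+1}$, hence $I_k=\emptyset$ and $(\ov A^k_i\cap S)\setminus W_1=\emptyset$; translated through the correspondence these are exactly the termination conditions of Algorithm~\ref{algorithm:strategy-improve-safe} on $H_k$ at iteration~$i$. Applying Theorem~\ref{thrm:safe-termination} to the (turn-based, hence concurrent) game $H_k$ gives $z_i^k=\va^{H_k}(\Safe(\ov F))$ on $S$, which by observation~(ii) equals $\max_{\stra_1\in\Stra_1^{M,k}}\inf_{\stra_2\in\Stra_2}\Prb^{\stra_1,\stra_2}(\Safe(F))$, as required; combined with Lemma~\ref{lemm-conv-terminate} this also confirms such an $i$ is reached.

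The main obstacle is the faithfulness of the correspondence in the second paragraph, in particular checking that the turn-based reduction $\TB(G,v_i,F,k)$ used inside Algorithm~\ref{algo:k-uniform} picks out the same almost-sure-safe $S$-states as $\TB$ applied to $H_k$; this requires unwinding the definitions of $\OptSel(v_i,s,k)$ and $\CountOpt(v_i,s,\xi_1,k)$ and the uniformisation of random states in $\TB$. A cleaner, self-contained alternative that sidesteps $H_k$ is to reprove Theorem~\ref{thrm:safe-termination} verbatim with $\OptSel(\cdot,\cdot)$, $\CountOpt(\cdot,\cdot,\cdot)$ and $\Pre_1$ replaced by their $k$-uniform restrictions $\OptSel(\cdot,\cdot,k)$, $\CountOpt(\cdot,\cdot,\cdot,k)$ and $\sup_{\xi\in\Sel^k(\cdot)}\Pre_{1:\xi}$: for every $k$-uniform memoryless $\stra_1$ one builds, exactly as in that proof, a pure memoryless response $\stra_2$ for which no closed recurrent class of $G_{\stra_1,\stra_2}$ lies inside $S\setminus(W_1\cup T)$ (the two cases again being ``$\stra_1(s)\in\OptSel(v_i,s,k)$ everywhere on the class'', contradicting $(\ov A^k_i\cap S)\setminus W_1=\emptyset$, versus ``some $s^*$ with $\stra_1(s^*)\notin\OptSel(v_i,s^*,k)$'', contradicting minimality of the value class), whence $\inf_{\stra_2}\Prb^{\stra_1,\stra_2}(\Safe(F))\le v_i=z_i^k$; since $z_i^k$ is itself attained by the $k$-uniform strategy $\ov{\gamma}_i$, equality with the $k$-uniform optimum follows.
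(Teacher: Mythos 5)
Your proposal is correct and, in its final self-contained form, is essentially the paper's own proof: the appendix establishes Lemma~\ref{lemm-conv-optimal} exactly by rerunning the argument of Theorem~\ref{thrm:safe-termination} with $\OptSel(\cdot,\cdot,k)$, $\CountOpt(\cdot,\cdot,\cdot,k)$ and $\TB(G,v_i,F,k)$ in place of their unrestricted versions, constructing the same pure memoryless counter-strategy and the same two-case closed-recurrent-class analysis, and concluding $\vas{\stra_1}(\Safe(F))\leq v_i=z_i^k$ for every $k$-uniform memoryless $\stra_1$. Your primary route via the auxiliary turn-based game $H_k$ is precisely the informal justification the paper sketches in the main text before the lemma (player~1 first picks a $k$-uniform selector, then player~2 an action), so both halves of your write-up track the paper, with the acknowledged correspondence-checking obstacle being exactly why the paper's detailed proof takes the direct route you give as your alternative.
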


\begin{lemma}{}\label{lemm-conv-bound}
For all concurrent game structures $G$, for all safety objectives
$\Safe(F)$, for $F \subseteq S$,
for all $\vare>0$, there exist $k >0$ and $i \geq 0$ such that for all 
$s \in S$ we have 
$z_i^k(s) \geq \va(\Safe(F))(s) - \vare$. 
\end{lemma}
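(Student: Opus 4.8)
The plan is to derive this bound by combining the existence of $k$-uniform memoryless $\vare$-optimal strategies (Lemma~\ref{lemm:kuniform}) with the convergence statement for Algorithm~\ref{algo:k-uniform} (Lemma~\ref{lemm-conv-optimal}), which identifies the valuation reached on termination with the best value attainable by $k$-uniform memoryless strategies. Fix $\vare>0$. First I would apply Lemma~\ref{lemm:kuniform} to obtain an integer $k>0$ and a $k$-uniform selector $\xi$ such that $\overline{\xi}$ is $\vare$-optimal, i.e.\ for all $s \in S$ we have $\inf_{\stra_2 \in \Stra_2}\Prb_s^{\overline{\xi},\stra_2}(\Safe(F)) \geq \va(\Safe(F))(s) - \vare$.

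Next I would run Algorithm~\ref{algo:k-uniform} on input $k$. One point to verify is that the algorithm internally replaces $k$ by $k' = \max\set{k,|\moves|}$; this only enlarges the class of admissible selectors, since every probability of the form $\frac{i}{j}$ with $j \leq k$ also satisfies $j \leq k'$, so $\Sel^k \subseteq \Sel^{k'}$ and in particular $\overline{\xi} \in \Stra_1^{M,k'}$ remains $k'$-uniform and $\vare$-optimal. By Lemma~\ref{lemm-conv-terminate} the run stabilizes: there is $i \geq 0$ with $z_i^k = z_{i+1}^k$, and by the monotonicity established in Theorem~\ref{thrm:safe-mono} this $z_i^k$ is the valuation returned on termination.

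Finally, Lemma~\ref{lemm-conv-optimal} gives $z_i^k = \max_{\stra_1 \in \Stra_1^{M,k}} \inf_{\stra_2 \in \Stra_2}\Prb^{\stra_1,\stra_2}(\Safe(F))$, pointwise over states. Since $\overline{\xi} \in \Stra_1^{M,k}$, for every $s \in S \setminus (W_1 \cup T)$ we get
\[
z_i^k(s) \;\geq\; \inf_{\stra_2 \in \Stra_2}\Prb_s^{\overline{\xi},\stra_2}(\Safe(F)) \;\geq\; \va(\Safe(F))(s) - \vare ,
\]
and for the absorbing states in $W_1$ (value $1$) and $T = S \setminus F$ (value $0$) the inequality $z_i^k(s) \geq \va(\Safe(F))(s) - \vare$ holds trivially, so the claim holds for all $s \in S$. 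The bulk of the difficulty is already absorbed into Lemmas~\ref{lemm:kuniform} and~\ref{lemm-conv-optimal}; the only genuinely delicate point at this stage is making the parameter $k$ coherent between the approximation lemma and the actual run of Algorithm~\ref{algo:k-uniform}, together with the observation that this run is monotone and terminating and therefore lands exactly on the $\Stra_1^{M,k}$-optimum.
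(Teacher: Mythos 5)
Your proposal is correct and follows essentially the same route as the paper's own proof: invoke Lemma~\ref{lemm:kuniform} for a $k$-uniform memoryless $\vare$-optimal strategy, Lemma~\ref{lemm-conv-terminate} for termination of Algorithm~\ref{algo:k-uniform}, and Lemma~\ref{lemm-conv-optimal} to identify the fixed-point valuation $z_i^k$ with the maximum over $\Stra_1^{M,k}$, which dominates the value of the $\vare$-optimal strategy. Your extra observations (that replacing $k$ by $\max\set{k,|\moves|}$ only enlarges $\Sel^k$, and the trivial cases $W_1$ and $T$) are harmless refinements of the same argument.
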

\begin{proof}
By Lemma~\ref{lemm:kuniform}, for all $\vare>0$, there exists $k>0$
%%$k\in 2^{2^{O\big(\frac{|G|}{\vare}\big)}}$, 
such that there is a $k$-uniform memoryless $\vare$-optimal strategy for player~1.
By Lemma~\ref{lemm-conv-terminate}, for all $k>0$, there exists an $i \geq 0$ such that 
$z_i^k=z_{i+1}^k$, and by Lemma~\ref{lemm-conv-optimal} it follows that the 
valuation $z^i_k$ represents the maximal value obtained by 
$k$-uniform memoryless strategies.
Hence it follows that there exists $k>0$ and $i \geq 0$ such that for all 
$s \in S$ we have 
$z_i^k(s) \geq \va(\Safe(F))(s) - \vare$. 
The desired result follows.
\qed
\end{proof}

We now present the convergent strategy improvement algorithm for 
safety objectives as Algorithm~\ref{algorithm:strategy-convergent}
that iterates over $k$-uniform strategy values.
The algorithm iteratively calls Algorithm~\ref{algo:k-uniform} with 
larger $k$, unless the termination condition of 
Algorithm~\ref{algorithm:strategy-improve-safe} is satisfied.

\begin{algorithm*}[t]
\caption{Convergent Safety Strategy-Improvement Algorithm}
\label{algorithm:strategy-convergent}
{
\begin{tabbing}
aaa \= aaa \= aaa \= aaa \= aaa \= aaa \= aaa \= aaa \kill
\\
\> {\bf Input:} a concurrent game structure $G$ with safe set $F$. \\
\>   {\bf Output:} a strategy $\overline{\gamma}$ for player~1. \\ 

\> 0. $k=|\moves|$ and $i=0$. \\
\> 1. {\bf do \{ } \\ 
\>\> 1.1 $\gamma_{i+1}=$ Algorithm~\ref{algo:k-uniform}($G,F,k$)\\
\>\> 1.2 Compute $v_{i+1} =\vas{\overline{\gamma}_{i+1}}(\Safe(F))$ \\
\>\> 1.3 Let $I= \set{s \in S \setminus (W_1 \cup T) \mid \Pre_1(v_i)(s) > v_i(s)}$. \\
\>\> 1.4 Let$(\ov{G}_{v_i},\ov{F})=\TB(G,v_i,F)$ \\
\>\>\> 1.4.1 let $\ov{A}_i$ be the set of almost-sure winning states in $\ov{G}_{v_i}$
	for $\Safe(\ov{F})$. \\
\>\> 1.5 Let $i=i+1$ and $k=k+1$. \\
\> {\bf \} until } $I=\emptyset$ and $(\ov{A}_{i-1} \cap S) \setminus W_1=\emptyset$. \\
\> 2. {\bf return} $\overline{\gamma}_{i}$.  
\end{tabbing}
}
\end{algorithm*}

\begin{theorem}{(Monotonicity, Optimality on termination and Convergence).}
Let $v_i$ be the valuation obtained at iteration $i$ 
of Algorithm~\ref{algorithm:strategy-convergent}.
Then the following assertions hold.
\begin{enumerate}
\item For all $i \geq 0$ we have $v_{i+1} \geq v_i$.

\item If the algorithm terminates, then $v_i=\va(\Safe(F))$.

\item For all $\vare>0$, there exists $i$ such 
that for all $s$ we have 
$v_i(s) \geq \va(\Safe(F))(s) -\vare$.

\item $\lim_{i \to \infty} v_i =\va(\Safe(F))$.

\end{enumerate}
\end{theorem}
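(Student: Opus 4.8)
The plan is to derive all four assertions by combining the structural facts already established for the $k$-uniform variant (Algorithm~\ref{algo:k-uniform}) with the termination analysis of Algorithm~\ref{algorithm:strategy-improve-safe}. Write $\mathrm{opt}_k = \max_{\stra_1 \in \Stra_1^{M,k}} \inf_{\stra_2 \in \Stra_2} \Prb^{\stra_1,\stra_2}(\Safe(F))$ for the optimal value achievable by $k$-uniform memoryless player-1 strategies. By Lemma~\ref{lemm-conv-terminate} each call $\gamma_{i+1} = {}$Algorithm~\ref{algo:k-uniform}$(G,F,k)$ halts, and by Lemma~\ref{lemm-conv-optimal} it returns a memoryless strategy of value exactly $\mathrm{opt}_k$; since the parameter used at iteration $i$ of Algorithm~\ref{algorithm:strategy-convergent} is $k = |\moves| + i$, we get $v_{i+1} = \vas{\ov{\gamma}_{i+1}}(\Safe(F)) = \mathrm{opt}_{|\moves|+i}$ for all $i \ge 0$, while $v_0 = \vas{\overline{\xi}_1^\unif}(\Safe(F))$ with $\overline{\xi}_1^\unif \in \Stra_1^{M,|\moves|}$ (its move-probabilities $1/|\mov_1(s)|$ have denominator $\le |\moves|$). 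I will also use the trivial bound $v_i = \vas{\ov{\gamma}_i}(\Safe(F)) \le \va(\Safe(F))$ throughout.

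For assertion~1 I would observe that a number expressible as $i/j$ with $j \le k$ is also expressible with denominator $\le k+1$, so $\Stra_1^{M,k} \subseteq \Stra_1^{M,k+1}$ and hence $\mathrm{opt}_k \le \mathrm{opt}_{k+1}$; with $v_{i+1} = \mathrm{opt}_{|\moves|+i}$ and $v_0 \le \mathrm{opt}_{|\moves|} = v_1$ this yields $v_{i+1} \ge v_i$ for all $i \ge 0$. For assertion~2, note that steps 1.3--1.4 of Algorithm~\ref{algorithm:strategy-convergent} compute precisely the hypothesis of Theorem~\ref{thrm:safe-termination} for the valuation $v$ last used there: $\set{s \in S \setminus (W_1 \cup T) \mid \Pre_1(v)(s) > v(s)} = \emptyset$ and $(\ov{A} \cap S) \setminus W_1 = \emptyset$ with $(\ov{G}_v, \ov{F}) = \TB(G, v, F)$, where $v = \vas{\ov{\gamma}_j}(\Safe(F))$ for the corresponding strategy. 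Theorem~\ref{thrm:safe-termination} then gives $v = \va(\Safe(F))$ and optimality of $\ov{\gamma}_j$; by assertion~1 together with $v_i \le \va(\Safe(F))$, every later valuation, in particular the one attached to the returned strategy, also equals $\va(\Safe(F))$.

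For assertion~3, fix $\vare > 0$. By Lemma~\ref{lemm-conv-bound} there are $k$ and $j$ with $z^k_j(s) \ge \va(\Safe(F))(s) - \vare$ for all $s$; since the intra-$k$ valuations of Algorithm~\ref{algo:k-uniform} are non-decreasing (the $k$-uniform analogue of Theorem~\ref{thrm:safe-mono}) with limiting value $\mathrm{opt}_k$ (Lemma~\ref{lemm-conv-optimal}), we get $\mathrm{opt}_k \ge \va(\Safe(F)) - \vare$, hence by monotonicity of $\mathrm{opt}$ in $k$ also $\mathrm{opt}_{k'} \ge \va(\Safe(F)) - \vare$ for all $k' \ge k$. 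Taking any $i$ with $|\moves| + i \ge k$ gives $v_{i+1} = \mathrm{opt}_{|\moves|+i} \ge \va(\Safe(F)) - \vare$. (Alternatively, Lemma~\ref{lemm:kuniform} furnishes a $k$-uniform $\vare$-optimal strategy directly, giving $\mathrm{opt}_k \ge \va(\Safe(F)) - \vare$ at once.) Assertion~4 is then immediate: $(v_i)$ is non-decreasing by assertion~1, bounded above by $\va(\Safe(F))$, and by assertion~3 eventually within every $\vare > 0$ of it, so $\lim_{i \to \infty} v_i = \va(\Safe(F))$.

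The genuinely hard content is already packaged in the lemmas invoked: Lemma~\ref{lemm:kuniform} (hence Lemma~\ref{lemm-conv-bound}), which rests on Solan's perturbation estimate and on memoryless optimality for safety games, supplies the quantitative fact $\mathrm{opt}_k \to \va(\Safe(F))$ that drives convergence, and Lemma~\ref{lemm-conv-optimal}, via the turn-based reduction $\TB$ and Theorem~\ref{thrm:safe-termination}, identifies the value returned for each fixed $k$. What remains is index bookkeeping, and the one subtle point is the loop's index shift — the termination test of Algorithm~\ref{algorithm:strategy-convergent} refers to the valuation produced before the final increment, whereas the returned strategy is the one produced after it — so that Theorem~\ref{thrm:safe-termination} is applied to the right valuation; monotonicity and the trivial upper bound then transfer optimality to the returned strategy.
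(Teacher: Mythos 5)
Your proposal is correct and follows essentially the same route as the paper's proof: identify each $v_{i+1}$ with the maximal value over $k$-uniform memoryless strategies via Lemma~\ref{lemm-conv-terminate} and Lemma~\ref{lemm-conv-optimal}, get monotonicity from the inclusion of $k$-uniform into $(k+1)$-uniform selectors, invoke Theorem~\ref{thrm:safe-termination} for optimality on termination, and combine Lemma~\ref{lemm-conv-bound} with monotonicity and the upper bound $v_i \leq \va(\Safe(F))$ for the $\vare$-approximation and the limit. Your treatment of the index offset ($k=|\moves|+i$ and the base case $v_0 \leq \mathrm{opt}_{|\moves|}$) is in fact slightly more careful than the paper's, which writes $v_k = z^k_{i^*(k)}$ without the shift, but the substance is identical.
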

\begin{proof} We prove the results as follows.
\begin{enumerate}
\item Let $v_i$ is the valuation of 
Algorithm~\ref{algorithm:strategy-convergent} at iteration $i$. 
%%(without any restriction).
For $k>0$, we consider $z^k_i$ to denote the valuation of  
Algorithm~\ref{algo:k-uniform} with the restriction of $k$-uniform selector at iteration $i$, 
and let $z^k_{i^*(k)}$ denote the least fixpoint 
(i.e., $i^*(k)$ is the least value of $i$ such that $z^k_i=z^k_{i+1}$).
Since $k$-uniform selectors are a subset of $k+1$-uniform selectors, 
it follows that the maximal value obtained over strategies that uses
$k+1$-uniform selectors is at least the maximal value obtained over $k$-uniform
selectors.
Since $z^k_{i^*(k)}$ denote the maximal value obtained over $k$-uniform selectors
(follows from Lemma~\ref{lemm-conv-optimal}), we have that 
$z^k_{i^*(k)} \leq  z^{k+1}_{i^*(k+1)}$ (note that we do not require that 
$i^*(k) \leq i^*(k+1)$, i.e., the algorithm with $k+1$-uniform selectors may require
more iterations to terminate).
We have $v_k=z^k_{i^*(k)}$ and hence the first result follows.

\item The result follows from Theorem~\ref{thrm:safe-termination}.

\item From Lemma~\ref{lemm-conv-bound} it follows that for all $\vare>0$,
there exists a $k>0$ such that for all $s$ we have 
$z^k_{i^*(k)}(s) \geq \va(\Safe(F))(s) -\vare$.
Hence $v_k \geq \va(\Safe(F))(s) -\vare$.
%From the argument above we have that 
%\[
%z^k_{i^*(k)} \leq z^{k+1}_{i^*(k+1)} \leq z^{k+2}_{i^*(k+2)} 
%\leq \cdots \leq \lim_{n \to \infty} z^{n}_{i^*(n)} 
%\]
%Hence we have the following inequality
%\[
%z^k_{i^*(k)} \leq \lim_{n \to \infty} z^{n}_{i^*(n)} = 
%\lim_{n \to \infty} \lim_{i \to\infty} z^n_i 
%=\lim_{i \to\infty} \lim_{n \to \infty} z^n_i 
%=\lim_{i \to\infty} v_i 
%\]
Hence we have that for all $\vare>0$, there exists $k \geq 0$, such that 
for all $s \in S$ we have $v_k(s) \geq \va(\Safe(F))(s) -\vare$.

\item By part (1) for all $i \geq 0$ we 
have $v_{i+1} \geq v_i$. By part (3), for all $\vare>0$, there exists $i\geq 0$
such that for all $s \in S$ we have $v_i(s) \geq \va(\Safe(F))(s) -\vare$.
Hence it follows that for all $\vare>0$, there exists $i\geq 0$ such that 
for all $j \geq i$ and for all $s \in S$ we have 
$v_j(s) \geq \va(\Safe(F))(s) -\vare$.
It follows that $\lim_{i \to \infty} v_i =\va(\Safe(F))$.
\end{enumerate}
This gives us the following result.
\qed
\end{proof}

%%%{\bf KRISH START}

\smallskip\noindent{\bf Discussion on convergence of Algorithm~\ref{algorithm:strategy-improve-safe}.}
We will now present an example to illustrate that (contrary to the claim of Theorem~4.3 of~\cite{CdAH09})
Algorithm~\ref{algorithm:strategy-improve-safe} need not converge to the values in 
concurrent safety games. 
However, as discussed before Algorithm~\ref{algorithm:strategy-improve-safe} satisfies the monotonicity 
and optimality on termination, and for turn-based stochastic games (and also when restricted to 
$k$-uniform strategies) converges to the values as termination is guaranteed.
In the example we will also argue how Algorithm~\ref{algorithm:strategy-convergent} 
converges to the values of the game.

\begin{example}{}\label{ex:counter-soda}
Our example consists of two steps. 
In the first step we will present a gadget where the value is irrational 
and with probability~1 absorbing states are reached.
\begin{figure}[!tb]
\begin{center}
\begin{picture}(48,28)(-5,-5)
\node[Nmarks=i, iangle=180](n0)(10,12){$s_0$}
\node[Nmarks=n](n1)(40,0){$s_1$}
\node[Nmarks=n](n2)(40,24){$s_2$}

\drawloop[loopangle=0, loopdiam=6](n1){}
\drawloop[loopangle=0, loopdiam=6](n2){}
\drawloop[ELside=l,loopCW=y, loopdiam=6](n0){$ad,1/2$}
\drawedge[ELpos=50, ELside=r, ELdist=0.5, curvedepth=6](n0,n1){$ac,bd$}
\drawedge[ELpos=50, ELside=r, ELdist=0.5, curvedepth=-6](n0,n1){$ad,1/2$}
\drawedge[ELpos=50, ELside=l, curvedepth=6](n0,n2){$bd$}
\end{picture}
\caption{A simple game with irrational value.}\label{fig:ex1}
\end{center}
\end{figure}
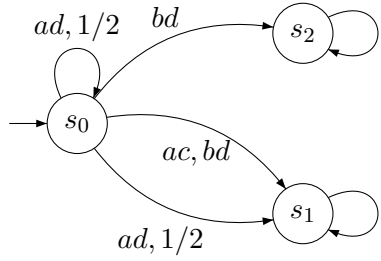

\smallskip\noindent{\bf Step 1.} We first consider the game shown in
Fig~\ref{fig:ex1} with three states $\set{s_0,s_1,s_2}$ with two actions
$a,b$ for player~1 and $c,d$ for player~2. 
The states $s_0,s_1$ are safe states, and $s_2$ is a non-safe state. 
The transitions are as follows: (1)~$s_1$ and $s_2$ are absorbing; 
and (2)~in $s_0$ we have the following transitions, 
(a)~given action pairs $ac$ and $bd$ the next state is $s_1$,
(b)~given action pair $bc$ the next state is $s_2$, and 
(c)~given action pair $ad$ the next states are $s_0$ and $s_1$ with 
probability $1/2$ each.
In this game, the state $s_0$ is transient, as given any action pairs,
the set  $\set{s_1,s_2}$ of absorbing states is reached with probability at least $1/2$ in 
one step.
Hence the set $\set{s_1,s_2}$ is reached with probability~1, irrespective
of the choice of strategies of the players.
Hence in this game the objective for player~1 is equivalently to reach 
$s_1$.
Let us denote by $x$ the value of the game at $s_0$, and let 
us consider the following matrix 
\[
M=\begin{bmatrix}
    \ 1 \ & \ \frac{x}{2} \  \\
    \ 0 \ & \ 1 \  
  \end{bmatrix}
\]
Then $x=\min \max M$.
In other words, consider the valuation $v_x=(x,1,0)$ for states $s_0,s_1$ and 
$s_2$, respectively, and $x=\min\max M$ describes that $v_x=\Pre_1(v_x)$, and 
it is the least fixpoint of valuations satisfying $v=\Pre_1(v)$.
We now analyze the value $x$ at $s_0$.
The solution of $x$ is achieved by solving the following optimization 
problem
\[
\text{minimize $x$} \quad \text{ subject to } \quad 
y + \big((1-y)\cdot x\big)/2 \leq x \text{ and }
1-y \leq x.
\]
Intuitively, $y$ denotes the probability to choose move $a$ in an optimal 
strategy. The solution to the optimization problem is achieved by setting 
$x=1-y$.
Hence we have $y + (1-y)^2/2 =(1-y)$, i.e., $(1+y)^2=2$. 
Since $y$ must lie in the interval $[0,1]$, we have $y=\sqrt{2}-1$, and 
thus we have $x=2-\sqrt{2} < 0.6$.
We now analyze Algorithm~\ref{algorithm:strategy-improve-safe} on this example.
Let $v_i$ denote the valuation of the $i$-th iteration, and let $v_i^0$ be the value at
state $s_0$.
Then we have $v_i^0 < v_{i+1}^0$ and in the limit it converges to 
value $2-\sqrt{2}$.
We observe that on this example Algorithm~\ref{algorithm:strategy-improve-safe}
exactly behaves as Algorithm~\ref{algorithm:strategy-improve} (strategy improvement
for reachability) as the objective for player~1 is equivalently
to reach $s_1$, since $s_0$ is transient.
The reason of the strict inequality $v_i^0 < v_{i+1}^0$ is as follows: 
if the valuation at state $s_0$ in $i$-th and $i+1$-th iteration is the same,
then by correctness of Algorithm~\ref{algorithm:strategy-improve} it follows 
that the values would have been achieved in finitely many steps, implying 
convergence to a rational value at $s_0$. 
The convergence to the values in the limit is due to correctness of 
Algorithm~\ref{algorithm:strategy-improve}.

\begin{figure}[!tb]
\begin{center}
\begin{picture}(48,28)(-5,-5)
\node[Nmarks=n](n0)(40,12){$s_0$}
\node[Nmarks=n](n1)(70,0){$s_1$}
\node[Nmarks=n](n2)(70,24){$s_2$}

\node[Nmarks=n](n3)(20,12){$s_3$}
\node[Nmarks=n](n4)(0,12){$s_4$}
\node[Nmarks=n](n5)(-20,12){$s_5$}

\drawloop[loopangle=0, loopdiam=6](n1){}
\drawloop[loopangle=0, loopdiam=6](n2){}
\drawloop[ELside=l,loopCW=y, loopdiam=6](n0){$ad,1/2$}
\drawedge[ELpos=50, ELside=r, ELdist=0.5, curvedepth=6](n0,n1){$ac,bd$}
\drawedge[ELpos=50, ELside=r, ELdist=0.5, curvedepth=-6](n0,n1){$ad,1/2$}
\drawedge[ELpos=50, ELside=l, curvedepth=6](n0,n2){$bd$}
\drawedge[ELpos=50, ELside=l, curvedepth=0](n3,n0){$a\bot$}
\drawedge[ELpos=50, ELside=l, curvedepth=0](n4,n5){$\bot d$}
\drawedge[ELpos=50, ELside=l, curvedepth=6](n3,n4){$b\bot$}
\drawedge[ELpos=50, ELside=l, curvedepth=6](n4,n3){$\bot c$}
\drawedge[ELpos=50, ELside=l, curvedepth=0](n4,n5){$\bot d$}
\drawedge[ELpos=50, ELside=l, curvedepth=0](n4,n5){$\bot d$}
\end{picture}
\caption{Counter example game.}\label{fig:ex2}
\end{center}
\end{figure}
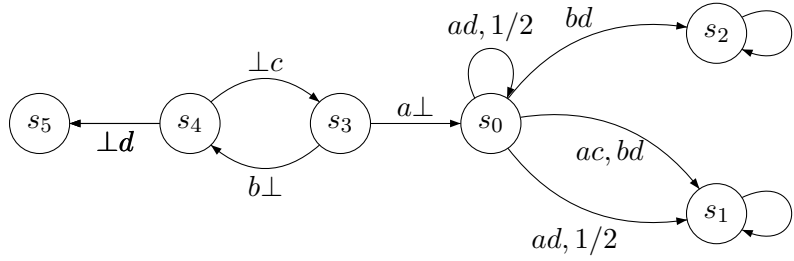

\smallskip\noindent{\bf Step~2.} We will now augment the game of 
Step~1 to construct an example to show that Algorithm~2 does not necessarily
converge to the values.
Consider the game shown in Fig~\ref{fig:ex2} augmenting the game of 
Fig~\ref{fig:ex1} with some additional states (states
$s_3,s_4$ and $s_5$) and transitions (we only show the interesting transitions 
in the figure for simplicity).
All the additional states shown are safe states. 
The value of state $s_5$ is $0.6$ (consider it as a probabilistic state 
going to state $s_1$ with probability $0.6$ and $s_2$ with probability 
$0.4$, and these edges are not shown in the figure).
The transitions at state $s_3$ and $s_4$ are as follows:
in state $s_3$, player~1 can goto state $s_0$ or $s_4$ by choosing 
actions $a$ and $b$, respectively (at $s_3$ player~2 has only one action $\bot$);
and in state $s_4$, player~2 can goto state $s_3$ or $s_5$ by choosing 
actions $c$ and $d$, respectively (at $s_4$ player~1 has only one action $\bot$).
We analyze Algorithm~\ref{algorithm:strategy-improve-safe} on the example
shown in Fig~\ref{fig:ex2}.
In this game, at $s_3$ player~1 starts by playing actions $a$ and $b$ 
uniformly, and player~2 responds by chosing action $c$. 
In the iterations of the algorithm it follows by the argument of 
Step~1, that the set $I$ of Step~3.1 of Algorithm~\ref{algorithm:strategy-improve-safe} 
is always non-empty as $s_0 \in I$. 
Hence in every iteration the value at $s_0$ improves, and the strategy in 
$s_3$ and $s_4$ does not change.
Hence the valuation at $s_3$ converges to the valuation at $s_0$, i.e., 
to $2-\sqrt{2} < 0.6$.
However, by switching to action $b$ at $s_3$, player~1 can enforce player~2 to
play action $d$ at $s_4$ and ensure value $0.6$.
In other words,  the value at $s_3$ is $0.6$, whereas Algorithm~\ref{algorithm:strategy-improve-safe}
converges to $2-\sqrt{2}< 0.6$.

The switching to action $b$ would have been ensured by the turn-based construction 
of Step~3.3. 
For turn-based stochastic games or $k$-uniform memoryless strategies, since
convergence to values is guaranteed, the turn-based construction of Step~3.3 
is also ensured to get executed.
However, as the convergence to values in concurrent games is in the limit,
Step~3.3 of Algorithm~\ref{algorithm:strategy-improve-safe} may not get executed 
as shown by this example. 
However, we now illustrate that the valuations of 
Algorithm~\ref{algorithm:strategy-convergent} converges to the values.
We consider Algorithm~\ref{algorithm:strategy-convergent}: 
Consider $k$-uniform strategies, for a finite $k\geq 2$, then the value at 
$s_0$ for $k$-uniform strategies converges in 
finitely many steps to a value smaller than $0.6$ (as it converges to a value smaller than the value at $s_0$), 
and then Step~3.3 of Algorithm~\ref{algo:k-uniform} would get executed, and the 
value at $s_3$ would be assigned to~$0.6$.
In other words, for Algorithm~\ref{algorithm:strategy-convergent} the values at 
$s_3,s_4$ and $s_5$ are always set to $0.6$, and the 
value at $s_0$ converges in the limit to $2-\sqrt{2}$.
Thus with the example we show that though Algorithm~\ref{algorithm:strategy-improve-safe}
does not necessarily converge to the values, Algorithm~\ref{algorithm:strategy-convergent}
correctly converges to the values.
\qed
\end{example}

%%{\bf KRISH END}

%%{\bf KRISH FIX ALGO DISCUSSION. FIX BOUNDED APPROX DISCUSSION.}

\smallskip\noindent{\bf Retraction of Theorem~4.3 of~\cite{CdAH09}.} 
In~\cite{CdAH09}, the convergence of Algorithm~\ref{algorithm:strategy-improve-safe} 
was claimed.
Unfortunately the theorem is incorrect (with irreparable error) as shown by 
Example~\ref{ex:counter-soda} and we retract the claim of Theorem~4.3 of~\cite{CdAH09} 
of convergence of Algorithm~\ref{algorithm:strategy-improve-safe} for concurrent games.

\smallskip\noindent{\bf Complexity.} 
Algorithm~\ref{algorithm:strategy-improve-safe} may not terminate
in general; we briefly describe the complexity of every iteration.
Given a valuation $v_i$, the computation of $\Pre_1(v_i)$ 
involves the solution of matrix games with rewards $v_i$; this can be
done in polynomial time using linear programming.
Given $v_i$, if $\Pre_1(v_i)=v_i$, 
the sets $\OptSel(v_i,s)$ and $\OptSelCount(v_i,s)$ 
can be computed by enumerating the subsets of available actions
at $s$ and then using linear-programming.
For example, to check whether
$(A,B)\in \OptSelCount(v_i,s)$ it suffices to check both of these facts:
\begin{enumerate}
\item \emph{($A$ is the support of an optimal selector $\xi_1$).} 
there is an selector $\xi_1$ such that 
(i)~$\xi_1$ is optimal (i.e. for all actions $b \in \mov_2(s)$ we have 
$\Pre_{\xi_1,b}(v_i)(s)\geq v_i(s)$);
(ii)~for all $a \in A$ we have $\xi_1(a)>0$, and for all
$a \not \in A$ we have $\xi_1(a)=0$;

\item \emph{($B$ is the set of counter-optimal actions against $\xi_1$).}
for all $b \in B$ we have $\Pre_{\xi_1,b}(v_i)(s)= v_i(s)$, 
and for all $b \not\in B$ we have $\Pre_{\xi_1,b}(v_i)(s)> v_i(s)$.

\end{enumerate}
All the above checks can be performed by checking feasibility of
sets of linear equalities and inequalities.
Hence, $\TB(G,v_i,F)$ can be computed in time 
polynomial in size of $G$ and $v_i$ and exponential in the 
number of moves.
We observe that the construction is exponential only in the number of
moves at a state, and not in the number of states. 
The number of moves at a state is typically much smaller than the size
of the state space.
%%The set of almost-sure winning states in turn-based stochastic
%%games with safety objectives can be computed in 
%%linear-time~\cite{crg-tcs07}.
We also observe that the improvement step 3.3.2 requires the
computation of the set of almost-sure winning states of a turn-based
stochastic safety game: this can be done both via linear-time discrete
graph-theoretic algorithms \cite{CJH03}, and via symbolic
algorithms~\cite{crg-tcs07}. 
Both of these methods are more efficient than the basic step 3.4 of
the improvement algorithm, where the quantitative values of an MDP
must be computed. 
Thus, the improvement step 3.3 of
Algorithm~\ref{algorithm:strategy-improve-safe} is in practice should not
be inefficient, compared with the standard improvement steps 3.2 and 3.4.
We now discuss the above steps for Algorithm~\ref{algo:k-uniform}.  
The argument is similar as above, but in case of $k$-uniform selectors, 
we need to ensure that the witness selectors are $k$-uniform which can 
be achieved with integer constraints. 
In other words, for Algorithm~\ref{algo:k-uniform} the above checks are
performed by checking feasibility of sets of integer linear equalities
and inequalities (which can be achieved in exponential time).
Again, the construction is exponential in the number of moves at a state,
and not in the number of states.
Hence we enumerate over sets of moves at a state (exponential in number of
moves), and then need to solve integer linear constraints (the size of 
the integer linear constraints is polynomial in the number of moves, 
and is achieved in time exponential in the number of moves).
Thus again the improvement step 3.3 of Algorithm~\ref{algo:k-uniform}
is polynomial in the size of the game, and exponential in the number of
moves.

\subsection{Termination for Approximation }
In this subsection we present termination criteria for strategy improvement
algorithms for concurrent games for $\vare$-approximation.

\medskip\noindent{\bf Termination for concurrent games.}
We apply the reachability strategy improvement algorithm 
(Algorithm~\ref{algorithm:strategy-improve}) for player~2, 
for a reachability objective $\Reach(T)$, we obtain a
sequence of valuations $(u_i)_{i\geq 0}$ such that 
(a) $u_{i+1} \geq u_i$;
(b) if $u_{i+1}=u_i$, then $u_i=\vb(\Reach(T))$; and
(c) $\lim_{i \to \infty} u_i =\vb(\Reach(T))$.
Given a concurrent game $G$ with $F \subs S$ and $T=S\setminus F$,
we apply Algorithm~\ref{algorithm:strategy-improve} to obtain
the sequence of valuation $(u_i)_{i\geq 0}$ as above, and 
we apply Algorithm~\ref{algorithm:strategy-convergent} 
to obtain a sequence of valuation $(v_i)_{i \geq 0}$.
The termination criteria are as follows:
\begin{enumerate}
\item if for some $i$ we have $u_{i+1}=u_i$, then we have 
$u_i=\vb(\Reach(T))$, and $1-u_i=\va(\Safe(F))$, and we 
obtain the values of the game;
\item if for some $i$ we have $v_{i+1}=v_i$, then we have 
$1-v_i=\vb(\Reach(T))$, and $v_i=\va(\Safe(F))$, and we 
obtain the values of the game; and
\item for $\vare>0$, if for some $i\geq 0$, we have $u_i + v_i \geq 1-\vare$,
then for all $s \in S$ we have
$v_i(s)\geq \va(\Safe(F))(s) -\vare$ and
$u_i(s)\geq \vb(\Reach(T))(s) -\vare$ (i.e., the algorithm
can stop for $\vare$-approximation).
\end{enumerate}
Observe that since $(u_i)_{i\geq 0}$ and $(v_i)_{i \geq 0}$ are 
both monotonically non-decreasing and $\va(\Safe(F))+ \vb(\Reach(T))=1$, 
it follows that if $u_i + v_i \geq 1-\vare$, then forall 
$j\geq i$ we have $u_i \geq u_j -\vare$ and
$v_i \geq v_j -\vare$.
This establishes that $u_i \geq \va(\Safe(F)) -\vare$ and
$v_i \geq \vb(\Reach(T)) -\vare$;
and the correctness of the stopping criteria (3) for 
$\vare$-approximation follows.
We also note that instead of applying the reachability 
strategy improvement algorithm, a value-iteration algorithm
can be applied for reachability games to obtain a 
sequence of valuation with properties similar to $(u_i)_{i \geq 0}$
and the above termination criteria can be applied.

\begin{theorem}{}
Let $G$ be a concurrent game structure with a safety objective $\Safe(F)$.
Algorithm~\ref{algorithm:strategy-convergent} and 
Algorithm~\ref{algorithm:strategy-improve} for player~2 for the 
reachability objective $\Reach(S\setminus F)$ yield two sequences of monotonic 
valuations $(v_i)_{i \geq 0}$ and $(u_i)_{i \geq 0}$, respectively, such that 
(a)~for all $i\geq 0$, we have $v_i \leq \va(\Safe(F)) \leq 1-u_i$; 
and
(b)~$\lim_{i \to \infty} v_i = \lim_{i \to \infty} 1-u_i = \va(\Safe(F))$. 
\end{theorem}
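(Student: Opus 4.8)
The plan is to obtain both parts of the statement as a direct corollary of the convergence properties already proved for the two algorithms, together with the quantitative determinacy identity $\va(\Safe(F))(s) + \vb(\Reach(S \setminus F))(s) = 1$ of~\cite{Eve57}. No new argument is needed; the work is purely organizational, so I expect no real obstacle.

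First I would invoke, for the sequence $(v_i)_{i \geq 0}$ produced by Algorithm~\ref{algorithm:strategy-convergent} on $G$ with objective $\Safe(F)$, the preceding theorem (monotonicity, optimality on termination, and convergence of Algorithm~\ref{algorithm:strategy-convergent}): it gives $v_{i+1} \geq v_i$ for all $i$ and $\lim_{i \to \infty} v_i = \va(\Safe(F))$. From monotonicity together with convergence one reads off $v_i \leq \va(\Safe(F))$ for every $i \geq 0$ — this is the left inequality in part~(a) — and the convergence statement itself is the first half of part~(b).

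Next I would run Algorithm~\ref{algorithm:strategy-improve} for the reachability objective $\Reach(S \setminus F)$ of player~2. Since the algorithm of Section~\ref{sec-stra-improve-reach} and Theorem~\ref{thrm:stra-improve} are symmetric under exchange of the two players, they apply verbatim, producing a monotone sequence $(u_i)_{i \geq 0}$ with $u_{i+1} \geq u_i$ and $\lim_{i \to \infty} u_i = \vb(\Reach(S \setminus F))$. As before, monotonicity and convergence give $u_i \leq \vb(\Reach(S \setminus F))$, whence $1 - u_i \geq 1 - \vb(\Reach(S \setminus F)) = \va(\Safe(F))$ by determinacy — the right inequality in part~(a) — and $\lim_{i \to \infty}(1 - u_i) = 1 - \lim_{i \to \infty} u_i = 1 - \vb(\Reach(S \setminus F)) = \va(\Safe(F))$, which is the second half of part~(b).

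The one point I would be careful to spell out is the player-exchange used when applying Algorithm~\ref{algorithm:strategy-improve} to player~2: that algorithm is stated with player~1 as the reachability player and uses its set $W_2$ of value-zero states, so in the swapped instance one should note explicitly that the role of $W_2$ is played by the set of states from which player~2 cannot force $\Reach(S \setminus F)$ with positive probability — equivalently $W_1 = \set{s \in S \mid \va(\Safe(F))(s) = 1}$ — and that all statements and proofs of Section~\ref{sec-stra-improve-reach} go through unchanged with the players' names interchanged. Everything else is a one-line deduction from the two convergence theorems and the determinacy identity.
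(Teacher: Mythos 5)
Your proposal is correct and matches the paper's own justification: the paper likewise obtains the theorem by combining the monotonicity/convergence results of Algorithm~\ref{algorithm:strategy-convergent} and of Algorithm~\ref{algorithm:strategy-improve} applied to player~2 (with $W_1$ playing the role of $W_2$ in the swapped instance) with the quantitative determinacy identity $\va(\Safe(F)) + \vb(\Reach(S\setminus F)) = 1$, reading off $v_i \leq \va(\Safe(F))$ and $u_i \leq \vb(\Reach(S\setminus F))$ from monotone convergence. No gaps; your explicit remark on the player exchange is a reasonable extra care the paper leaves implicit.
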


\smallskip\noindent{\bf Bounds for approximation.} We now discuss the 
bounds for approximation for concurrent games with reachability objectives, 
which follows from the results of~\cite{HKM09,CSR11}.
It follows from the results of~\cite{HKM09} that for all $\vare>0$, 
there exist $k$-uniform memoryless optimal strategies for concurrent 
reachability and safety games $G$, where $k$ is bounded by 
$(\frac{1}{\vare})^{2^{O(|G|)}}$.
It follows that for all $\vare>0$, if we consider our strategy improvement 
algorithm (restricted to $k$-uniform selectors) for reachability games,
then upon termination the valuation obtained is an $\vare$-approximation 
of the value function of the game, where $k$ is bounded by 
$(\frac{1}{\vare})^{2^{O(|G|)}}$.
Using the restriction to $k$-uniform memoryless strategies, along with 
the reduction of concurrent games to turn-based stochastic game for 
$k$-uniform memoryless strategies and the termination bound for turn-based
stochastic games we obtain a double exponential bound on the number of 
iterations required for termination (note that if $k=(\frac{1}{\vare})^{2^{O(|G|)}}$,
then the total number of $k$-uniform memoryless strategies is $k^{O(|G|)}$,
which is double exponential) (also see~\cite{CSR11} for details).
Moreover, the recent result of~\cite{CSR11} shows that the double exponential 
bound is near optimal for the strategy improvment algorithm for 
concurrent games with reachability objectives.

\smallskip\noindent{\bf Approximation of strategies.}
The previous method to solve concurrent reachability and safety games 
was the value-iteration algorithm. 
The witness strategy produced by the value-iteration algorithm for concurrent 
reachability games is not memoryless; and for concurrent safety games since
the value-iteration algorithm converges from above it does not provide any 
witness strategies. 
The only previous algorithm to approximate memoryless $\vare$-optimal 
strategies, for $\vare>0$, for concurrent reachability and safety games is 
the naive algorithm that exhaustively searches over the set of all 
$k$-uniform memoryless strategies (such that the $k$-uniform memoryless 
strategies suffices for $\vare$-optimality and $k$-depends in $\vare$).
Our strategy improvement algorithms for concurrent reachability and safety games
are the first strategy search based approach to approximate $\vare$-optimal strategies.

%%\mynote{
%%1. polish the whole paper.
%%2. check all proofs.
%%}

\medskip\noindent{\bf Acknowledgements.}
We thanks anonymous reviewers for many helpful and insightful
comments that significantly improved the presentation of the paper, and
help us fix gaps in the results of the conference versions.
We warmly acknowlede their help.
This work was partially supported by ERC Start Grant Graph Games (Project No 279307), 
FWF NFN Grant S11407-N23 (RiSE) and a Microsoft faculty fellowship.

\clearpage

\section{Technical Appendix}\label{sec:appendix}

We now present the details of restriction to $k$-uniform selectors, and 
the details of the notations used in Algorithm~\ref{algo:k-uniform}.
The definitions are essentially same as for selectors and optimal 
selectors, but restricted to $k$-uniform selectors.

\medskip\noindent{\bf Optimal $k$-uniform selectors.} 
For $k>0$, a valuation $v$ and a state $s$, let 
\[
\Pre_1^k(v)(s)= \sup_{\xi_1' \in \Sel_1^k(s)} \Pre_{1: \xi_1'}(v)(s).
\]
denote the optimal one-step value among $k$-uniform selectors.
For $k>0$, given a valuation $v$ and a state $s$, we define by
\[
\OptSel(v,s,k) =\set{\xi_1 \in \Sel_1^k(s)  \mid 
\Pre_{1:\xi_1}(v)(s) = \Pre_{1}^k(v)(s)}
\]
the set of optimal selectors among $k$-uniform selectors for $v$ at state $s$.
For a $k$-uniform optimal selector $\xi_1 \in \OptSel(v,s,k)$, we define the 
set of counter-optimal actions as follows:
\[
\CountOpt(v,s,\xi_1,k) =\set{ b \in \mov_2(s) \mid 
\Pre_{\xi_1,b}(v)(s) = \Pre_{1}^k(v)(s)}.
\]
Observe that for $\xi_1 \in \OptSel(v,s,k)$, for all $b \in 
\mov_2(s) \setminus \CountOpt(v,s,\xi_1,k)$ we have 
$\Pre_{\xi_1,b}(v)(s) >  \Pre_1^k(v)(s)$.
We define the set of $k$-uniform optimal selector support and the 
counter-optimal action set as follows:
\[
\begin{array}{rcl}
\OptSelCount(v,s,k) & = & \set{(A,B) \subs \mov_1(s) \times \mov_2(s) \mid
\exists \xi_1 \in \Sel_1^k(s). \ \xi_1 \in \OptSel(v,s,k) \\
 & & \land %%& 
\ \ \supp(\xi_1)=A \ \land \ \CountOpt(v,s,\xi_1,k)=B
};
\end{array}
\]
i.e., it consists of pairs $(A,B)$ of actions of player~1 and player~2,
such that there is a $k$-uniform optimal selector $\xi_1$ with support $A$,
and $B$ is the set of counter-optimal actions to $\xi_1$.

\medskip\noindent{\bf Turn-based reduction.} Given a concurrent 
game $G=\langle S,\moves,\mov_1,\mov_2, \trans \rangle $,  
a valuation $v$, and bound $k$ for $k$-uniformity we construct a 
turn-based stochastic game
$\ov{G}_v^k=\langle (\ov{S},\ov{E}), (\ov{S}_1,\ov{S}_2,\ov{S}_R),\ov{\trans}
\rangle$ as follows:
\begin{enumerate}
\item The set of states is as follows:
\[
\begin{array}{rcl}
\ov{S}& = & S \cup \set{(s,A,B) \mid s\in S, \ (A,B) \in \OptSelCount(v,s,k)} \\
	&\cup & \set{(s,A,b) \mid s \in S, \ (A,B) \in \OptSelCount(v,s,k), \ b \in B}.
\end{array}
\]

\item The state space partition is as follows: 
$\ov{S}_1=S$; $\ov{S}_2=\set{(s,A,B) \mid s \in S, (A,B) \in \OptSelCount(v,s,k)}$;
and $\ov{S}_R=\set{(s,A,b) \mid s\in S ,\  (A,B) \in \OptSelCount(v,s,k), b \in B}$.
In other words, $(\ov{S}_1,\ov{S}_2,\ov{S}_R)$ is a partition of the state 
space, where $\ov{S}_1$ are player~1 states, $\ov{S}_2$ are player~2 states,
and $\ov{S}_R$ are random or probabilistic states.

\item The set of edges is as follows:
\[ 
\begin{array}{rcl}
\ov{E} & = & \set{(s,(s,A,B)) \mid s \in S, (A,B) \in \OptSelCount(v,s,k)} \\
	& \cup & \set{((s,A,B),(s,A,b)) \mid b \in B} 
	\cup \set{((s,A,b),t) \mid \displaystyle t \in \bigcup_{a \in A} \dest(s,a,b)}.
\end{array}
\]

\item The transition function $\ov{\trans}$ for all states in $\ov{S}_R$ 
is uniform over its successors.
\end{enumerate}
Intuitively, the reduction is as follows.
Given the valuation $v$, state $s$ is a player~1 state where
player~1 can select a pair $(A,B)$ (and move to
state $(s,A,B)$) with $A \subs \mov_1(s)$ 
and $B \subs \mov_2(s)$ such that there is a $k$-uniform optimal 
selector $\xi_1$ with support exactly $A$ and the set of
counter-optimal actions to $\xi_1$ is the set $B$.
From a player~2 state $(s,A,B)$, player~2 can choose any action
$b$ from the set $B$, and move to state $(s,A,b)$.
A state $(s,A,b)$ is a probabilistic state where all the 
states in $\bigcup_{a\in A} \dest(s,a,b)$ are chosen 
uniformly at random.
Given a set $F \subseteq S$ we denote by $\ov{F}= F \cup 
\set{(s,A,B) \in\ov{S} \mid s \in F} \cup 
\set{(s,A,b) \in\ov{S} \mid s \in F}$.
We refer to the above reduction as $\TB$, i.e., 
$(\ov{G}_v^k,\ov{F})=\TB(G,v,F,k)$.

\begin{proof} {\bf (of Lemma~\ref{lemm-conv-optimal}).}
The proof of the result is essentially identical as the proof of 
Theorem~\ref{thrm:safe-termination}, and we present the details
for completeness.
Let $v_i=z_i^k$.
We show that for all $k$-uniform memoryless strategies $\stra_1$ for player~1 
we have $\vas{\stra_1}(\Safe(F)) \leq v_i$.

Let $\ov{\stra}_2$ be a pure memoryless optimal strategy for 
player~2 in $\ov{G}_{v_i}^k$ for the objective 
complementary to $\Safe(\ov{F})$, 
where $(\ov{G}_{v_i}^k,\Safe(\ov{F}))=\TB(G,v_i,F,k)$.
Consider a $k$-uniform memoryless strategy $\stra_1$ for player~1,
and we define a pure memoryless strategy $\stra_2$
for player~2 as follows.
\begin{enumerate}
\item If $\stra_1(s) \not\in \OptSel(v_i,s,k)$, then $\stra_2(s)=b \in \mov_2(s)$,
	such that $\Pre_{\stra_1(s),b}(v_i)(s) < v_i(s)$;
	(such a $b$ exists since $\stra_1(s) \not\in \OptSel(v_i,s,k)$).

\item If $\stra_1(s) \in \OptSel(v_i,s,k)$, then let $A=\supp(\stra_1(s))$,
	and consider $B$ such that $B=\CountOpt(v_i,s,\stra_1(s),k)$.
	Then we have $\stra_2(s)=b$, such that $\ov{\stra}_2((s,A,B))=(s,A,b)$. 
\end{enumerate}
Observe that by construction of $\stra_2$, for all 
$s \in S \setminus (W_1 \cup T)$, we have 
$\Pre_{\stra_1(s),\stra_2(s)}(v_i)(s) \leq v_i(s)$.
We first show that in the Markov chain obtained by fixing $\stra_1$ and 
$\stra_2$ in $G$, there is no closed connected recurrent set of states $C$
such that $C \subseteq S \setminus (W_1 \cup T)$.
Assume towards contradiction that $C$ is a closed connected recurrent 
set of states in $S \setminus (W_1 \cup T)$.
The following case analysis achieves the contradiction.
\begin{enumerate}
\item Suppose for every state $s \in C$ we have $\stra_1(s) \in \OptSel(v_i,s,k)$.
Then consider the strategy $\ov{\stra}_1$ in $\ov{G}_{v_i}^k$ such that 
for a state $s \in C$ we have $\ov{\stra}_1(s)=(s,A,B)$,
where $\stra_1(s)=A$, and $B=\CountOpt(v_i,s,\stra_1(s),k)$.
Since $C$ is closed connected recurrent states, it follows by construction 
that for all states $s \in C$ in the game $\ov{G}_{v_i}^k$ we have 
$\Prb_s^{\ov{\stra}_1,\ov{\stra}_2}(\Safe(\ov{C}))=1$,
where $\ov{C}=C \cup \set{(s,A,B) \mid s \in C} \cup \set{(s,A,b) \mid s \in C}$.
It follows that for all $s \in C$ in $\ov{G}_{v_i}^k$ 
we have $\Prb_s^{\ov{\stra}_1,\ov{\stra}_2}(\Safe(\ov{F}))=1$.
Since $\ov{\stra}_2$ is an optimal strategy, it follows that $C 
\subseteq (\ov{A}_i^k \cap S)\setminus W_1$.
This contradicts that $(\ov{A}_i^k \cap S) \setminus W_1=\emptyset$.

\item Otherwise for some state $s^* \in C$ we have $\stra_1(s^*) \not\in
\OptSel(v_i,s^*,k)$.
Let $r=\min\set{q \mid U_q(v_i) \cap C \neq \emptyset}$, i.e., 
$r$ is the least value-class with non-empty intersection with $C$.
Hence it follows that for all $q<r$, we have 
$U_q(v_i) \cap C=\emptyset$.
Observe that since for all $s \in C$ we have 
$\Pre_{\stra_1(s),\stra_2(s)}(v_i)(s) \leq v_i(s)$,
it follows that for all $s \in U_r(v_i)$ either
(a)~$\dest(s,\stra_1(s),\stra_2(s))\subseteq U_r(v_i)$;
or (b)~$\dest(s,\stra_1(s),\stra_2(s)) \cap U_q(v_i) \neq \emptyset$,
for some $q<r$.
Since $U_r(v_i)$ is the least value-class with non-empty intersection 
with $C$, it follows that for all $s \in U_r(v_i)$ we have 
$\dest(s,\stra_1(s),\stra_2(s)) \subseteq U_r(v_i)$.
It follows that $C \subseteq U_r(v_i)$. 
Consider the state $s^* \in C$ such that $\stra_1(s^*) \not\in \OptSel(v_i,s,k)$.
By the construction of $\stra_2(s)$, we have 
$\Pre_{\stra_1(s^*),\stra_2(s^*)}(v_i)(s^*) < v_i(s^*)$.
Hence we must have $\dest(s^*,\stra_1(s^*),\stra_2(s^*)) \cap U_{q}(v_i) 
\neq \emptyset$, for some $q <r$.
Thus we have a contradiction.
\end{enumerate}
It follows from above that there is no closed connected recurrent set of states
in $S\setminus (W_1 \cup T)$, and hence with probability~1 
the game reaches $W_1 \cup T$ from all states in $S \setminus (W_1 \cup T)$.
Hence the probability to satisfy $\Safe(F)$ is equal to the probability 
to reach $W_1$.
Since for all states $s \in S\setminus (W_1 \cup T)$ we have 
$\Pre_{\stra_1(s),\stra_2(s)}(v_i)(s) \leq v_i(s)$, 
it follows that given the strategies $\stra_1$ and 
$\stra_2$, the valuation $v_i$ satisfies all the inequalities 
for linear program to reach $W_1$.
It follows that the probability to reach $W_1$ from $s$ is 
atmost $v_i(s)$.
It follows that for all $s \in S\setminus (W_1 \cup T)$ 
we have $\vas{\stra_1}(\Safe(F))(s)\leq v_i(s)$.
This completes the proof.
\qed
\end{proof}

\end{document}